\documentclass[letterpaper]{article} 
\usepackage{aaai2026}  
\usepackage{times}  
\usepackage{helvet}  
\usepackage{courier}  
\usepackage[hyphens]{url}  
\usepackage{graphicx} 
\urlstyle{rm} 
\usepackage{natbib}  
\usepackage{caption} 
\frenchspacing  
\setlength{\pdfpagewidth}{8.5in} 
\setlength{\pdfpageheight}{11in} 
%
\usepackage{algorithm}
\usepackage{algorithmic}
\usepackage{amssymb}
\usepackage{amsmath}
\usepackage{xcolor}
\usepackage{tabularx} 
\usepackage{graphicx}
\usepackage{array}    
\usepackage{subfig}
\usepackage{enumerate}
\newtheorem{definition}{Definition}
\newtheorem{theorem}{Theorem}

\newenvironment{proof}{{\noindent\it Proof}\quad}{\hfill $\square$\par}
%
\usepackage{newfloat}
\usepackage{listings}
\DeclareCaptionStyle{ruled}{labelfont=normalfont,labelsep=colon,strut=off} 
\lstset{%
	basicstyle={\footnotesize\ttfamily},
	numbers=left,numberstyle=\footnotesize,xleftmargin=2em,
	aboveskip=0pt,belowskip=0pt,%
	showstringspaces=false,tabsize=2,breaklines=true}
\floatstyle{ruled}
\newfloat{listing}{tb}{lst}{}
\floatname{listing}{Listing}
%
\pdfinfo{
/TemplateVersion (2026.1)
}

\setcounter{secnumdepth}{2} 

%


\title{A Historical Interaction-Enhanced Shapley Policy Gradient Algorithm for Multi-Agent Credit Assignment}
\author{
    Ao Ding\textsuperscript{\rm 1},
    Licheng Sun\textsuperscript{\rm 1},
    Yongjie Hou\textsuperscript{\rm 1},
    Huaqing Zhang\textsuperscript{\rm 1},
    Hongbin Ma\textsuperscript{\rm 1}
}
\affiliations{
    \textsuperscript{\rm 1}Beijing Institute of Technology\\
%
}

\usepackage{bibentry}

\begin{document}

\maketitle

\begin{abstract}
Multi-agent reinforcement learning (MARL) has demonstrated remarkable performance in multi-agent collaboration problems and has become a prominent topic in artificial intelligence research in recent years. However, traditional credit assignment schemes in MARL cannot reliably capture individual contributions in strongly coupled tasks while maintaining training stability, which leads to limited generalization capabilities and hinders algorithm performance. To address these challenges, we propose a Historical Interaction-Enhanced Shapley Policy Gradient Algorithm (HIS) for Multi-Agent Credit Assignment, which employs a hybrid credit assignment mechanism to balance base rewards with individual contribution incentives. By utilizing historical interaction data to calculate the Shapley value in a sample-efficient manner, HIS enhances the agent's ability to perceive its own contribution, while retaining the global reward to maintain training stability. Additionally, we provide theoretical guarantees for the hybrid credit assignment mechanism, ensuring that the assignment results it generates are both efficient and stable. We evaluate the proposed algorithm in three widely used continuous-action benchmark environments: Multi-Agent Particle Environment, Multi-Agent MuJoCo, and Bi-DexHands. Experimental results demonstrate that HIS outperforms state-of-the-art methods, particularly excelling in strongly coupled, complex collaborative tasks.
\end{abstract}


\section{Introduction}
Multi-agent reinforcement learning (MARL) has found important applications across various domains, enabling the solution of challenging collaborative tasks in practical scenarios such as autonomous driving \cite{zhou2024multiagent}, drone formation control \cite{hu2021distributed}, network management \cite{naderializadeh2021resource}, and power and energy distribution \cite{hua2023energy}.
Traditional independent learning frameworks face issues like incomplete information sharing, lack of coordination, and suboptimal performance in collaborative tasks. 
The centralized training and decentralized execution (CTDE) paradigm was proposed to address these challenges and has been adopted by many MARL algorithms \cite{lowe2017multi, zhong2024heterogeneous}.
In this framework, agents aggregate interaction information during training but make decisions based solely on local observations during execution, preserving decentralization.
In MARL, the credit assignment problem arises from the difficulty of decomposing the global reward, limiting performance by hindering agents' ability to accurately estimate their contributions.

The shared reward scheme distributes the global reward equally among all agents, thereby assigning the same optimization reward to each agent. Algorithms such as HAPPO and HATRPO \cite{kuba2021trust}, which are based on the HAML template, ensure monotonic improvement of the joint objective and convergence to a Nash equilibrium. HASAC \cite{liu2023maximum}, utilizing the MEHAML framework, further enhances exploration capabilities to guarantee convergence to a quantum response equilibrium.
However, this solution has difficulties in credit allocation, and the agent's inability to measure its own contribution leads to vague responsibilities, which may lead to the system convergence to a suboptimal solution.

The local reward scheme decomposes the global reward based on the contribution of each agent, aiming to clarify and optimize the contribution of each agent's individual policy. This approach can be further classified into implicit and explicit methods \cite{li2021shapley}:
(1) Implicit method: The global action value function relies on the decomposition function to implicitly assign credits. During the training process, the centralized critic and the local agent are trained as a whole \cite{sunehag2017value}.
(2) Explicit method: In this approach, the value function for each agent is derived by explicitly decomposing the joint value function, which is then used to guide policy updates \cite{foerster2018counterfactual}. 
The local reward scheme can be framed within the context of convex games, a typical cooperative game model where the objective is to divide the coalition and reach an agreement among agents within the same coalition \cite{chalkiadakis2011computational}. 
In a convex game, a stable profit distribution scheme in the Core can ensure that the coalition structure remains stable and no agent has the motivation to leave to seek greater personal benefits \cite{wang2022shaq}.
When all agents are grouped into a single large coalition, the reward distribution scheme effectively serves as a mechanism for credit assignment. However, in highly coupled collaborative scenarios, accurately decomposing the contribution of individual actions to the global reward remains a challenge. 
In such cases, simple credit assignment methods may lead to inaccurate estimation of policy contributions for some agents due to calculation errors, resulting in unstable training dynamics.

To address the challenges outlined above, this paper introduces the Historical Interaction-Enhanced Shapley Policy Gradient Algorithm (HIS) for Multi-Agent Credit Assignment. 
The hybrid assignment mechanism it introduces can not only explainably calculate the contribution of each agent's strategy to the group and provide excess rewards for high-contribution agents, but also provide basic rewards for all agents compared to the pure trust assignment scheme, ensuring the robustness and stability of training. 

Our main contributions can be summarized as follows:
\begin{itemize}
\item We analyze the limitations of current MARL credit assignment schemes and propose a hybrid assignment mechanism to solve the credit assignment problem.
\item We provide theoretical guarantees for this hybrid assignment mechanism, showing that it ensures stable training and differentiates agent contributions.
\item We propose the HIS algorithm, which uses historical interactions to calculate the Shapley Q-value of each agent's policy, encouraging agents to recognize their contribution to the overall task. Experiments show that the sophisticated design of HIS enables it to perform well in a variety of multi-agent tasks, especially in strongly coupled robotics tasks.
\end{itemize}

\section{Related Work}
In recent years, MARL has demonstrated strong performance in cooperative game tasks across various domains \cite{huang2024multi, jiang2025qllm}, prompting researchers to explore its application in more complex cooperative game scenarios \cite{zhong2024heterogeneous}. However, the multi-agent credit allocation problem has limited MARL training performance. Ensuring that each agent receives a reward corresponding to its individual contribution, while maintaining stable training, has become a critical challenge in the field.

In the shared reward scheme, all agents optimize a common reward signal to maximize the global objective \cite{wang2024shapley}.
HAPPO and HATRPO \cite{kuba2021trust} ensure monotonic improvement of the joint objective and convergence to a Nash equilibrium by employing a sequential update scheme within the HAML framework. 
HASAC \cite{liu2023maximum} enhances exploration by incorporating maximum-entropy MARL principles and introduces the MEHAML template, which preserves the monotonic improvement and convergence properties of the algorithm.
Nevertheless, the shared reward scheme still suffers from significant credit assignment challenges, as it fails to attribute the global reward to individual agents' contributions. Consequently, it remains difficult to assess and optimize each agent's specific impact on the collective outcome.

Under the local reward scheme, agents decompose the global reward using either implicit or explicit methods and optimize their strategies based on their individual contributions \cite{wang2020shapley}.
The most common implicit decomposition method in MARL is based on value decomposition. For example, VDN \cite{sunehag2017value} decomposes the global value function as a sum of individual agent value functions, enabling credit assignment.
QMIX \cite{rashid2020monotonic} introduces monotonic functions under the Individual-Global-Max (IGM) principle to enhance the expressive power of value decomposition.
FACMAC \cite{peng2021facmac} introduced the idea of value decomposition of QMIX into policy-based MARL without having to follow the IGM principle, and achieved outstanding performance in multi-agent collaboration tasks.
However, implicit methods rely heavily on the quality of the decomposition function and lack interpretability in quantifying each agent's contribution.
Explicit methods, in contrast, compute attributable contributions for each agent directly.
COMA \cite{foerster2018counterfactual} addresses the credit assignment problem using a counterfactual advantage baseline, which compares an agent's expected return to a counterfactual scenario to assess its contribution to the overall task.
However, this method ignores the correlation between agents and performs poorly in complex scenarios.
SQDDPG \cite{wang2020shapley} uses Shapley value to solve the credit assignment problem of MARL. It proposes a network to estimate marginal contribution, which is then used to approximate the calculation of Shapley value and is used to guide the learning of local agents.
Nevertheless, explicit methods still rely on accurately calculated contribution data to update strategies. In scenarios where contributions cannot be accurately estimated, some agents may not receive sufficient rewards for a long time, resulting in impaired learning effects and affecting the stability of training.
\section{Preliminaries}
\subsection{Cooperative Multi-Agent Reinforcement Learning}
A cooperative Markov game \cite{littman1994markov,wu2024learning} can be represented by a tuple $\langle\mathcal{N},\mathcal{S},\mathcal{A},r,P,\gamma,d\rangle$.
Where $\mathcal{S}$ is the state space. $\mathcal{A}$ is the joint action space. $ P ( s ^ { \prime } | s , a )$ is the state transition probability from state $s$ to state $ s ^ { \prime }$ given the joint action $ a = ( a ^ { 1 } , \ldots , a ^ { n } )$. 
$r(s,a)$ is the joint reward function. $\mathcal{N}=\{1,\ldots,n\}$ is the set of n agents.
$\gamma$ is the discount factor. $d$ is the initial state distribution, where $ d \in \mathcal{P} ( X )$. $\mathcal{P} ( X )$ is the set of state distributions of set $X$, which is used in this paper to identify the power set of set $X$.
$Sym(n)$ is used to represent the set of permutations of the integer set $ \left\{ 1 , \ldots , n \right\}$. At each time step $t$, each agent $i \in \mathcal{N}$ selects an individual action $a_t^i$ based on the current state $s_t$ and its policy $\pi^i$. 
The joint action at time $t$ is therefore given by $a_t = (a_t^1, \ldots, a_t^n)$. The joint policy in this process is defined as $ \pi ( \cdot | s _ { t } ) = \prod _ { i = 1 } ^ { n } \pi ^ { i } ( \cdot ^ { i } | s _ { t } )$.
The marginal state distribution $ \rho _ { \pi } ^ { t }$ at time $t$ is determined by the initial state distribution $d$, the joint policy $\pi$ and the transition probability $P$ \cite{liu2023maximum}.
The unnormalized marginal state distribution is defined as $ \rho _ { \pi } \triangleq \sum _ { t = 1 } ^ { T } \rho _ { \pi } ^ { t }$. The goal of all agents is to maximize the expected total reward, defined as:
\begin{equation} \label{eq-J}
    J _ { s t d } ( \pi ) = E _ { s _ { 1 : T}  \sim \rho _{\pi}^{1 : T} ,a _ { 1 : T } \sim \pi} \left[ \sum _ { t = 1 } ^ { T } r ( s _ { t } , a _ { t } ) \right].
\end{equation}

\subsection{Convex Game}
In cooperative game theory, a convex game (CG) is a representative model within the class of transferable utility games \cite{chalkiadakis2011computational}.
A CG is defined by the tuple $(\mathcal{N}, v)$, where $\mathcal{N} =  \left\{ 1 , \ldots , n \right\}$ represents the grand coalition of $n$ agents, and
$v : 2^{n} \rightarrow \mathbb{R}$ is the value function, which assigns a value $v(\mathcal{C})$ to any coalition $\mathcal{C} \subseteq \mathcal{N}$.
The value function of CG has the following properties \cite{lee2024solutions}: (1) Superadditivity: $ v ( \mathcal{C} \cup \mathcal{D} ) \geq v ( \mathcal{C} ) + v ( \mathcal{D} ),\forall \mathcal{C} ,\mathcal{ D} \subset \mathcal{N} , \mathcal{C} \cap \mathcal{D} = \mathcal{\emptyset}$; (2) Independence: The coalitions are independent.
A solution to CG is a tuple $ ( \mathcal{C S} , x )$ , where $\mathcal{CS} = \{\mathcal{C}^1, \ldots, \mathcal{C}^k\}$ is a possibility partition of the grand coalition $\mathcal{N}$.
$x=(x^{1},\ldots,x^{n}) \in \mathbb{R} ^ { n }$ is the corresponding profit distribution vector, which satisfies two properties \cite{chen2024approaching}: (1)  $ x ( \mathcal{C} ^ { j } ) \leq v ( \mathcal{C} ^ { j } )$, where $ x ( \mathcal{C} ^ { j } ) = \sum _ { i \in \mathcal{C} ^ { j } } x ^ { i }$, for any $j \in \left\{ 1 , \ldots , k \right\}$. (2) $x ^ { i } \geq 0$ for each $i \in \mathcal{N}$.
\begin{definition}
    A transferable utility game $( \mathcal{N} , v )$ is a convex game if for $ \forall \mathcal{C} , \mathcal{D} \subseteq N$ the following condition holds:
    \begin{equation} \label{definition-1}
        v ( \mathcal{C} \cup \mathcal{D} ) + v ( C \cap D  ) \geq v ( C  ) + v ( D ).
    \end{equation}
\end{definition}
\begin{definition} \label{definition-2}
    If for any $\cal {C} \in \cal{CS}$, in the outcome $(\mathcal{CS}, x)$, $x(\mathcal{C}) = v(\mathcal{C})$, then this outcome is efficient and maximizes social welfare.
\end{definition}

\begin{definition} \label{definition-3}
    Given a convex game $(\mathcal{N},v)$, the Core is the set of all stable solutions. For each $\mathcal{C} \in \mathcal{N}$, there is $ x ( C ) \geq v ( C )$, where $ x ( \mathcal{C} ) = \sum _ { i \in \mathcal{C} } x ^ { i }$.
\end{definition}

\begin{definition}
    The Core of a convex game must be non-empty.
\end{definition}

\subsection{Shapley Value}
The Shapley value is a widely used solution concept for distributing revenue in the grand coalition \cite{shapley1953value}, and it has been applied in various areas of machine learning to explain feature contributions \cite{lundberg2017unified, jethani2021fastshap,li2024shapley}.
Given a convex game $(\mathcal{N}, v)$, the Shapley value of agent $i$ is its assigned payoff $x_i$, denoted $\Phi_i$, is defined as:
\begin{equation} \label{eq-sharpvalue}
    \Phi_i=\sum_{\mathcal{C}\subseteq\mathcal{N}\setminus\{i\}}\frac{|\mathcal{C}|!(|\mathcal{N}|-|\mathcal{C}|-1)!}{|\mathcal{N}|!}\cdot\delta_i(\mathcal{C}).
\end{equation}
where $\delta_i(\mathcal{C})= v ( \mathcal{C} \cup \left\{ i \right\} ) - v ( \mathcal{C} )$ is the marginal contribution of agent $i$ to coalition $\mathcal{C}$.
The Shapley value \cite{chen2024stas} is computed as the weighted average of an agent's marginal contributions across all possible sub-coalitions. It satisfies two key properties: (1) Dummy player: If agent $i$ does not make any contribution, then the allocated contribution $x_i = 0$. (2) Symmetry: If the contributions made by agent $i$ and agent $j$ are the same, then the allocated contributions obtained by the two agents are also the same, that is, $x_i=x_j$. (3) Efficiency: The total allocated payoff equals the value of the grand coalition, that is, $x(\mathcal{N}) = v(\mathcal{N})$.
\begin{theorem}
    Given a convex game $G=(\mathcal{N},v)$, the Shapley value must belong to the Core, and the assignment scheme is stable.
\end{theorem}

\begin{figure*}[t] 
    \centering
    \includegraphics[width=0.65\linewidth]{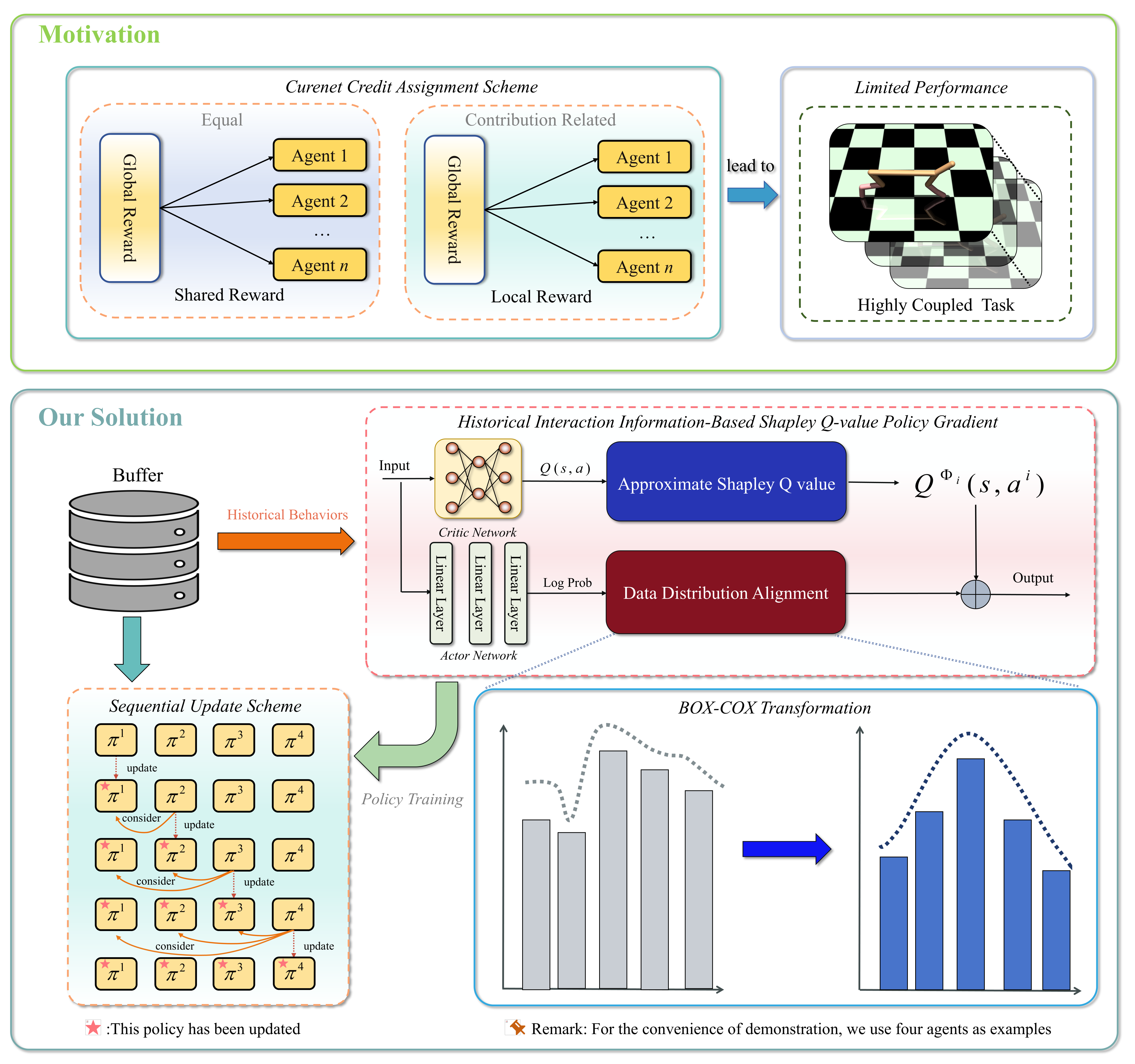}%
    \caption{Schematic diagram of the motivation and overall process of our proposed HIS method.}
    \label{fig:framework}
\end{figure*}

\subsection{BOX-COX Transformation}
The Box-Cox transformation is a generalized power transformation method \cite{liu2024box} commonly used in statistical modeling to improve the normality and symmetry of data while preserving the relative order of observations. In this paper, we use $\mathcal{BC}(\cdot)$ to denote the Box-Cox transformation. The standard form of Box-Cox transformation is:
\begin{equation} \label{eq_BC1}
    \begin{aligned}  
    {\cal BC}\left( x_d \right)=
    \begin{cases}
    \frac{x _ { d } ^ { \lambda } - 1}{\lambda} \hspace{5mm} & {\lambda \ne 0} \\
    \log x _ { d } & \lambda  = 0
    \end{cases}
    \end{aligned}
\end{equation}
where $x_d$ denotes the data to be transformed, and $\lambda$ is a parameter that controls the degree and type of transformation.
When the dataset $x_d$ contains negative values, the Box-Cox transformation \cite{chen2025investigations} is adjusted as follows:
\begin{equation} \label{eq_BC2}
    \begin{aligned}  
    {\cal BC}\left( x_d \right)=
    \begin{cases}
    \frac{\left(x_d-x_{min}+\varsigma \right)^\lambda-1} {\lambda} \hspace{5mm} & {\lambda \ne 0} \\
    \log \left(x_d - x_{min} + \varsigma \right) & \lambda = 0
    \end{cases}
    \end{aligned}
\end{equation}
where $x_{\min}$ is the minimum value in $x_d$, and $\varsigma$ is a small positive constant introduced to ensure that $x_d - x_{\min} + \varsigma > 0$.
In practice, the parameter $\lambda$ is typically estimated from the data using Bayesian or maximum likelihood methods. 
\section{Method}
In this section, we construct a Historical Interaction-Enhanced Shapley Policy Gradient Algorithm (HIS), which leverages historical interaction data to compute Shapley values and distributes contributions to agents through a hybrid credit assignment mechanism, thereby enhancing the agents' understanding of their personal policies.
We will explain the algorithm in three parts. Subsection 4.1 introduces the hybrid credit assignment mechanism and provides a theoretical justification for its design.
Subsection 4.2 presents a novel modeling approach for applying the Shapley value to MARL, enabling fast estimation of the Shapley value based on historical interactions, which facilitates explicit credit assignment to multiple agents. 
Subsection 4.3 details the implementation of the Historical Interaction-Enhanced Shapley Policy Gradient. The motivation for proposing HIS and its overall process are intuitively shown in Figure 1.

\subsection{Hybrid Credit Assignment Mechanism} \label{sec-hybrid}
The concept of the hybrid credit assignment mechanism is inspired by the base salary plus performance-based bonus system \cite{zulkarnain2024optimizing} commonly used in modern enterprises.
This hybrid credit distribution model cleverly balances fairness and incentives: it ensures stable learning for all participants by providing basic rewards, while giving differentiated rewards based on the strategic contribution of each participant.

The hybrid credit assignment mechanism splits the value function $v(\cal{C})$ into two equal parts. The first half serves as a global reward, which is evenly distributed among all agents to ensure baseline stability. The second half is used as the basis for computing the Shapley value, which reflects each agent's strategic contribution and serves as an additional incentive.
We will prove that given a convex game $G=(\mathcal{N},v)$, this hybrid credit assignment yields an outcome $(\mathcal{N},x)$ that is both efficient and stable.
Here, $x=( x ^ { 1 } , \ldots , x ^ { n } )$ denotes the payoff vector allocated to the agents, and $\mathcal{N}$ represents the grand coalition of all agents. Specifically, each agent will obtain the distributed payoff $x^{i}=(\frac{v(\mathcal{C}) }{2*|\mathcal{N}|}+\Phi_{i})$ according to its own policy.
$\frac{v(\mathcal{C}) }{2*|\mathcal{N}|}$ is the equal share of the first half of the global reward enjoyed by the agent, and $\Phi_{i}$ is the Shapley value of agent $i$ calculated according to the second half of the value function.

\begin{theorem} \label{theorem-2}
    Given a convex game $G=(\mathcal{N},v)$, the hybrid assignment outcome $(\mathcal{N},x)$ is efficient, and the payoff vector $x$ is an efficient reward reallocation.
\end{theorem}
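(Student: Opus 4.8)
The plan is to verify the efficiency condition of Definition~\ref{definition-2} directly for the grand coalition: since the relevant coalition structure is the single grand coalition $\mathcal{CS} = \{\mathcal{N}\}$, efficiency reduces to establishing the identity $x(\mathcal{N}) = v(\mathcal{N})$, where $x(\mathcal{N}) = \sum_{i \in \mathcal{N}} x^i$ and each agent's hybrid payoff is $x^i = \frac{v(\mathcal{N})}{2|\mathcal{N}|} + \Phi_i$. The strategy is simply to substitute this definition into the total payoff and evaluate the two resulting sums separately.

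First I would split the total into its base-reward component and its Shapley component:
\begin{equation}
    x(\mathcal{N}) = \sum_{i \in \mathcal{N}} \frac{v(\mathcal{N})}{2|\mathcal{N}|} + \sum_{i \in \mathcal{N}} \Phi_i .
\end{equation}
The base-reward term is immediate: summing the equal share $\frac{v(\mathcal{N})}{2|\mathcal{N}|}$ over all $|\mathcal{N}|$ agents yields exactly $\frac{1}{2}v(\mathcal{N})$. The entire content of the proof therefore concentrates on the second sum.

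The step requiring care is evaluating $\sum_{i} \Phi_i$. The key observation is that in the hybrid mechanism the Shapley values are computed not from $v$ itself but from the halved characteristic function $v'(\mathcal{C}) = \tfrac{1}{2}v(\mathcal{C})$ that governs the second half of the reward. By homogeneity of the Shapley value in the characteristic function, scaling the game by $\tfrac{1}{2}$ scales every marginal contribution $\delta_i(\mathcal{C})$ in \eqref{eq-sharpvalue}, and hence every $\Phi_i$, by the same factor. Applying the Efficiency axiom of the Shapley value to the game $(\mathcal{N}, v')$ then gives $\sum_{i \in \mathcal{N}} \Phi_i = v'(\mathcal{N}) = \tfrac{1}{2}v(\mathcal{N})$. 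Combining the two halves yields $x(\mathcal{N}) = \tfrac{1}{2}v(\mathcal{N}) + \tfrac{1}{2}v(\mathcal{N}) = v(\mathcal{N})$, so the outcome is efficient and $x$ is an efficient reallocation of the grand-coalition value.

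I expect the only real obstacle to be making the bookkeeping of the ``second half'' precise: one must state explicitly that the Shapley computation is carried out on $v' = \tfrac{1}{2}v$ rather than on $v$, and then invoke the scaling property so that the standard Efficiency axiom applies cleanly to $v'$. Once that identification is pinned down, the remaining argument is routine summation, and no appeal to convexity or to the Core (Theorem~1) is needed for efficiency alone.
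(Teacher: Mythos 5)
Your proposal is correct, and it shares the paper's overall skeleton: split $x(\mathcal{N})$ into the equal-share half, which trivially sums to $\tfrac{1}{2}v(\mathcal{N})$, and the Shapley half, which must be shown to sum to the other $\tfrac{1}{2}v(\mathcal{N})$. Where you differ is in how that second step is discharged. The paper proves it from first principles: it expands $\sum_{i\in\mathcal{N}}\Phi_i$ as a double sum over agents and sub-coalitions and runs an explicit counting argument --- each $v(\mathcal{C})$ with $|\mathcal{C}|=p$, $0<p<n$, appears $p$ times with coefficient $\frac{(p-1)!\,(n-p)!}{n!}$ and $n-p$ times with coefficient $-\frac{p!\,(n-p-1)!}{n!}$, and these cancel exactly, leaving only $\tfrac{1}{2}v(\mathcal{N})$ from the term $\mathcal{C}=\mathcal{N}$ (the $v(\emptyset)$ term vanishes). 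In effect, the paper re-derives the Efficiency axiom of the Shapley value for the halved game. You instead invoke that axiom directly, after the homogeneity observation identifying the hybrid $\Phi_i$ as the Shapley value of the scaled game $v'=\tfrac{1}{2}v$; this is legitimate, since the paper itself lists Efficiency as a property of the Shapley value in its preliminaries, the axiom holds for any transferable utility game with $v(\emptyset)=0$, and positive scaling is innocuous. You also correctly note that convexity plays no role in efficiency --- indeed the paper's cancellation argument never uses it either. The trade-off: your route is shorter and more modular, pushing the combinatorial content into a cited standard fact, while the paper's derivation is self-contained and makes the cancellation mechanism visible, which has some expository value given that the ``halved game'' is a nonstandard object the reader might not immediately trust.
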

\begin{theorem} \label{theorem-3}
    Given a convex game $G=(\mathcal{N},v)$, the hybrid assignment outcome $(\mathcal{N},x)$ is in the Core, and the payoff vector $x$ is a stable reward reallocation.
\end{theorem}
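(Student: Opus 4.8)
The plan is to reduce the Core membership claim to the two conditions implicit in Definition~\ref{definition-3} together with efficiency, and to dispatch them separately. Efficiency, $x(\mathcal{N}) = v(\mathcal{N})$, is already supplied by Theorem~\ref{theorem-2}, so the whole burden falls on coalitional rationality: I must show $x(\mathcal{C}) \geq v(\mathcal{C})$ for every $\mathcal{C} \subseteq \mathcal{N}$. The natural first step is to expand the coalition payoff from the definition of the hybrid allocation. Summing the per-agent share over $i \in \mathcal{C}$ and writing $v(\mathcal{N})$ for the global reward gives
\begin{equation}
x(\mathcal{C}) = \sum_{i \in \mathcal{C}} \left( \frac{v(\mathcal{N})}{2|\mathcal{N}|} + \Phi_i \right) = \frac{|\mathcal{C}|}{2|\mathcal{N}|}\, v(\mathcal{N}) + \sum_{i \in \mathcal{C}} \Phi_i .
\end{equation}
Because the $\Phi_i$ are computed on the second half of the value function, linearity of the Shapley value in $v$ identifies them with half of the Shapley values of the original game, so $\sum_{i \in \mathcal{C}} \Phi_i = \tfrac{1}{2}\sum_{i \in \mathcal{C}}\Phi_i^{v}$, where $\Phi^{v}$ denotes the full Shapley vector of $G$.

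The second step is to bound the Shapley term from below. Since $G$ is convex, Theorem~1 places the full Shapley vector $\Phi^{v}$ in the Core, which means $\sum_{i\in\mathcal{C}}\Phi_i^{v} \geq v(\mathcal{C})$ for every coalition. Substituting this into the expansion yields
\begin{equation}
x(\mathcal{C}) \geq \frac{|\mathcal{C}|}{2|\mathcal{N}|}\, v(\mathcal{N}) + \frac{1}{2}\, v(\mathcal{C}),
\end{equation}
so the target inequality $x(\mathcal{C}) \geq v(\mathcal{C})$ would follow once I establish the residual bound $\frac{|\mathcal{C}|}{|\mathcal{N}|}\, v(\mathcal{N}) \geq v(\mathcal{C})$, equivalently $\frac{v(\mathcal{N})}{|\mathcal{N}|} \geq \frac{v(\mathcal{C})}{|\mathcal{C}|}$, i.e.\ that the grand coalition maximizes the per-capita value.

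This residual bound is exactly where I expect the main obstacle to lie. Convexity (supermodularity, Definition~1) forces monotone marginal contributions and a non-empty Core, but it does \emph{not} by itself imply per-capita monotonicity: a highly productive proper sub-coalition can have strictly larger average value than the grand coalition, for instance when a near-dummy agent is appended to an already efficient group, in which case the equal-share component over-rewards the unproductive agent and the chain above no longer closes. To complete the argument rigorously I therefore anticipate needing one of the following supplements, and identifying which one the construction implicitly relies on is the crux: an added hypothesis guaranteeing that the equal-division vector $v(\mathcal{N})/|\mathcal{N}|$ itself respects coalitional rationality (i.e.\ already lies in the Core); a normalization of $v$ that rules out dominant proper sub-coalitions; or a reweighting of the two halves that gives the Shapley component enough mass to absorb the slack $v(\mathcal{C}) - \tfrac{|\mathcal{C}|}{|\mathcal{N}|}v(\mathcal{N})$, exploiting the convexity of the Core and the fact that the endpoint $\Phi^{v}$ is a Core point. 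Pinning down this hypothesis and checking it against the dummy-agent case is the step on which the validity of the statement ultimately turns.
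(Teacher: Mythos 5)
Your reduction---efficiency from Theorem~\ref{theorem-2}, linearity of the Shapley value giving $\Phi_i=\tfrac{1}{2}\Phi_i^{v}$, and coalitional rationality of $\Phi^{v}$ from Theorem~1---is sound, and the obstacle you isolate at the end is genuine rather than a gap in your technique. Convexity does not imply the residual per-capita bound, and your dummy-agent scenario can be promoted to an outright counterexample to the theorem as stated: take $\mathcal{N}=\{1,2\}$ with $v(\emptyset)=v(\{2\})=0$ and $v(\{1\})=v(\{1,2\})=1$. This game is superadditive and satisfies Definition~1 (every instance of the inequality holds with equality), yet the hybrid allocation is
\begin{equation*}
x^{1}=\frac{v(\mathcal{N})}{2\cdot 2}+\frac{1}{2}\Phi_1^{v}=\frac{1}{4}+\frac{1}{2}=\frac{3}{4},
\qquad
x^{2}=\frac{1}{4},
\end{equation*}
so $x(\{1\})=\tfrac{3}{4}<1=v(\{1\})$ and agent $1$ blocks: $(\mathcal{N},x)$ is not in the Core, even though it is efficient. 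Hence no completion of your argument can succeed without a further hypothesis; the supplement you name first---that the equal-division vector itself respects coalitional rationality, i.e.\ $\tfrac{|\mathcal{C}|}{2|\mathcal{N}|}v(\mathcal{N})\geq\tfrac{1}{2}v(\mathcal{C})$ for all $\mathcal{C}$---is exactly the right one, since then $x$ is the sum of two vectors each lying in the core of the half game $v/2$, and such a sum lies in the core of $v$.

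You should also know that the paper's own proof (Appendix B.2) does not close this gap; it steps around it by changing the allocation in mid-proof. When testing a coalition $\mathcal{C}=\{i^{1}<\dots<i^{k}\}$ it assigns each member the payoff $\tfrac{v(\mathcal{C})}{2k}+\delta^{i^{j}}$, where the equal share is computed over the tested coalition ($k=|\mathcal{C}|$) rather than over $\mathcal{N}$, and the Shapley value is replaced by a prefix marginal contribution $\delta^{i^{j}}=\tfrac{1}{2}\bigl[v(\mathcal{D}_j\cup\{i^{j}\})-v(\mathcal{D}_j)\bigr]$ with $\mathcal{D}_j=\{1,\dots,i^{j}-1\}$. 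With that substitution the bookkeeping closes---the shares sum to $\tfrac{1}{2}v(\mathcal{C})$, and convexity makes the prefix marginals dominate a telescoping sum equal to $\tfrac{1}{2}v(\mathcal{C})$---but what gets verified is a $\mathcal{C}$-dependent payoff vector, not the fixed vector $x^{i}=\tfrac{v(\mathcal{N})}{2|\mathcal{N}|}+\Phi_i$ defined in Subsection~\ref{sec-hybrid} and asserted in the theorem. Transferring that conclusion back to $x$ requires precisely $\tfrac{v(\mathcal{N})}{2|\mathcal{N}|}\geq\tfrac{v(\mathcal{C})}{2|\mathcal{C}|}$, the per-capita monotonicity you flagged, which fails in the example above. (The replacement of $\Phi_i$ by a marginal vector is, by contrast, harmless: the Shapley value is an average of marginal vectors and the Core is convex.) So your diagnosis is correct: the statement needs an additional hypothesis of the kind you list, and identifying it is the substantive issue here, not a shortcoming of your attempt.
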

The proofs of Theorem \ref{theorem-2} and Theorem \ref{theorem-3} can be found in Appendix B.1 and Appendix B.2, respectively.
According to Theorem \ref{theorem-2} and Theorem \ref{theorem-3}, given a convex game, the hybrid assignment outcome $(\mathcal{N},x)$ is efficient and stable.
Specifically, no agent in the grand coalition has an incentive to leave and form a separate coalition, as doing so would not yield a higher payoff.

\subsection{Historical Interaction Information-Based Shapley Q-value Policy Gradient} \label{sec-shapley}
The Shapley value provides a fair assessment of an agent's contribution under the current policy by summing the marginal contributions of individual agents. However, its high computational complexity poses significant challenges for its application in multi-agent settings.
To address this challenge, we draw inspiration from SQDDPG \cite{wang2020shapley} and propose a novel Shapley Q-value modeling method that employs the Approximate Marginal Contribution (AMC) technique, enabling efficient and stable resolution of the credit assignment problem in MARL.

AMC directly computes the marginal contribution $\delta(\mathcal{C})$ of each coalition while maintaining stability.
Specifically, it first models the global reward $v(\mathcal{C})$ of the coalition $\cal{C}$ as the value function $Q(s_t,a_t)$ and defines the distribution of an agent randomly joining an existing coalition $\mathcal{C}$ as $Pr(\mathcal{C}|{\mathcal{N}}\backslash\{i\})= | \mathcal{C} | ! ( | \mathcal{N} | - | \mathcal{C} | - 1 ) ! / | \mathcal{N} | !$.
Then, according to this definition, the marginal distribution contribution can be approximated:
\begin{equation}
    \hat{\Phi}_{i}(s,{ a}_{\mathcal{C}\cup\{i\}}):\mathcal{S}\times{\cal A}_{\mathcal{C}\cup\{i\}}\mapsto\mathbb{R},
\end{equation}

In this way, AMC accurately maps the action space of the state space of the coalition $\mathcal{C}$ and agent $i$ to a value, maintaining the property of marginal contribution. After the marginal contribution is approximated, Monte Carlo sampling $M$ times is used to estimate the Shapley Q-value of agent $i$.
The Shapley Q-value can be approximated by the following equation:
\begin{equation}
    \Phi_i(\mathcal{C}) = Q^{\pi_{\mathcal{C} \cup \{i\}}}(s, a_{\mathcal{C} \cup \{i\}}) - Q^{\pi_{\mathcal{C}}}(s, a_{\mathcal{C}}),
\end{equation}
\begin{equation}
    Q^{\Phi_{i}}(s,a^{i})\approx\frac{1}{M}\sum_{k=1}^{M}\hat{\Phi}_{i}(s,{a}_{C_{k}\cup\left\{i\right\}}),\;\forall C_{k}\sim P r(C|\mathcal{N}\backslash\{i\}).
\end{equation}

SQDDPG replaces all Q-value related parts in the DPG algorithm with Shapley Q-value. This modeling method seriously affects the training speed and stability of the algorithm.
To address this issue, we propose a stable and efficient Shapley Q-value modeling method that requires estimating the Shapley value only once during the algorithm's execution, effectively solving the credit assignment problem in multi-agent systems. 
Specifically, we introduce a historical interaction information-based Shapley Q-value policy gradient algorithm, which aims to maximize the reward $ Q ^ { \Phi _ { i } } ( s , a ^ { i } )$ obtained by the current agent. 
The historical interaction information-based Shapley Q-value policy gradient algorithm is defined as:
\begin{equation} \label{J_Phi} 
    \! {\nabla_\theta}J_{i}\left(\theta \right)=\! \mathop \mathbb{E} \limits_{ \left({s}_t, {a}_t\right) \sim {\cal{D}}} \! \left[ \! \nabla_\theta  {\cal BC} \! \Big( \! \log {\pi _\theta } \! \left({a}^i_t|s_t \right) \! \Big)   Q^{\Phi_{i}}\left( s_t, {a}_t^i \right) \right].
\end{equation}
where $\cal{D}$ represents replay buffer, $\log {\pi _\theta } \! \left({a}^i_t|s_t \right)$ is the log probability of historical actions under the current policy.
This algorithm effectively leverages historical interactions to improve both the efficiency and stability of the credit assignment process.
The derivation of the policy gradient algorithm can be found in Appendix B.3.

The proposed Shapley Q-value policy gradient method, based on historical interaction information, utilizes a sample-efficient approach to evaluate the contributions of past strategies. This enables the agent to learn a policy that maximizes its own contribution. 
We argue that this approach more effectively assigns contributions to individual agents, thereby more clearly distinguishing individual policy performance based on each agent's historical behavior.

\subsection{Historical Interaction-Enhanced Shapley Policy Gradient} \label{sec-HIS}
\begin{figure*}[t] 
    \centering
    \subfloat[{Reference (continuous)}]{%
        \includegraphics[width=0.33\linewidth]{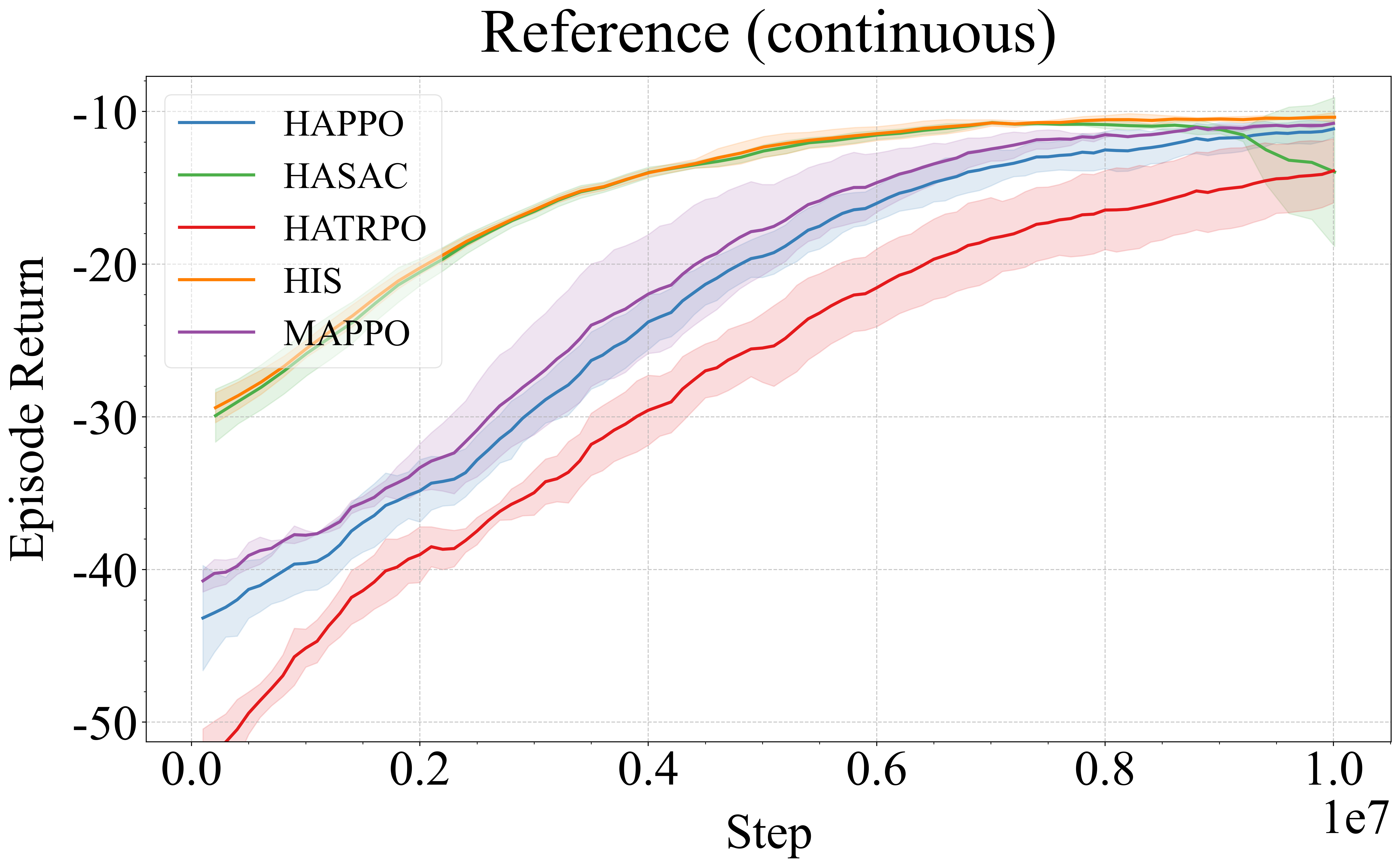}%
    }%
    \hfill%
    \subfloat[{Speaker Listener (continuous)}]{%
        \includegraphics[width=0.33\linewidth]{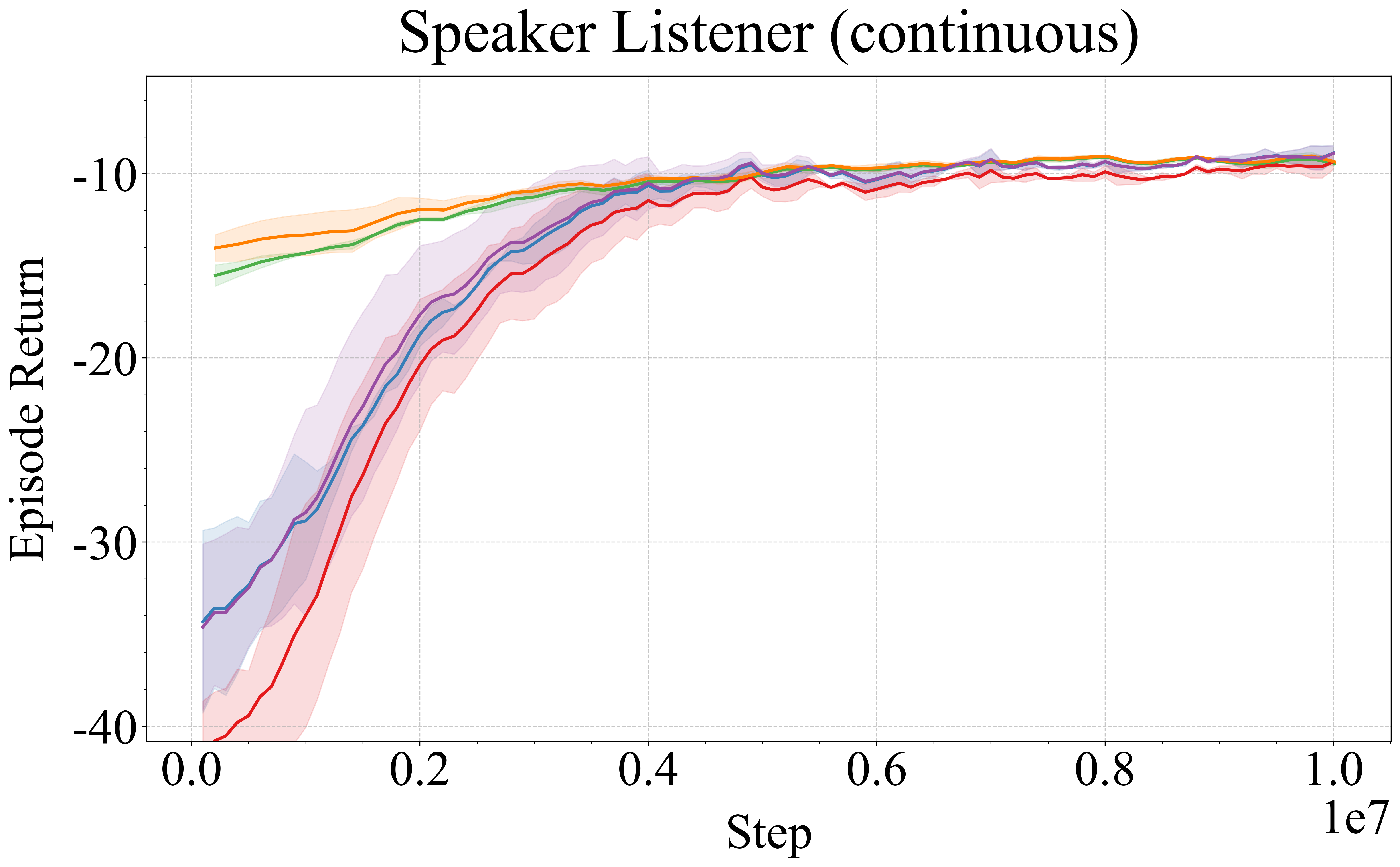}%
    }%
    \hfill%
    \subfloat[{Spread (continuous)}]{%
        \includegraphics[width=0.33\linewidth]{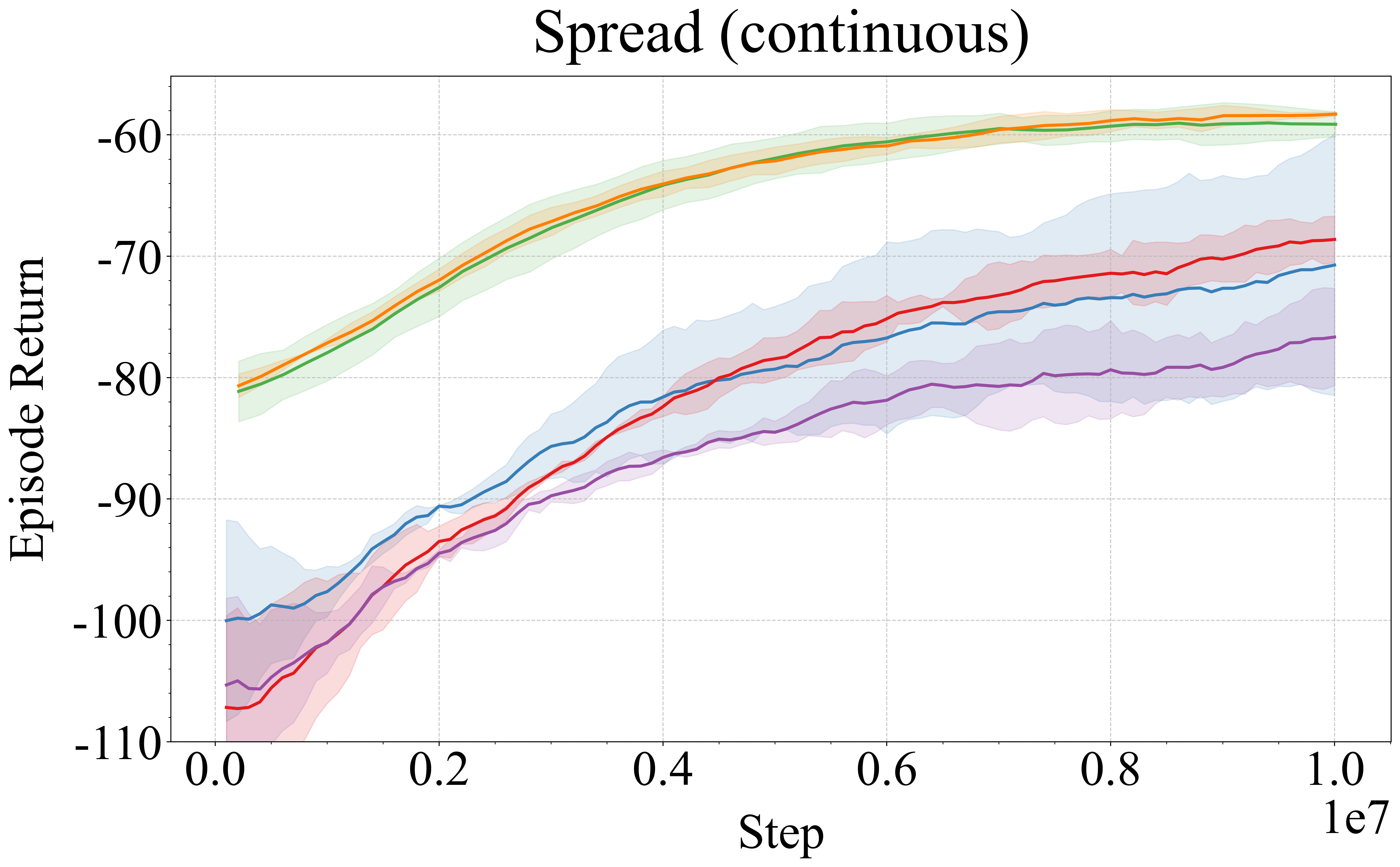}%
    }\\[1ex]
    \subfloat[{Catch Abreast}]{%
        \includegraphics[width=0.33\linewidth]{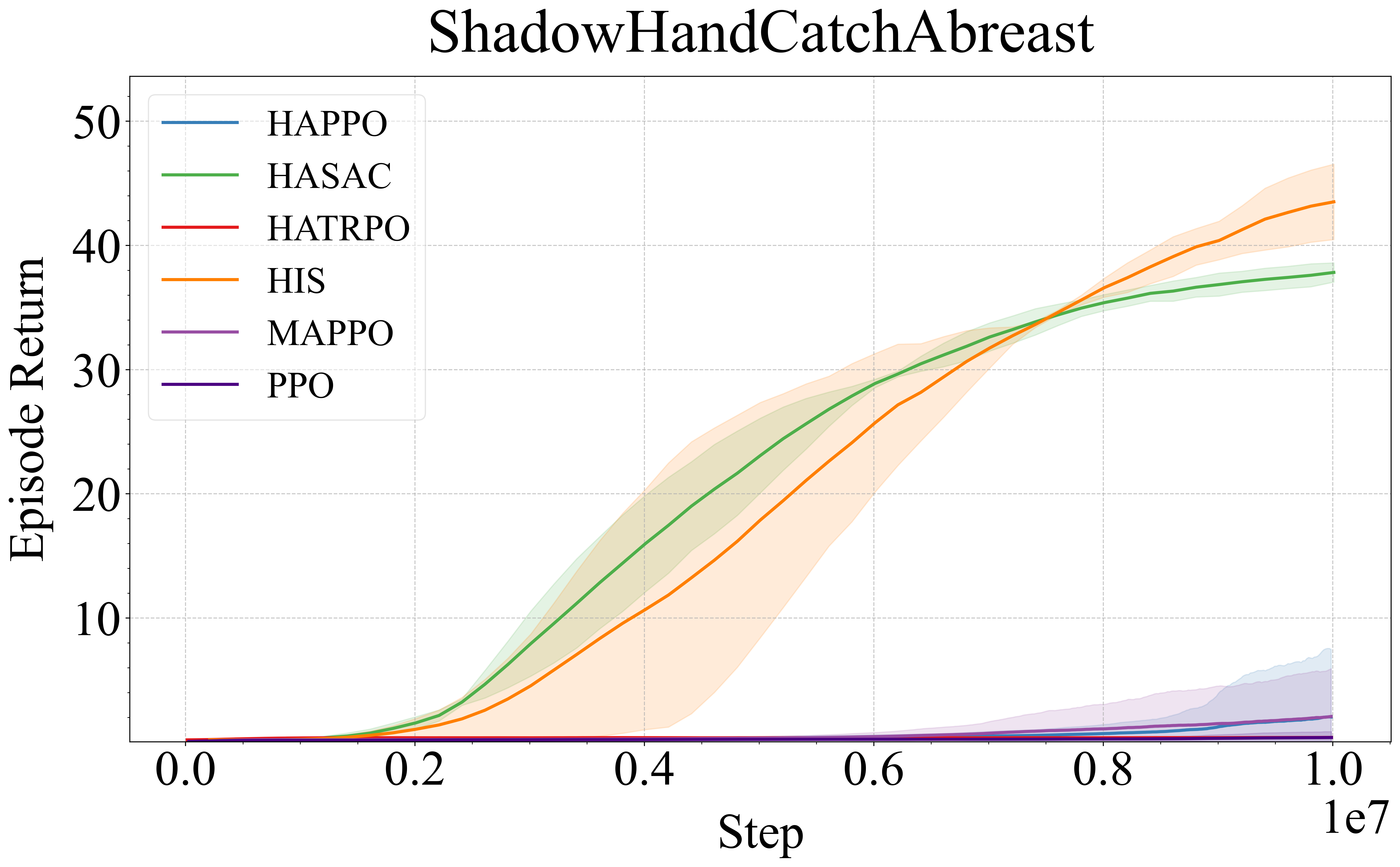}%
    }
    \hfill%
    \subfloat[{Catch Over2Underarm}]{%
        \includegraphics[width=0.33\linewidth]{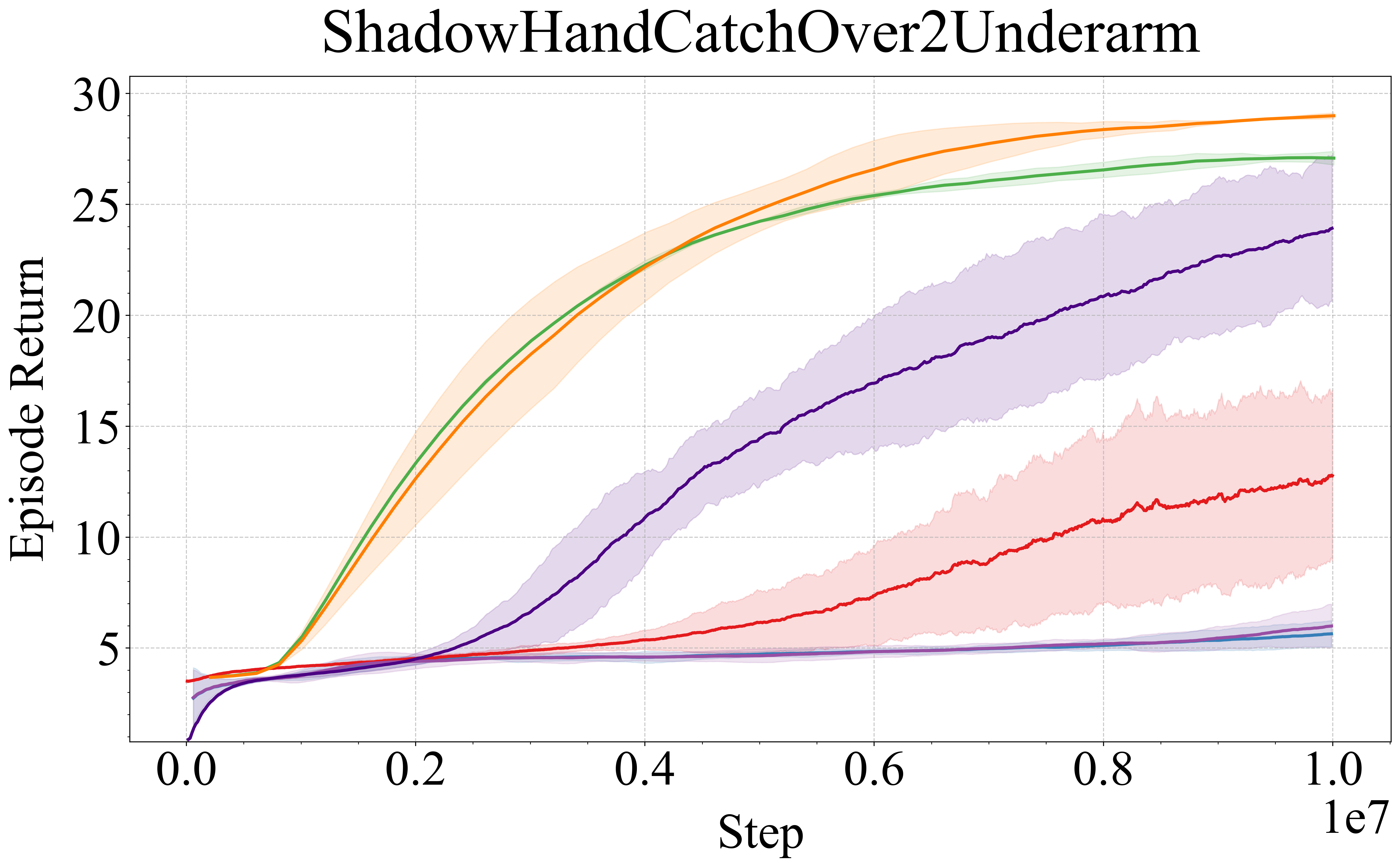}%
    }
    \hfill%
    \subfloat[{Hand Pen}]{%
        \includegraphics[width=0.33\linewidth]{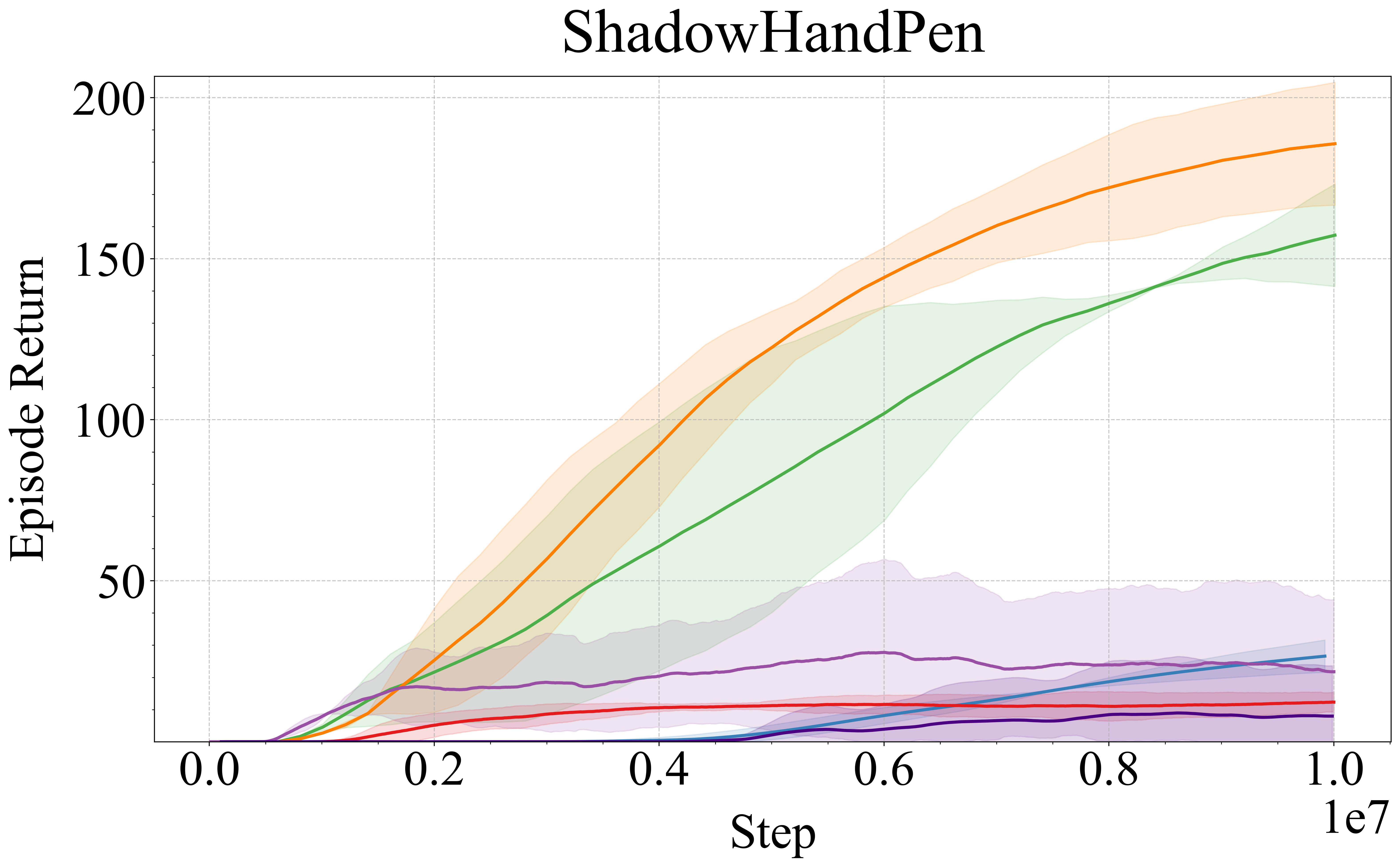}%
    }\\[1ex]
    \subfloat[{2x4-Agent Ant}]{%
        \includegraphics[width=0.33\linewidth]{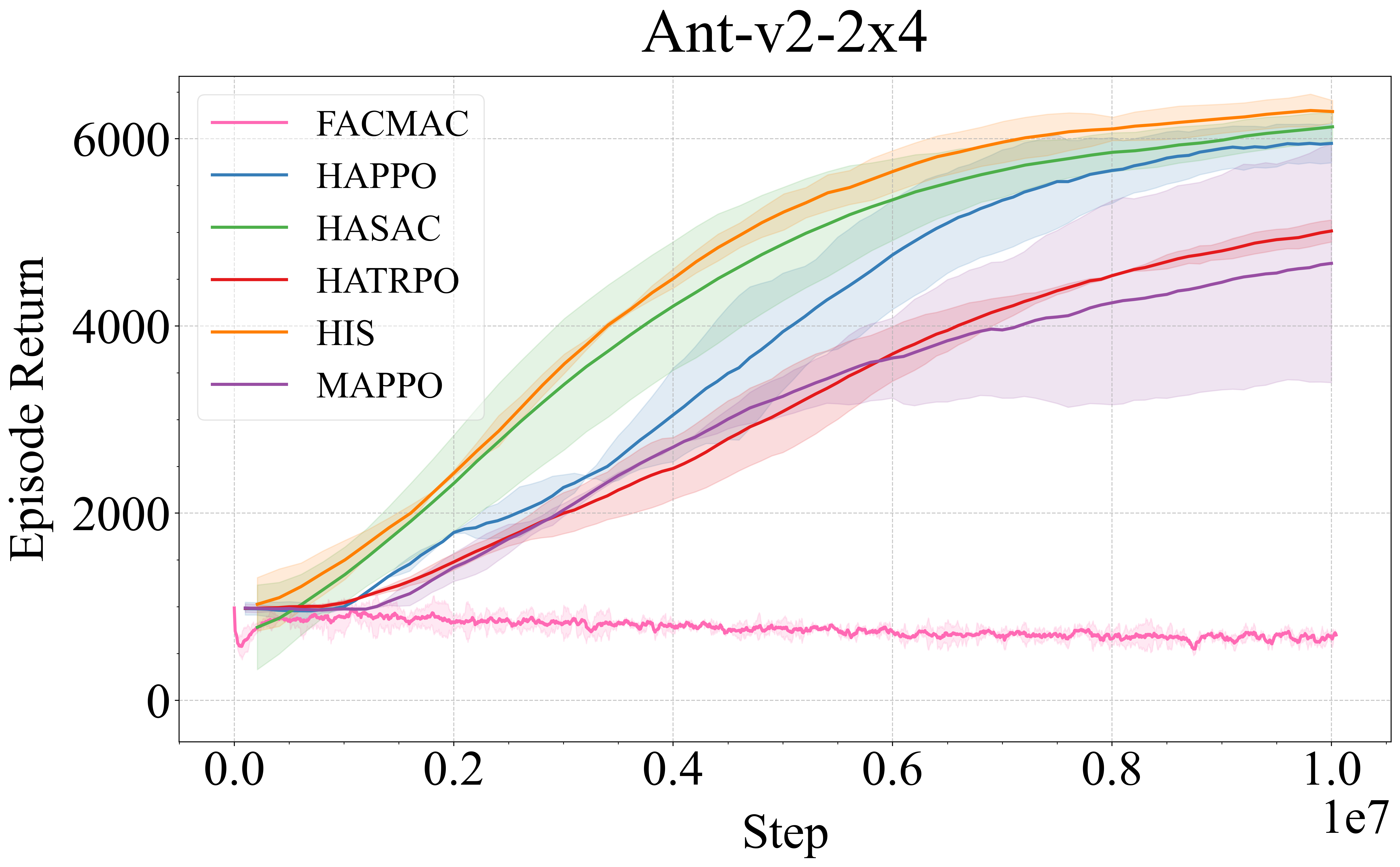}%
    }%
    \hfill%
    \subfloat[{2x3-Agent Walker2d}]{%
        \includegraphics[width=0.33\linewidth]{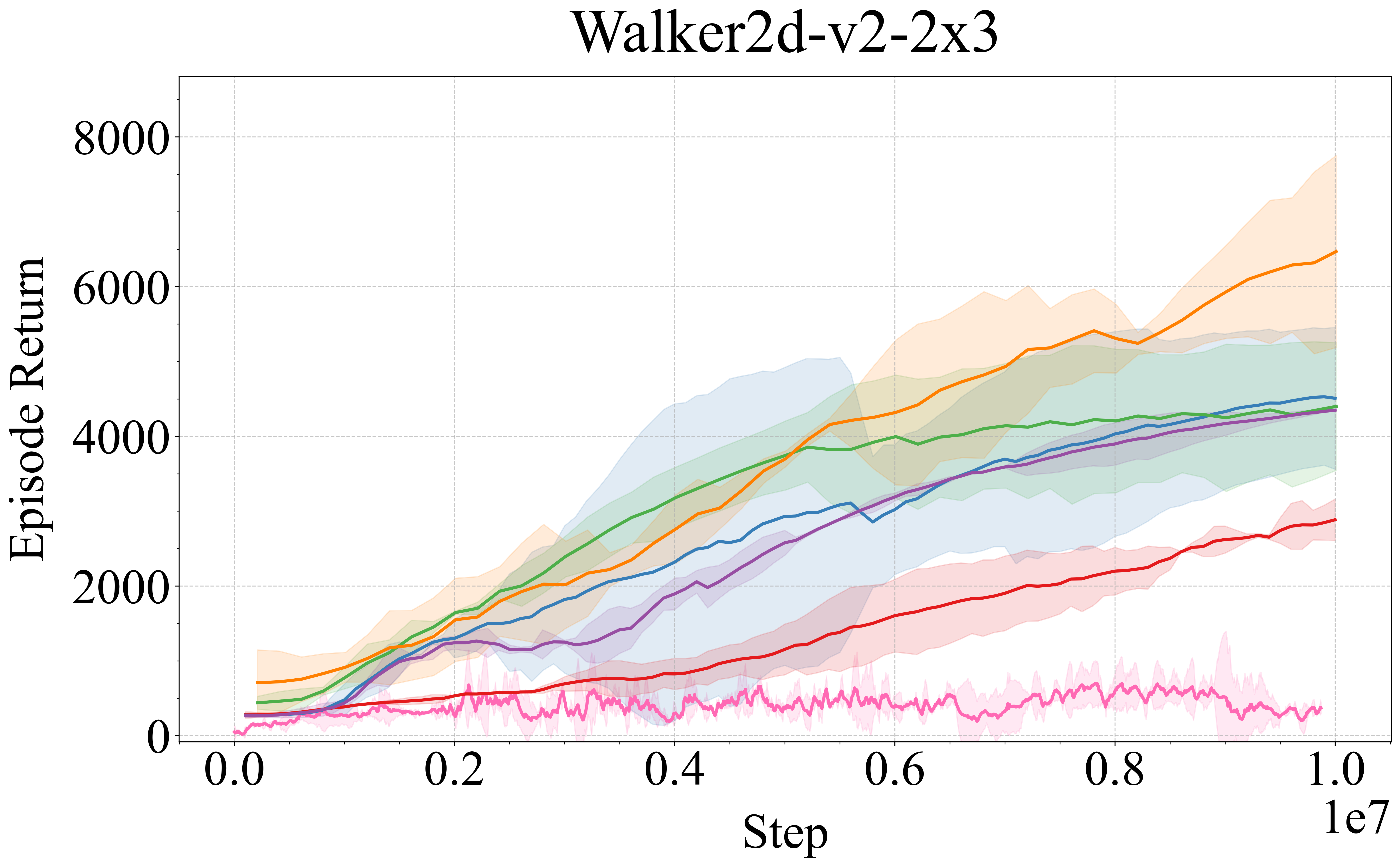}%
    }%
    \hfill%
    \subfloat[{6x1-Agent HalfCheetah}]{%
        \includegraphics[width=0.33\linewidth]{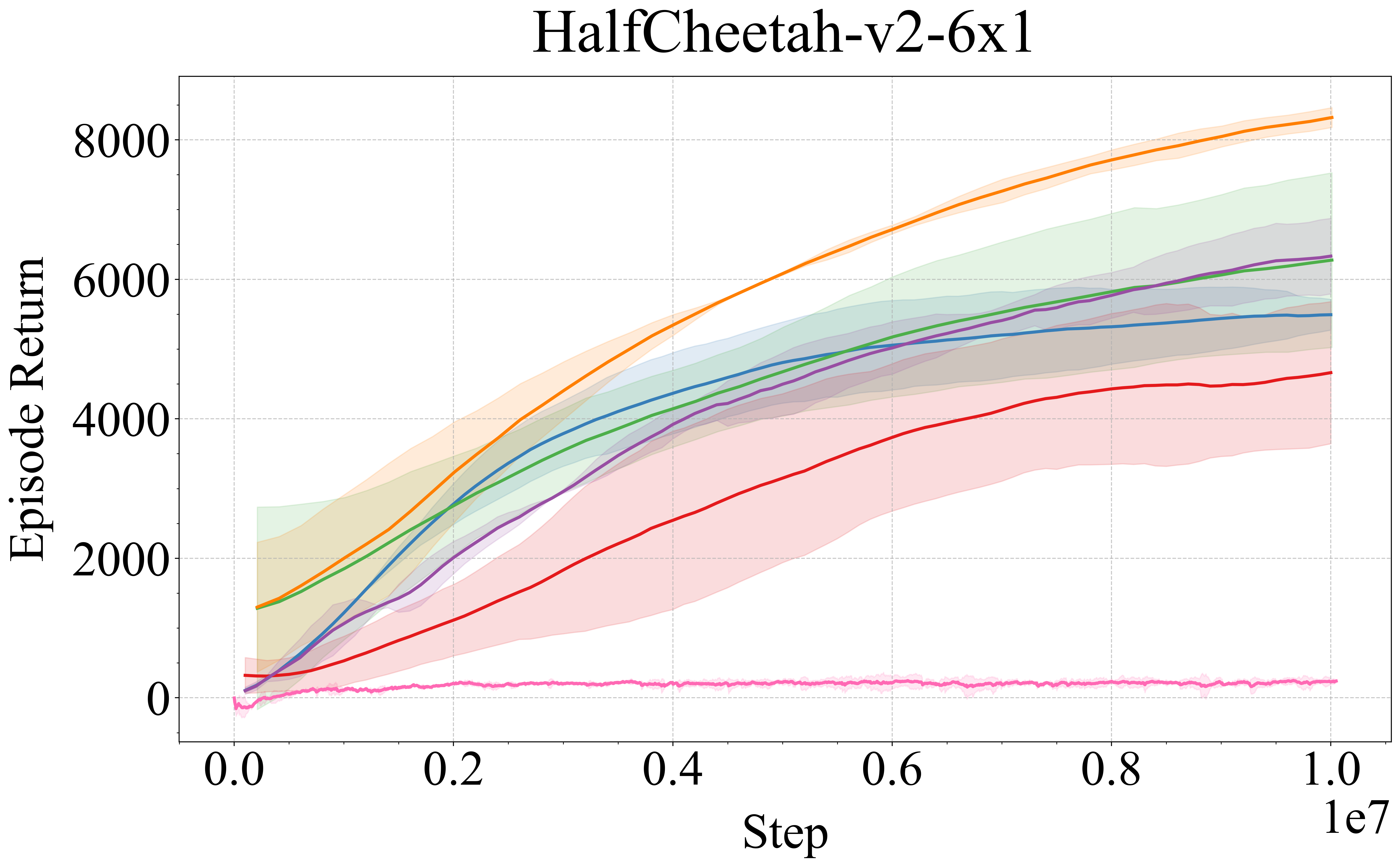}%
    }%
    \caption{The performance of algorithms on different tasks in multiple benchmark environments.}
    \label{fig:allFigs}
\end{figure*}

\subsubsection{Practical Algorithm}
To enhance the agent's awareness of its contribution to the policy while ensuring stable learning, we propose a hybrid credit assignment mechanism that guides the agent in learning effective strategies. In this framework, the agent is required to simultaneously maximize the global reward and the additional contribution derived from its own historical policies.
Therefore, we propose a new Historical Interaction-Enhanced Shapley Policy Gradient (HIS) to facilitate the learning of actor network parameters:
\begin{equation}
    \begin{split}
    &{\nabla_\theta}J_{\pi^{i_m}}(\theta^{i_m}) \\
    &= \mathbb{E}_{s_t \sim \mathcal{D}} \left[ \mathbb{E}_{{a}_t^{i_{1:m-1}} \sim \pi_{\theta_{new}^{i_{1:m-1}}}^{i_{1:m-1}}, {a}_t^{i_m} \sim \pi_{\theta^{i_m}}^{i_m}} \bigg[ \right. \\
    &\quad \nabla_\theta \Big(  Q_{\pi_{\text{old}}:\psi}^{i_{1:m}} \left( s_t, {a}_t^{i_{1:m-1}}, {a}_t^{i_m} \right)  - \alpha \log \pi_{\theta^{i_m}}^{i_m} \left( {a}_t^{i_m} \mid s_t \right) \Big) \bigg] \Bigg]  \\
    &\quad + \mathop \mathbb{E} \limits_{ \left({s}_t, {a}_t\right) \sim {\cal{D}}} \left[ \nabla_\theta  {\cal BC} \Big( \log {\pi _\theta } \left({a}^{i_m}_t|s_t \right) \Big)   Q_{\psi}^{\Phi_{i_m}}\left( s_t, {a}_t^{i_m} \right) \right].
    \end{split}
\end{equation}
where $ i _ { 1 : n } \in S y m ( n )$ is a randomly selected large coalition arrangement, $m = 1, \ldots, n$, and the policy networks of different agents are trained in a sequential updating manner \cite{liu2023maximum}.
HIS effectively balances global collaboration and individual credit assignment, thereby ensuring that the agent can optimize its strategy to maximize both global and self-interests while stabilizing training.

For training the critic network, we do not replace the action-value function with the sum of the agents' local credit distributions, as done in SQDDPG. 
Instead, we argue that directly optimizing the parameters of the centralized critic network using the temporal-difference (TD) error allows it to effectively evaluate the value functions corresponding to different coalitions. This approach maintains the consistency of the critic's evaluation while avoiding the additional complexity introduced by decomposing the credit assignments at the critic level.
Therefore, we optimize the central critic network with the following loss function:
\begin{equation}
    \begin{split}
    J_Q(\psi ) &= \mathbb{E}_{(s_t, {a}_t) \sim \mathcal{D}} \Bigg[ \frac{1}{2} \Bigg( r(s_t, {a}_t) + \gamma \mathbb{E}_{s_{t+1} \sim P} \left[ V_{\bar{\psi}}(s_{t+1}) \right] \\
    &\qquad - Q_{\psi}(s_t, {a}_t) \Bigg)^2 \Bigg].
    \end{split}
\end{equation}
where $\bar{\psi}$ represents the parameters of the target critic network.
The soft value function is obtained by implicitly parameterizing the soft Q function parameters via the following equation:
\begin{equation}
    \begin{split}
        V(s_{t}) = \mathbb{E}_{{a}_{t} \sim \pi} \left[ Q(s_{t}, {a}_{t}) + \alpha \sum_{i=1}^{n}{\mathcal{H}\left(\pi^i\left(\cdot^i | s\right)\right)} \right] \\
    \end{split}   
\end{equation}

For the temperature coefficient $\alpha$, we used the same automatic temperature coefficient adjustment method as SAC, and its optimization objective is:
\begin{equation}
J(\alpha) = \mathbb{E}_{\mathrm{s}_t \sim \mathcal{D}, {a}_t \sim \pi} \big[ -\alpha \log \pi({a}_t \vert \mathrm{s}_t) - \alpha \bar{\mathcal{H}} \big].
\end{equation}
where $\bar{\mathcal{H}}$ is the desired minimum expected entropy.
Specific implementation details and pseudocode can be found in Appendix C.
\section{Experiments}
\begin{figure*}[htbp] 
    \centering
    \subfloat[{2x4-Agent Ant-v2}]{%
        \includegraphics[width=0.33\linewidth]{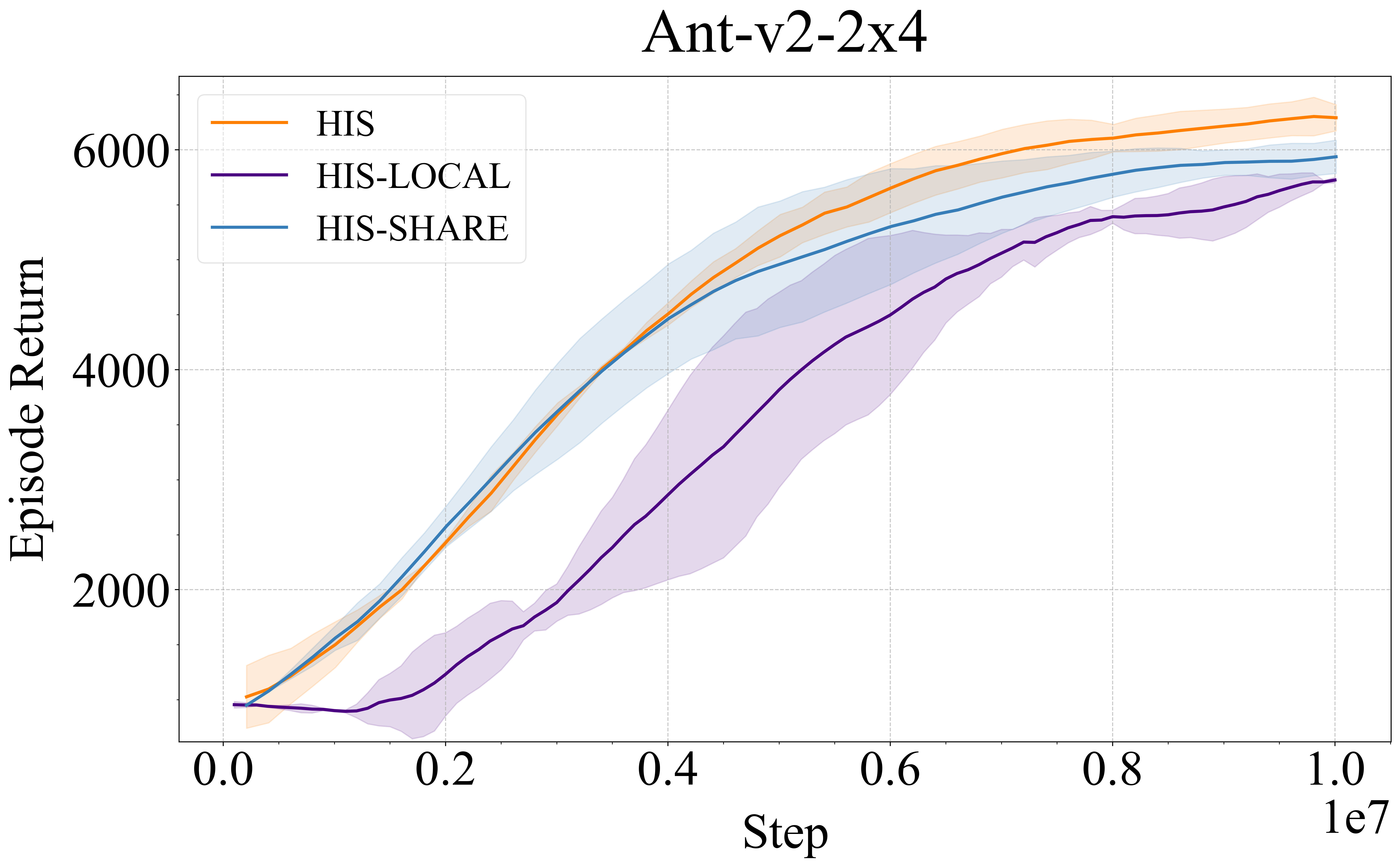}%
    }%
    \hfill%
    \subfloat[{8x1-Agent Ant-v2}]{%
        \includegraphics[width=0.33\linewidth]{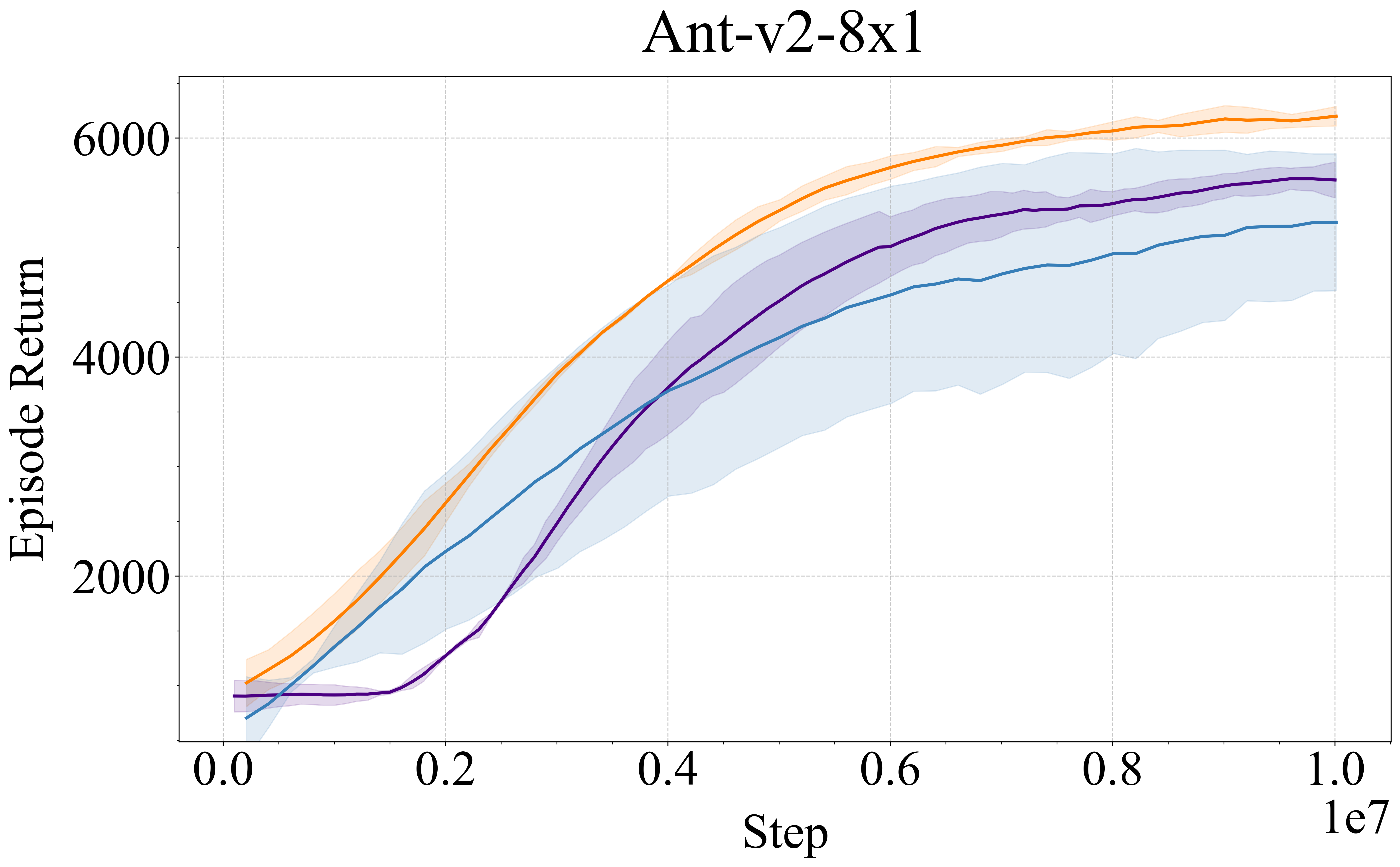}%
    }%
    \hfill%
    \subfloat[{Door Open Outward}]{%
        \includegraphics[width=0.33\linewidth]{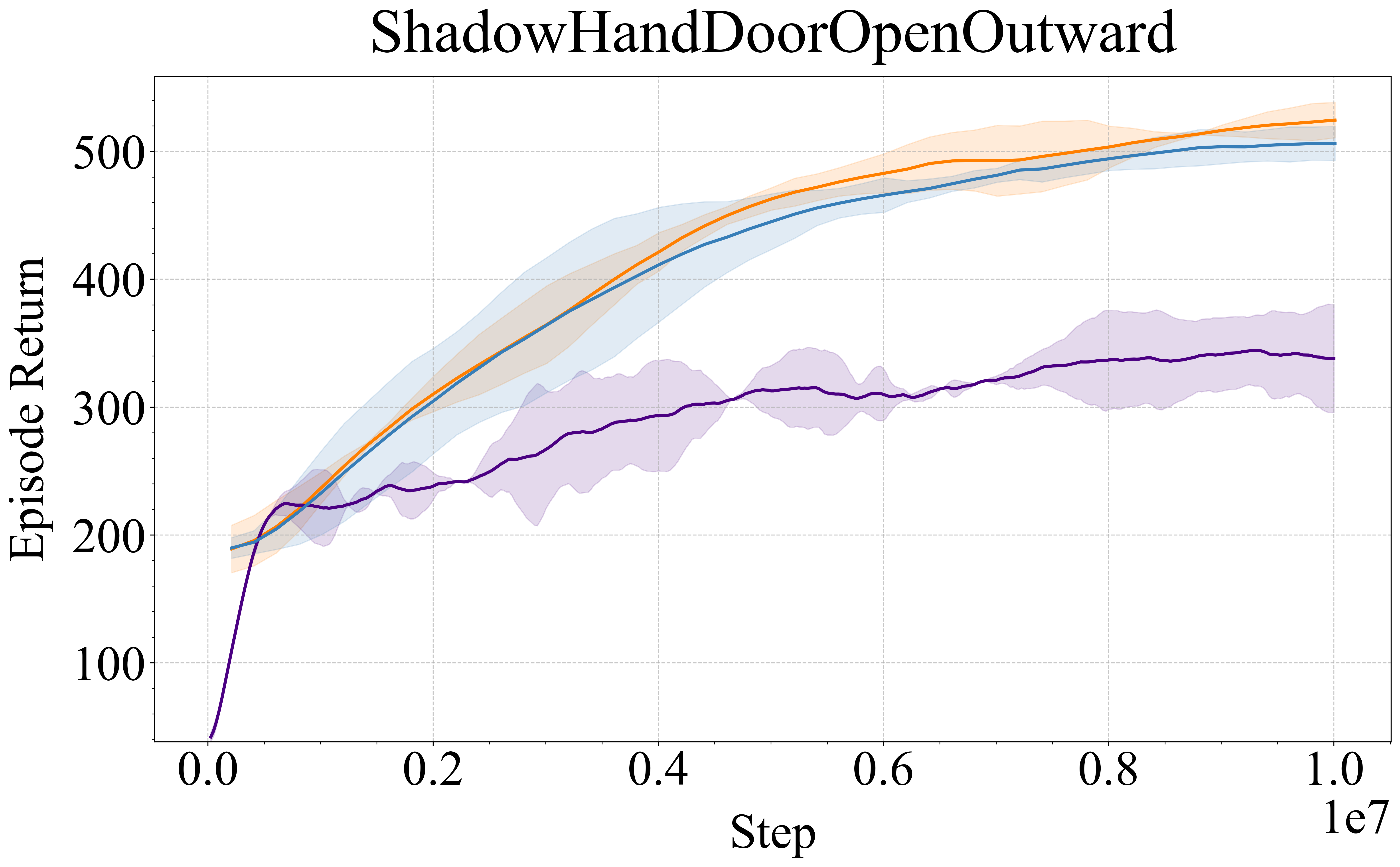}%
    }
    \caption{Ablation study of different HIS components on Bi-DexHands and MAMuJoCo tasks.}
    \label{fig:ablation}
\end{figure*}

\subsection{Experimental Setup}
To thoroughly evaluate the effectiveness of the HIS algorithm, we tested it across three widely-used continuous action benchmark environments: Multi-Agent Particle Environment (MPE) \cite{lowe2017multi}, Bi-DexHands \cite{chen2022towards}, and Multi-Agent MuJoCo (MAMuJoCo) \cite{de2020deep}. 
Our experiments employed state-of-the-art multi-agent reinforcement learning (MARL) algorithms, such as HAPPO, HATRPO \cite{kuba2021trust}, MAPPO \cite{yu2022surprising}, HASAC \cite{liu2023maximum}, and FACMAC \cite{peng2021facmac}, as baselines, and compared their performance with HIS across five random seeds.
Through these experiments, we aim to demonstrate that HIS offers advantages in the following areas: 
\textbf{(1)} Compared with the shared reward method, HIS can enhance the agent's ability to perceive its own behavior through a hybrid assignment mechanism.
\textbf{(2)} In contrast to the local reward method, HIS exhibits significantly more stable performance.
\textbf{(3)} The design of HIS using historical action data makes it more sample efficient than existing algorithms.

\subsection{Experimental Results}
We conducted extensive experiments to evaluate the performance of HIS across multiple environments. Specifically, we tested HIS on three tasks within the MPE environment and nine tasks each in the Bi-DexHands and MAMuJoCo environments. Experimental results show that HIS outperforms the baselines and we achieve the best performance on a total of 21 tasks.
Full experimental details as well as hyperparameter designs can be found in Appendix D.

\textbf{MPE}: 
It is important to note that, to align with fully cooperative multi-agent reinforcement learning, we aggregated the individual rewards of each agent within these environments and returned only the global reward during training.
As shown in Figure \ref{fig:allFigs}, the HIS algorithm exhibits a faster convergence rate compared to the other algorithms, and it also demonstrates superior stability after reaching convergence. This indicates that HIS not only accelerates learning but also ensures consistent and reliable performance across different tasks.

\textbf{MAMuJoCo}: 
In our experiments, we included FACMAC as an additional baseline algorithm. FACMAC \cite{peng2021facmac} uses a local reward scheme and applies the same implicit assignment method as QMIX \cite{rashid2020monotonic} to address the credit assignment problem through value decomposition. However, as shown in Figure \ref{fig:allFigs}, although FACMAC can implicitly allocate agent contributions, the accuracy of value decomposition is not guaranteed in tasks with strong inter-agent coupling.
In contrast, the HIS algorithm, which combines both global rewards and local credit assignment rewards, demonstrates significantly improved stability and robustness, particularly in complex collaborative tasks. 
Notably, HIS significantly outperforms the shared reward scheme by introducing the hybrid credit assignment mechanism, which is especially effective in tasks involving a large number of agents.
The shared reward mechanism can easily obscure individual contributions, while the hybrid credit assignment mechanism of HIS can help the agent more clearly understand the impact of its historical strategies on the global results, thereby optimizing the decision-making process.

\textbf{Bi-DexHands}: 
As shown in Figure \ref{fig:allFigs}, the HIS algorithm benefits from the additional contribution incentives provided by its hybrid credit assignment mechanism, which motivates the agent to explore strategies that maximize its own benefits. 
Therefore, HIS outperforms other baseline algorithms in these robotic arm manipulation tasks, demonstrating both strong sample efficiency and stability.
The algorithm enhances the ability to optimize individual contributions, accelerates learning, and ensures consistent performance in challenging tasks that require fine-tuning. These results underscore the dual advantages of the hybrid credit assignment mechanism: it not only accelerates learning by offering more informative feedback but also encourages the development of resilient policies that maintain high performance even in complex scenarios. 
This highlights the importance of designing effective credit assignment mechanisms when advancing multi-agent reinforcement learning in complex and high-precision tasks.

\subsection{Ablation Study}
To thoroughly investigate the impact of hybrid credit assignment on algorithm performance, we conduct a series of ablation experiments in three tasks.
The experiments focused on two primary aspects: 
\textbf{(1)} whether the hybrid credit assignment scheme enhances performance compared to the shared reward scheme, and \textbf{(2)} whether the hybrid credit assignment scheme improves stability and overall effectiveness when compared to the local reward scheme.

In response to these two points, we proposed two variants: HIS-SHARE and HIS-LOCAL. HIS-SHARE replaces the calculated Shapley Q-value in policy learning with a soft Q-function representing the global reward. HIS-LOCAL uses the same modeling method as SQDDPG, which directly replaces the soft Q-function in policy learning with the calculated Shapley Q-value of each agent.


As shown in Figure \ref{fig:ablation}, HIS-SHARE assigns the same reward to all agents, and a single agent cannot obtain the contribution incentive related to its own strategy, resulting in a significant performance degradation.
This underscores the importance of the hybrid credit assignment mechanism, which guides agents toward policies that not only contribute to global objectives but also reflect their individual impact. Without local incentives, agents tend to converge on suboptimal behaviors that prioritize collective outcomes at the expense of individual contributions, a situation particularly detrimental in strongly coupled or highly interdependent tasks.

The HIS-LOCAL algorithm, by contrast, is overly dependent on the optimization properties of the trust assignment scheme, which limits its adaptability to more complex tasks. This dependence leads to suboptimal performance in strongly coupled scenarios, where accurate calculation of the Shapley value is difficult. Furthermore, the need to repeatedly compute the Shapley value introduces significant time complexity, imposing a substantial computational burden during training. This not only slows down the learning process but also limits the scalability of the algorithm in large-scale environments.

\section{Conclusion}
In this paper, we introduce the Historical Interaction-Enhanced Shapley Policy Gradient Algorithm (HIS) for Multi-Agent Credit Assignment, which integrates a hybrid credit assignment mechanism and theoretically demonstrates that the output results of this mechanism are efficient and stable.
HIS leverages historical interaction information to calculate each agent's contribution to the overall task when computing the policy gradient, thereby enhancing the agent's local perception of its own policy in a highly sample-efficient manner.
We evaluate the performance of HIS in three widely-used continuous benchmark environments. 
The experimental results demonstrate that, compared to the most widely adopted state-of-the-art algorithms, HIS offers superior robustness and adaptability. 
Notably, the hybrid credit assignment mechanism enables our algorithm to excel in complex and highly coupled tasks, where other methods may struggle to achieve stable performance.
\bibliography{aaai2026}

@article{zhou2024multiagent,
  title={Multiagent Reinforcement Learning: Methods, Trustworthiness, Applications in Intelligent Vehicles, and Challenges},
  author={Zhou, Ziyuan and Liu, Guanjun and Tang, Ying},
  journal={IEEE Transactions on Intelligent Vehicles},
  year={2024},
  publisher={IEEE}
}

@article{hu2021distributed,
  title={Distributed multi-agent meta learning for trajectory design in wireless drone networks},
  author={Hu, Ye and Chen, Mingzhe and Saad, Walid and Poor, H Vincent and Cui, Shuguang},
  journal={IEEE Journal on Selected Areas in Communications},
  volume={39},
  number={10},
  pages={3177--3192},
  year={2021},
  publisher={IEEE}
}

@article{schulman2017proximal,
  title={Proximal policy optimization algorithms},
  author={Schulman, John and Wolski, Filip and Dhariwal, Prafulla and Radford, Alec and Klimov, Oleg},
  journal={arXiv preprint arXiv:1707.06347},
  year={2017}
}

@article{naderializadeh2021resource,
  title={Resource management in wireless networks via multi-agent deep reinforcement learning},
  author={Naderializadeh, Navid and Sydir, Jaroslaw J and Simsek, Meryem and Nikopour, Hosein},
  journal={IEEE Transactions on Wireless Communications},
  volume={20},
  number={6},
  pages={3507--3523},
  year={2021},
  publisher={IEEE}
}

@article{hua2023energy,
  title={Energy management of multi-mode plug-in hybrid electric vehicle using multi-agent deep reinforcement learning},
  author={Hua, Min and Zhang, Cetengfei and Zhang, Fanggang and Li, Zhi and Yu, Xiaoli and Xu, Hongming and Zhou, Quan},
  journal={Applied Energy},
  volume={348},
  pages={121526},
  year={2023},
  publisher={Elsevier}
}

@article{yu2022surprising,
  title={The surprising effectiveness of ppo in cooperative multi-agent games},
  author={Yu, Chao and Velu, Akash and Vinitsky, Eugene and Gao, Jiaxuan and Wang, Yu and Bayen, Alexandre and Wu, Yi},
  journal={Advances in neural information processing systems},
  volume={35},
  pages={24611--24624},
  year={2022}
}

@article{lowe2017multi,
  title={Multi-agent actor-critic for mixed cooperative-competitive environments},
  author={Lowe, Ryan and Wu, Yi I and Tamar, Aviv and Harb, Jean and Pieter Abbeel, OpenAI and Mordatch, Igor},
  journal={Advances in neural information processing systems},
  volume={30},
  year={2017}
}

@article{rashid2020monotonic,
  title={Monotonic value function factorisation for deep multi-agent reinforcement learning},
  author={Rashid, Tabish and Samvelyan, Mikayel and De Witt, Christian Schroeder and Farquhar, Gregory and Foerster, Jakob and Whiteson, Shimon},
  journal={Journal of Machine Learning Research},
  volume={21},
  number={178},
  pages={1--51},
  year={2020}
}

@article{zhong2024heterogeneous,
  title={Heterogeneous-agent reinforcement learning},
  author={Zhong, Yifan and Kuba, Jakub Grudzien and Feng, Xidong and Hu, Siyi and Ji, Jiaming and Yang, Yaodong},
  journal={Journal of Machine Learning Research},
  volume={25},
  number={32},
  pages={1--67},
  year={2024}
}

@article{kuba2021trust,
  title={Trust region policy optimisation in multi-agent reinforcement learning},
  author={Kuba, Jakub Grudzien and Chen, Ruiqing and Wen, Muning and Wen, Ying and Sun, Fanglei and Wang, Jun and Yang, Yaodong},
  journal={arXiv preprint arXiv:2109.11251},
  year={2021}
}

@article{liu2023maximum,
  title={Maximum entropy heterogeneous-agent reinforcement learning},
  author={Liu, Jiarong and Zhong, Yifan and Hu, Siyi and Fu, Haobo and Fu, Qiang and Chang, Xiaojun and Yang, Yaodong},
  journal={arXiv preprint arXiv:2306.10715},
  year={2023}
}

@inproceedings{li2021shapley,
  title={Shapley counterfactual credits for multi-agent reinforcement learning},
  author={Li, Jiahui and Kuang, Kun and Wang, Baoxiang and Liu, Furui and Chen, Long and Wu, Fei and Xiao, Jun},
  booktitle={Proceedings of the 27th ACM SIGKDD Conference on Knowledge Discovery \& Data Mining},
  pages={934--942},
  year={2021}
}

@article{sunehag2017value,
  title={Value-decomposition networks for cooperative multi-agent learning},
  author={Sunehag, Peter and Lever, Guy and Gruslys, Audrunas and Czarnecki, Wojciech Marian and Zambaldi, Vinicius and Jaderberg, Max and Lanctot, Marc and Sonnerat, Nicolas and Leibo, Joel Z and Tuyls, Karl and others},
  journal={arXiv preprint arXiv:1706.05296},
  year={2017}
}

@article{huang2024multi,
  title={A multi-agent reinforcement learning framework for optimizing financial trading strategies based on TimesNet},
  author={Huang, Yuling and Zhou, Chujin and Cui, Kai and Lu, Xiaoping},
  journal={Expert Systems with Applications},
  volume={237},
  pages={121502},
  year={2024},
  publisher={Elsevier}
}

@article{jiang2025qllm,
  title={QLLM: Do We Really Need a Mixing Network for Credit Assignment in Multi-Agent Reinforcement Learning?},
  author={Jiang, Zhouyang and Zhang, Bin and Wei, Airong and Xu, Zhiwei},
  journal={arXiv preprint arXiv:2504.12961},
  year={2025}
}

@inproceedings{wang2020shapley,
  title={Shapley Q-value: A local reward approach to solve global reward games},
  author={Wang, Jianhong and Zhang, Yuan and Kim, Tae-Kyun and Gu, Yunjie},
  booktitle={Proceedings of the AAAI conference on artificial intelligence},
  volume={34},
  number={05},
  pages={7285--7292},
  year={2020}
}

@article{wang2024shapley,
  title={Shapley Value Based Multi-Agent Reinforcement Learning: Theory, Method and Its Application to Energy Network},
  author={Wang, Jianhong},
  journal={arXiv preprint arXiv:2402.15324},
  year={2024}
}

@inproceedings{foerster2018counterfactual,
  title={Counterfactual multi-agent policy gradients},
  author={Foerster, Jakob and Farquhar, Gregory and Afouras, Triantafyllos and Nardelli, Nantas and Whiteson, Shimon},
  booktitle={Proceedings of the AAAI conference on artificial intelligence},
  volume={32},
  number={1},
  year={2018}
}

@article{wang2022shaq,
  title={Shaq: Incorporating shapley value theory into multi-agent q-learning},
  author={Wang, Jianhong and Zhang, Yuan and Gu, Yunjie and Kim, Tae-Kyun},
  journal={Advances in Neural Information Processing Systems},
  volume={35},
  pages={5941--5954},
  year={2022}
}

@incollection{littman1994markov,
  title={Markov games as a framework for multi-agent reinforcement learning},
  author={Littman, Michael L},
  booktitle={Machine learning proceedings 1994},
  pages={157--163},
  year={1994},
  publisher={Elsevier}
}

@book{chalkiadakis2011computational,
  title={Computational aspects of cooperative game theory},
  author={Chalkiadakis, Georgios and Elkind, Edith and Wooldridge, Michael},
  year={2011},
  publisher={Morgan \& Claypool Publishers}
}

@article{shapley1953value,
  title={A value for n-person games},
  author={Shapley, Lloyd S and others},
  year={1953},
  publisher={Princeton University Press Princeton}
}

@article{lundberg2017unified,
  title={A unified approach to interpreting model predictions},
  author={Lundberg, Scott M and Lee, Su-In},
  journal={Advances in neural information processing systems},
  volume={30},
  year={2017}
}

@inproceedings{jethani2021fastshap,
  title={Fastshap: Real-time shapley value estimation},
  author={Jethani, Neil and Sudarshan, Mukund and Covert, Ian Connick and Lee, Su-In and Ranganath, Rajesh},
  booktitle={International conference on learning representations},
  year={2021}
}

@article{peng2021facmac,
  title={Facmac: Factored multi-agent centralised policy gradients},
  author={Peng, Bei and Rashid, Tabish and Schroeder de Witt, Christian and Kamienny, Pierre-Alexandre and Torr, Philip and B{\"o}hmer, Wendelin and Whiteson, Shimon},
  journal={Advances in Neural Information Processing Systems},
  volume={34},
  pages={12208--12221},
  year={2021}
}

@article{de2020deep,
  title={Deep multi-agent reinforcement learning for decentralized continuous cooperative control},
  author={de Witt, Christian Schroeder and Peng, Bei and Kamienny, Pierre-Alexandre and Torr, Philip and B{\"o}hmer, Wendelin and Whiteson, Shimon},
  journal={arXiv preprint arXiv:2003.06709},
  volume={19},
  year={2020}
}

@article{chen2022towards,
  title={Towards human-level bimanual dexterous manipulation with reinforcement learning},
  author={Chen, Yuanpei and Wu, Tianhao and Wang, Shengjie and Feng, Xidong and Jiang, Jiechuan and Lu, Zongqing and McAleer, Stephen and Dong, Hao and Zhu, Song-Chun and Yang, Yaodong},
  journal={Advances in Neural Information Processing Systems},
  volume={35},
  pages={5150--5163},
  year={2022}
}

@article{chen2024approaching,
  title={Approaching the global Nash equilibrium of non-convex multi-player games},
  author={Chen, Guanpu and Xu, Gehui and He, Fengxiang and Hong, Yiguang and Rutkowski, Leszek and Tao, Dacheng},
  journal={IEEE Transactions on Pattern Analysis and Machine Intelligence},
  year={2024},
  publisher={IEEE}
}

@article{lee2024solutions,
  title={Solutions for Multiple-Defender Single-Attacker Convex Target Guarding Games: Verification via Parametric Optimization},
  author={Lee, Yoonjae and Bakolas, Efstathios},
  journal={IEEE Transactions on Automatic Control},
  year={2024},
  publisher={IEEE}
}

@article{wu2024learning,
  title={A Learning-based Cooperative Navigation Approach for Multi-UAV Systems Under Communication Coverage},
  author={Wu, Di and Cao, Zhuang and Lin, Xudong and Shu, Feng and Feng, Zikai},
  journal={IEEE Transactions on Network Science and Engineering},
  year={2024},
  publisher={IEEE}
}

@article{li2024shapley,
  title={Shapley value: from cooperative game to explainable artificial intelligence},
  author={Li, Meng and Sun, Hengyang and Huang, Yanjun and Chen, Hong},
  journal={Autonomous Intelligent Systems},
  volume={4},
  number={1},
  pages={2},
  year={2024},
  publisher={Springer}
}

@inproceedings{chen2024stas,
  title={STAS: spatial-temporal return decomposition for solving sparse rewards problems in multi-agent reinforcement learning},
  author={Chen, Sirui and Zhang, Zhaowei and Yang, Yaodong and Du, Yali},
  booktitle={Proceedings of the AAAI Conference on Artificial Intelligence},
  volume={38},
  number={16},
  pages={17337--17345},
  year={2024}
}

@article{liu2024box,
  title={Box-Cox transformation based state-space modeling as a unified prognostic framework for degradation linearization and RUL prediction enhancement},
  author={Liu, Jie and Hou, Bingchang and Lu, Ming and Wang, Dong},
  journal={Reliability Engineering \& System Safety},
  volume={244},
  pages={109952},
  year={2024},
  publisher={Elsevier}
}

@article{chen2025investigations,
  title={Investigations on improved Box-Cox sparsity measures for machine condition monitoring},
  author={Chen, Tao and Guo, Liang and Gao, Hongli and Wang, Dong and Feng, Tingting and Yu, Yaoxiang},
  journal={ISA transactions},
  volume={157},
  pages={466--480},
  year={2025},
  publisher={Elsevier}
}

@article{zulkarnain2024optimizing,
  title={Optimizing Employee Performance Evaluation System for Performance-Based Bonus Allocation in DF Company},
  author={Zulkarnain, Adnan and Saputri, Cica Nanda Trisna},
  journal={Journal of Indonesian Economic Research},
  volume={2},
  number={1},
  pages={1--7},
  year={2024}
}
\newpage
\clearpage

\appendix

\section{Definitions and Theorems}
\subsection{Convex Game}
\begin{definition}\label{ap-definition-1}
    A transferable utility game $( \mathcal{N} , v )$ is convex if for $ \forall \mathcal{C} , \mathcal{D} \subseteq N$ the following condition holds:
    \begin{equation} 
        v ( \mathcal{C} \cup \mathcal{D} ) + v ( C \cap D  ) \geq v ( C  ) + v ( D )
    \end{equation}
\end{definition}
\begin{definition} \label{ap-definition-2}
    If $x(\mathcal{C}) = v(\mathcal{C})$ for any $\cal {C} \in \cal{CS}$ in an outcome $(\mathcal{CS}, x)$, then this outcome is efficient and maximazes the social welfare and maximizes the social welfare.
\end{definition}
\begin{definition} \label{ap-definition-3}
    Given a convex game $(\mathcal{N},v)$, the Core is the set of all stable solutions. For each $\mathcal{C} \in \mathcal{N}$, there is $ x ( C ) \geq v ( C )$, where $ x ( \mathcal{C} ) = \sum _ { i \in \mathcal{C} } x ^ { i }$.
\end{definition}
\begin{definition}
    The Core of a convex game must be non-empty.
\end{definition}

\subsection{Shapley Value}
\begin{theorem}
    Given a convex game $G=(\mathcal{N},v)$, the Shapley value must belong to the Core, and the assignment scheme is stable.
\end{theorem}
\section{Proofs}
\begin{figure*}[t] 
    \centering
    \subfloat[{Catch Abreast}]{%
        \includegraphics[width=0.33\linewidth]{figs/performance/dexhands/ShadowHandCatchAbreast.png}%
    }
    \hfill%
    \subfloat[{Two Catch Underarm}]{%
        \includegraphics[width=0.33\linewidth]{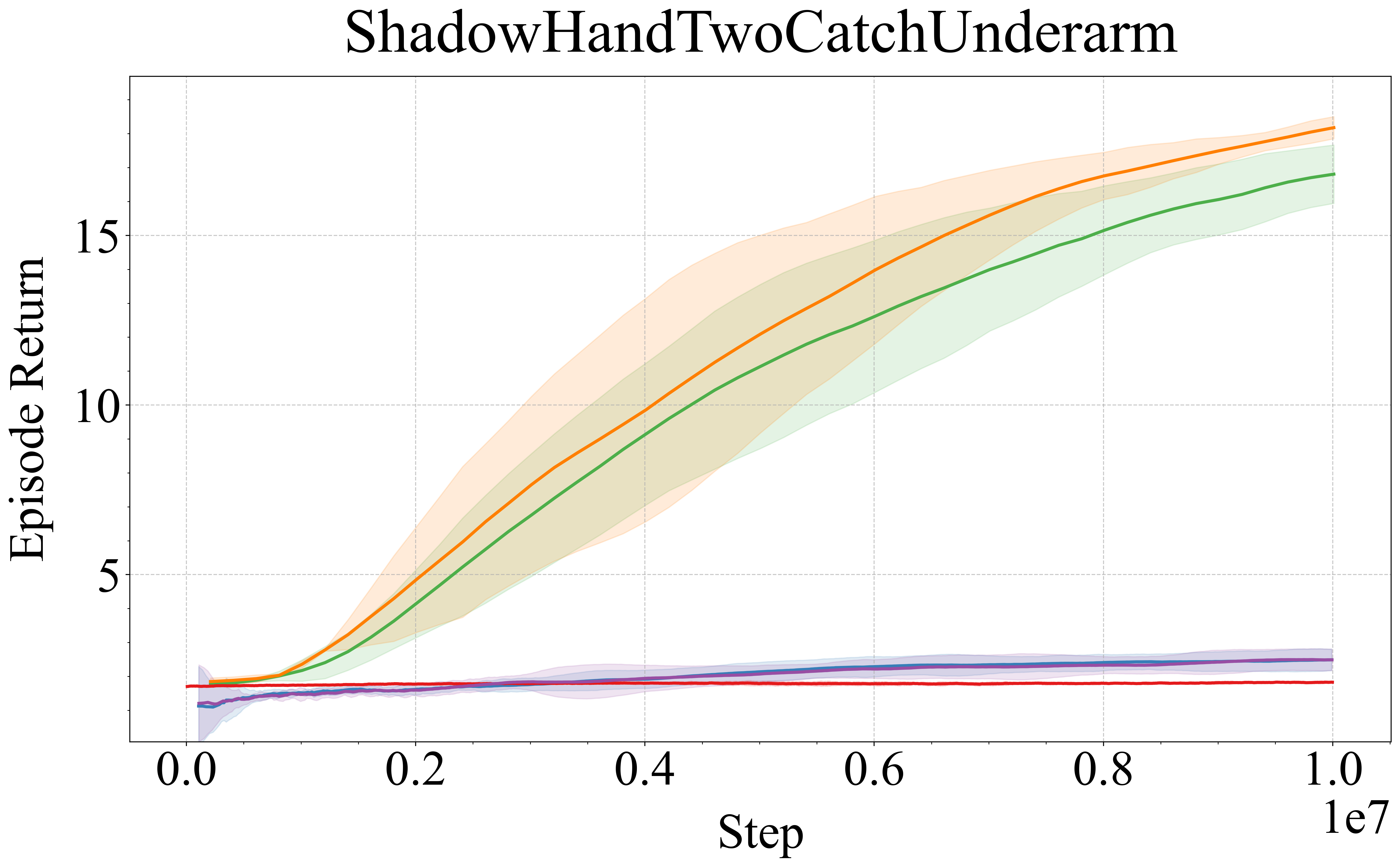}%
    }%
    \hfill%
    \subfloat[{Catch Over2Underarm}]{%
        \includegraphics[width=0.33\linewidth]{figs/performance/dexhands/ShadowHandCatchOver2Underarm.png}%
    }\\[1ex]
    \subfloat[{Door Close Inward}]{%
        \includegraphics[width=0.33\linewidth]{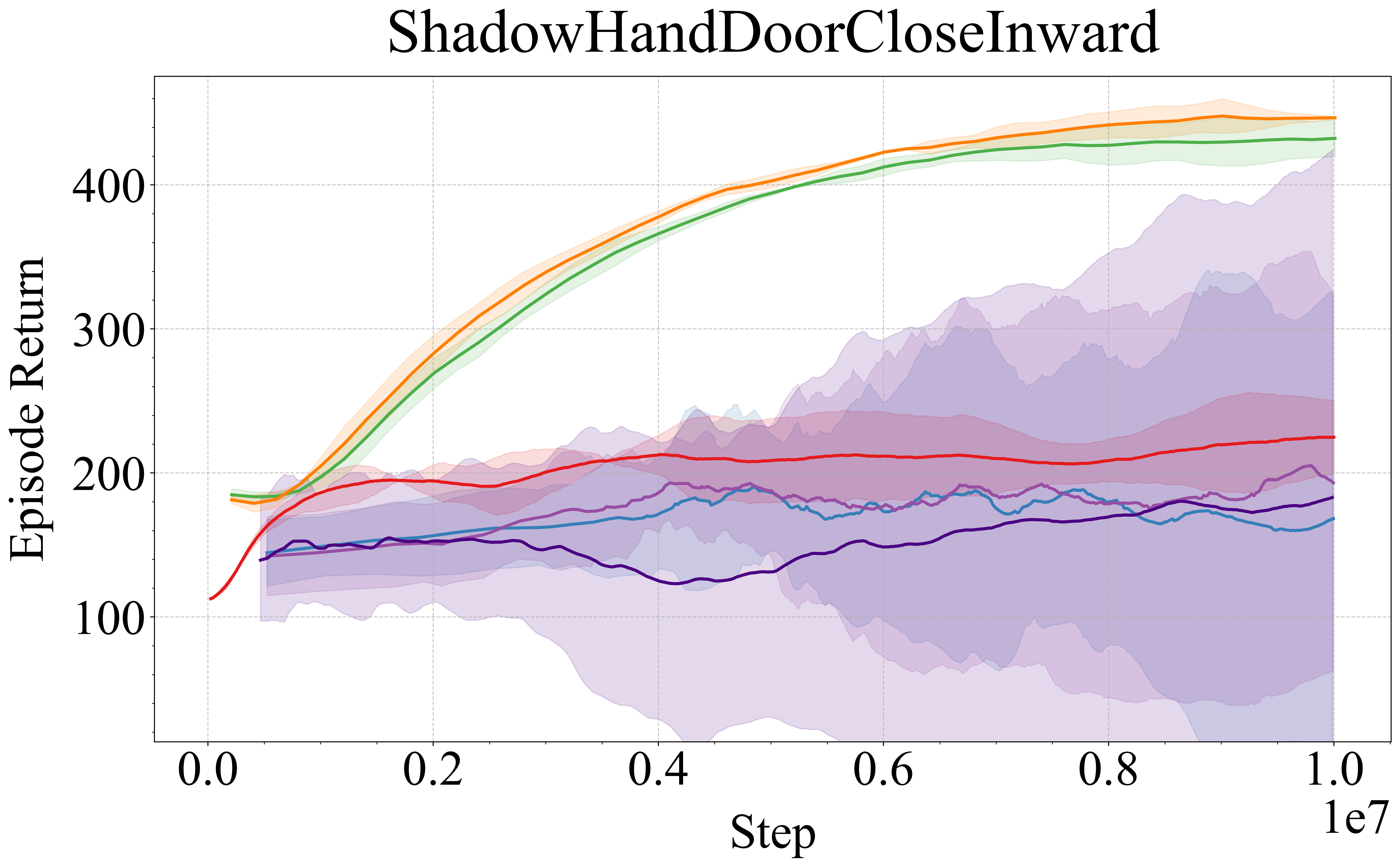}%
    }
    \hfill%
    \subfloat[{Door Open Inward}]{%
        \includegraphics[width=0.33\linewidth]{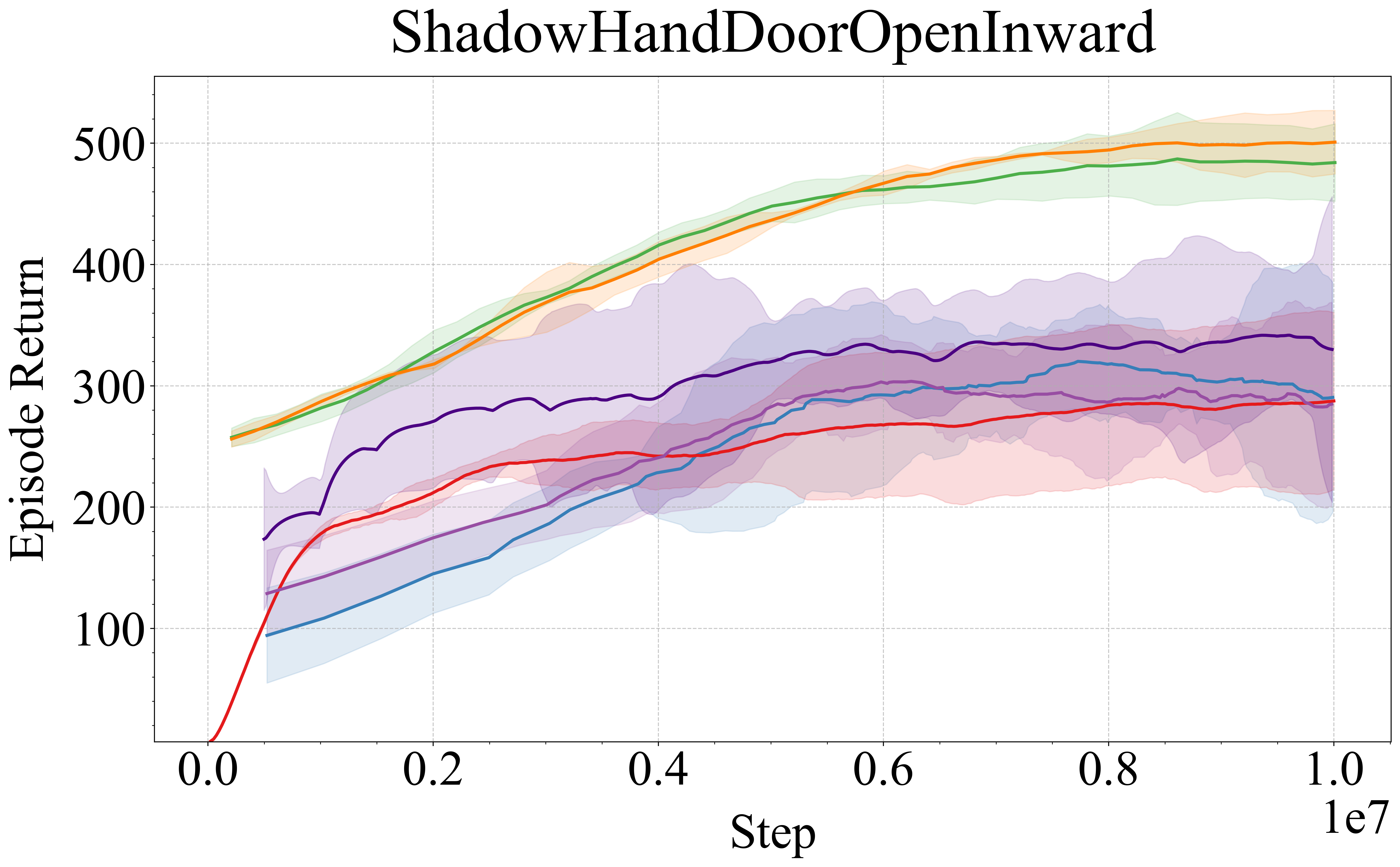}%
    }%
    \hfill%
    \subfloat[{Door Open Outward}]{%
        \includegraphics[width=0.33\linewidth]{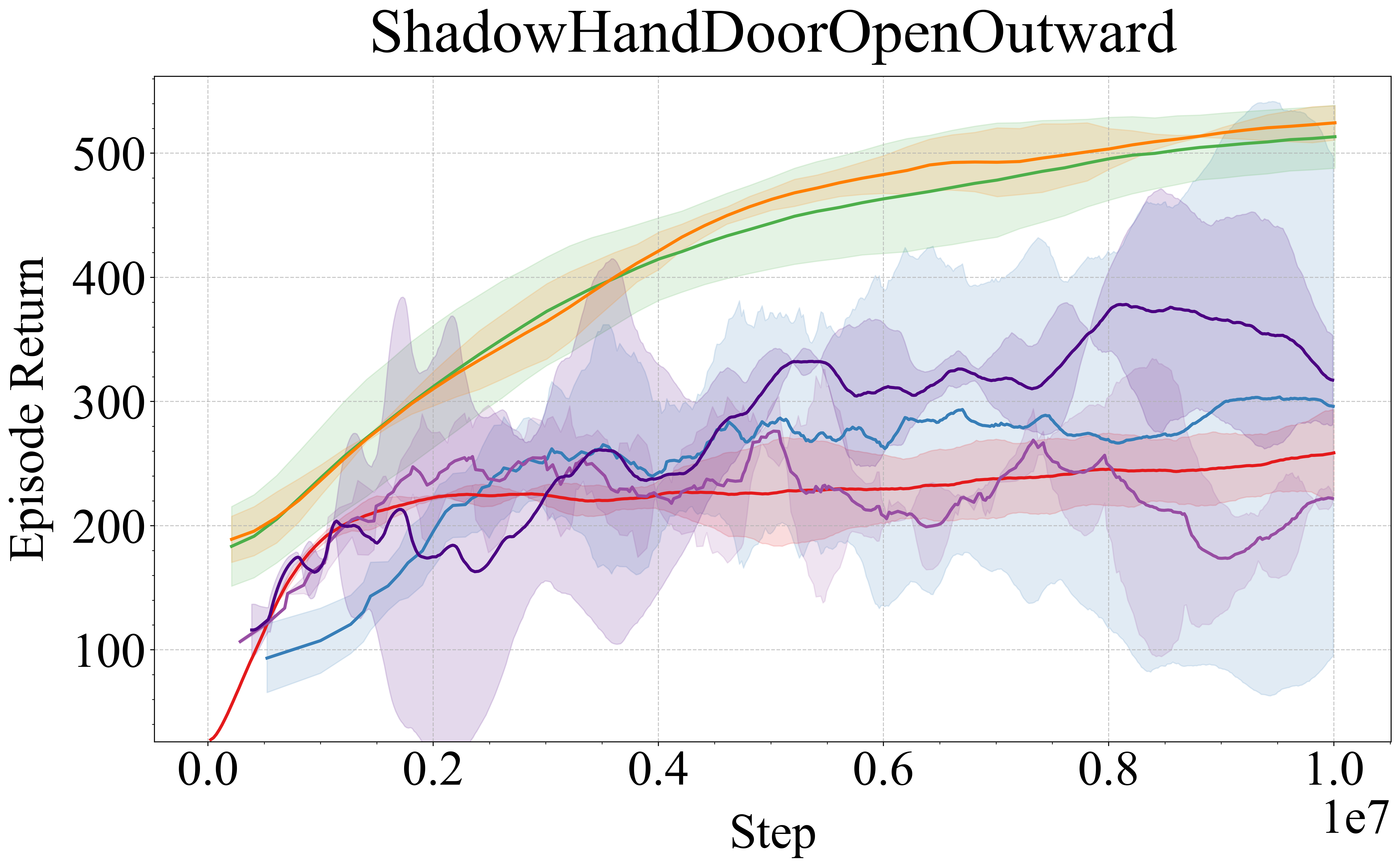}%
    }\\[1ex]
    \subfloat[{Lift Underarm}]{%
        \includegraphics[width=0.33\linewidth]{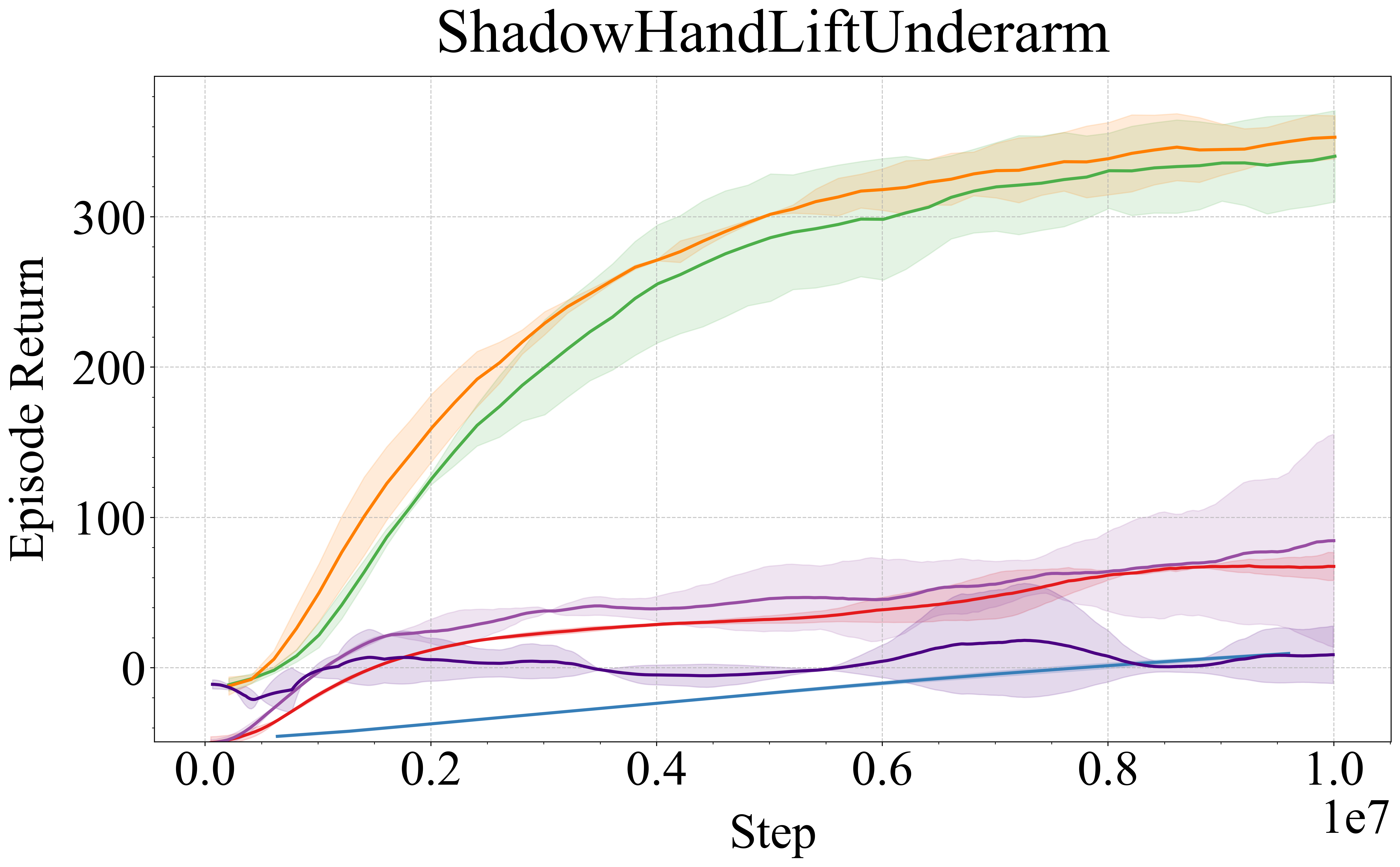}%
    }
    \hfill%
    \subfloat[{Hand Over}]{%
        \includegraphics[width=0.33\linewidth]{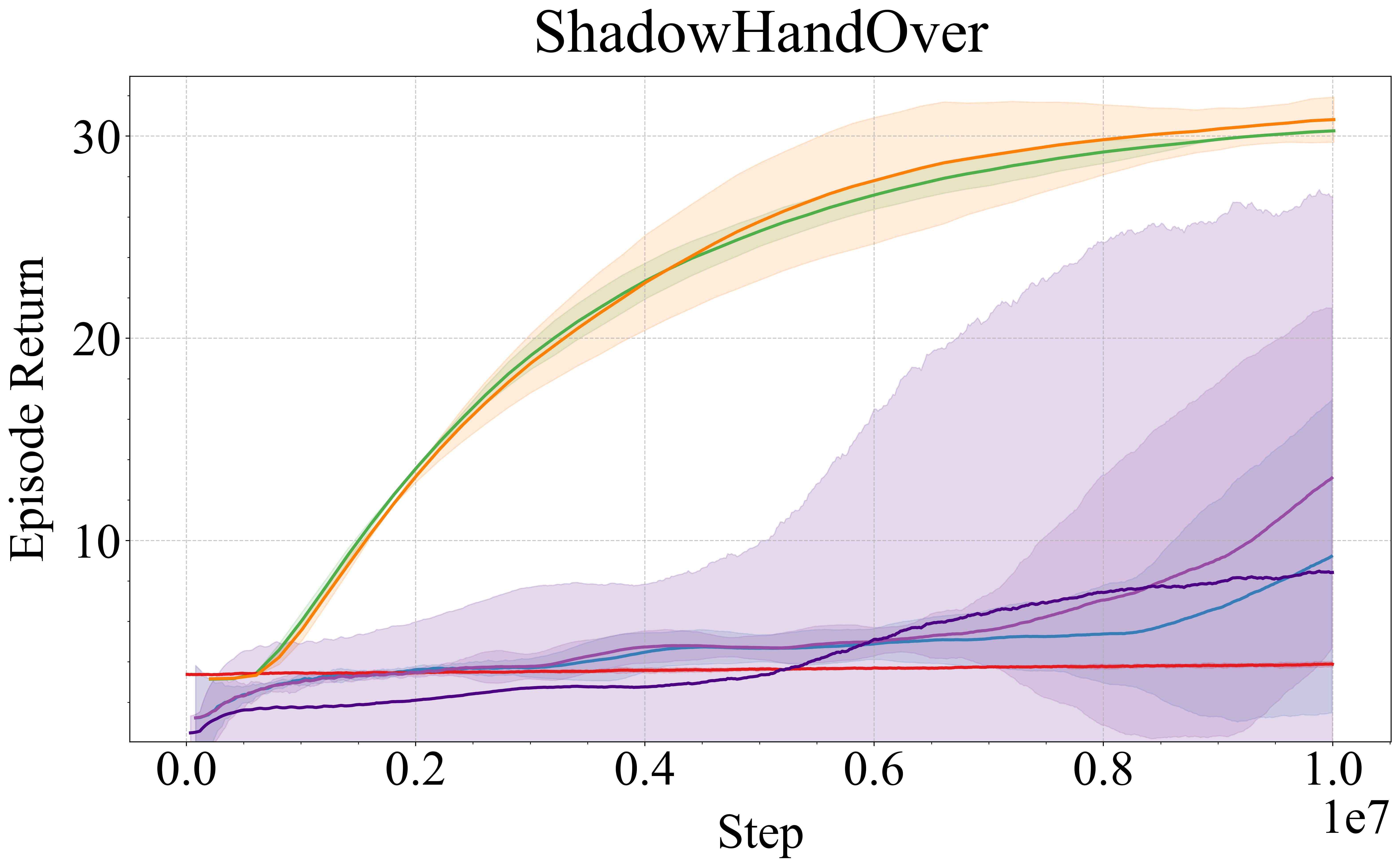}%
    }%
    \hfill%
    \subfloat[{Hand Pen}]{%
        \includegraphics[width=0.33\linewidth]{figs/performance/dexhands/ShadowHandPen.png}%
    }%
    \caption{Performance of algorithms in different tasks in Bi-DexHands.}
    \label{fig:DexHands}
\end{figure*}
\begin{figure*}[t] 
    \centering
    \subfloat[{2x4-Agent Ant}]{%
        \includegraphics[width=0.33\linewidth]{figs/performance/mamujoco/Ant-v2-2x4.png}%
    }
    \hfill%
    \subfloat[{4x2-Agent Ant}]{%
        \includegraphics[width=0.33\linewidth]{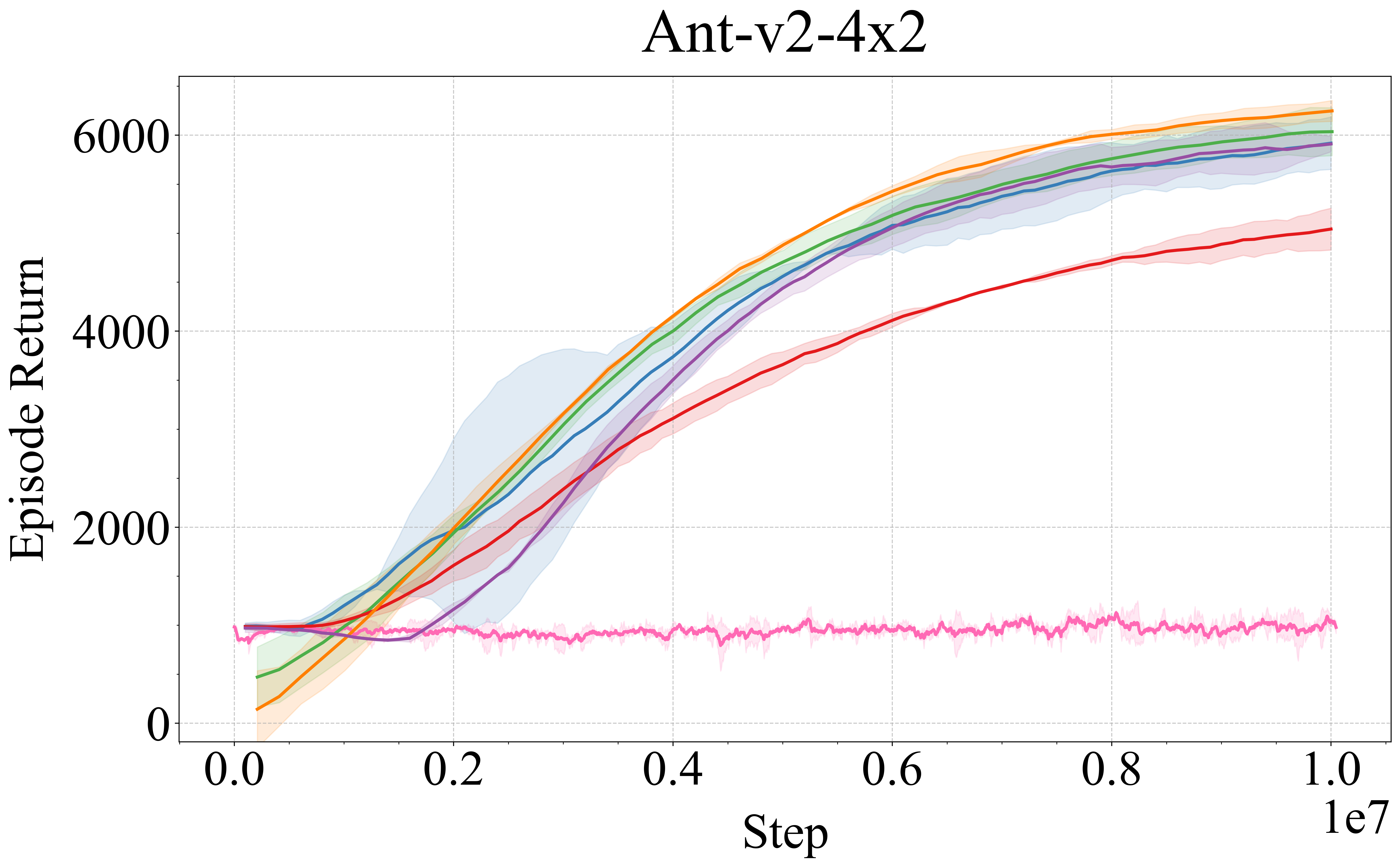}%
    }%
    \hfill%
    \subfloat[{8x1-Agent Ant}]{%
        \includegraphics[width=0.33\linewidth]{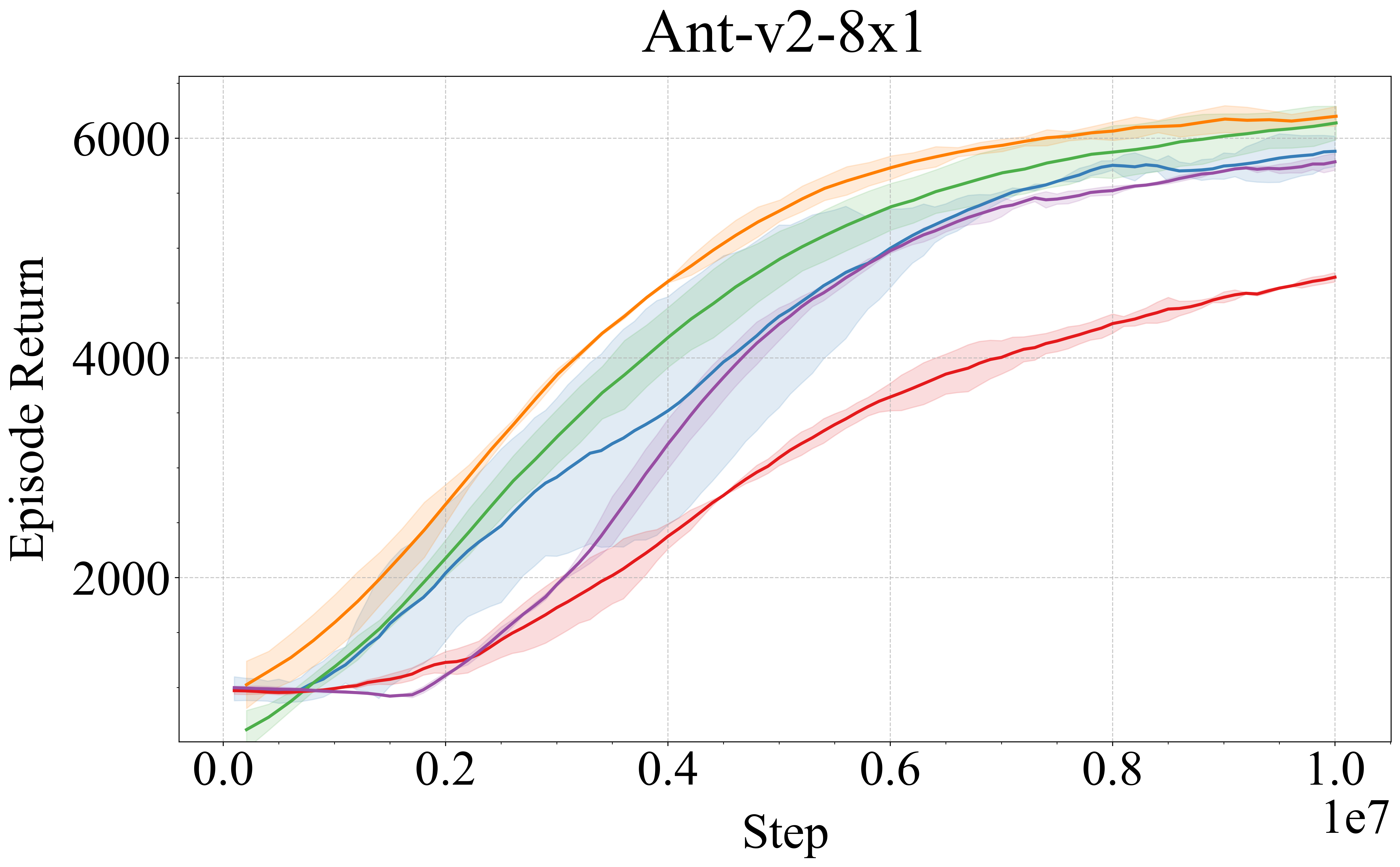}%
    }\\[1ex]
    \subfloat[{2x3-Agent HalfCheetah}]{%
        \includegraphics[width=0.33\linewidth]{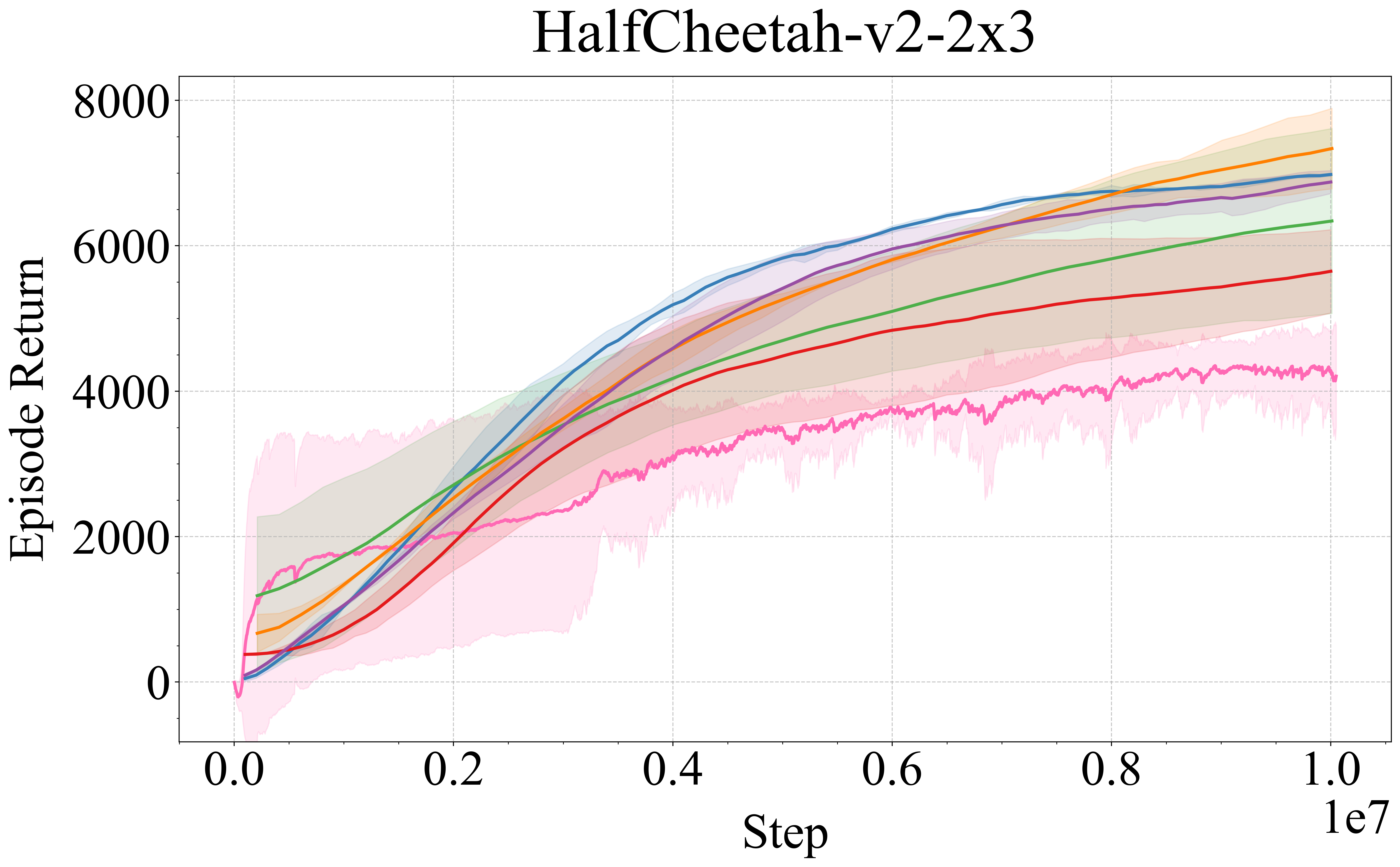}%
    }%
    \hfill%
    \subfloat[{3x2-Agent HalfCheetah}]{%
        \includegraphics[width=0.33\linewidth]{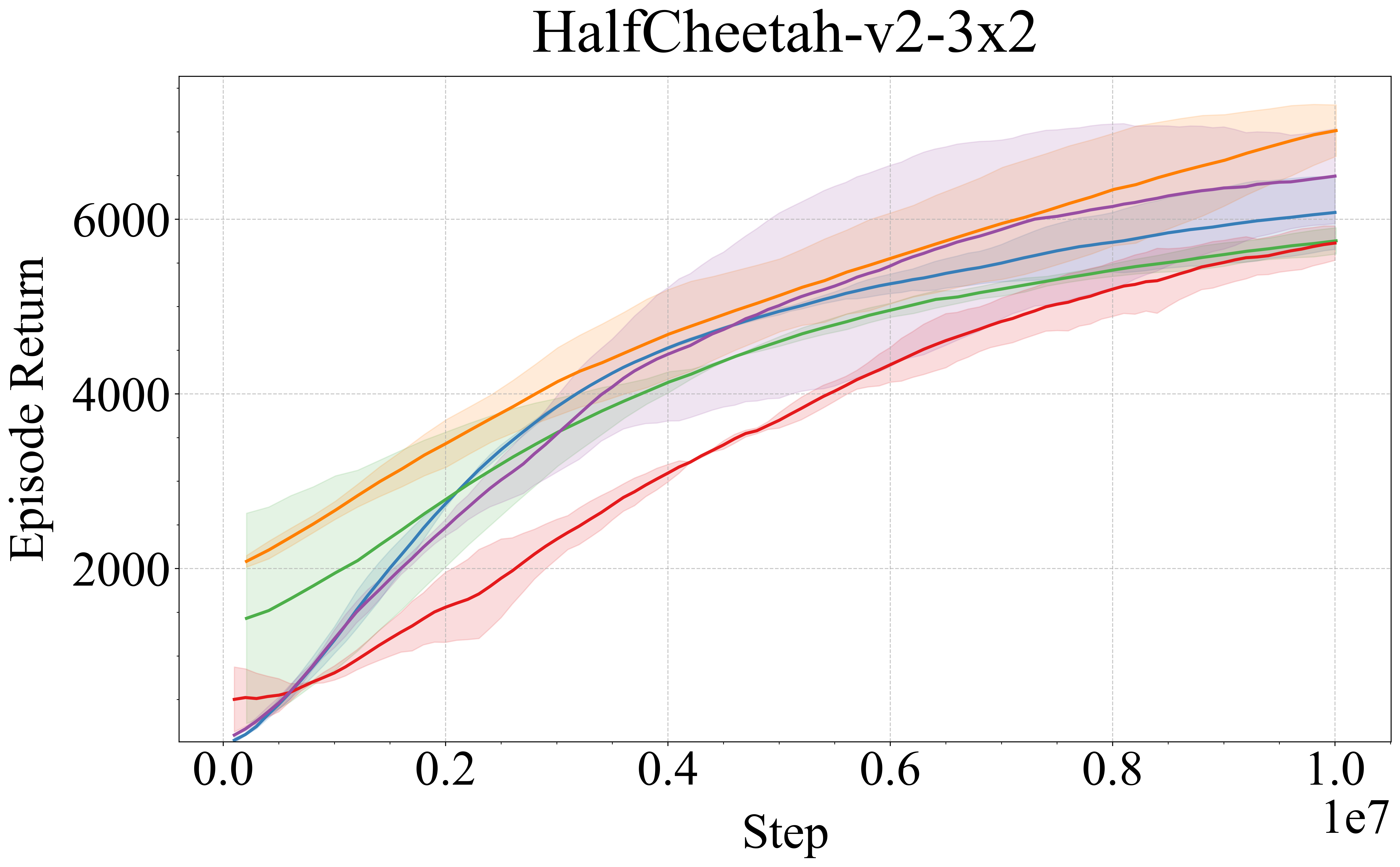}%
    }%
    \hfill%
    \subfloat[{6x1-Agent HalfCheetah}]{%
        \includegraphics[width=0.33\linewidth]{figs/performance/mamujoco/HalfCheetah-v2-6x1.png}%
    }\\[1ex]
    \subfloat[{10x2-Agent manyagent\_swimmer}]{%
        \includegraphics[width=0.33\linewidth]{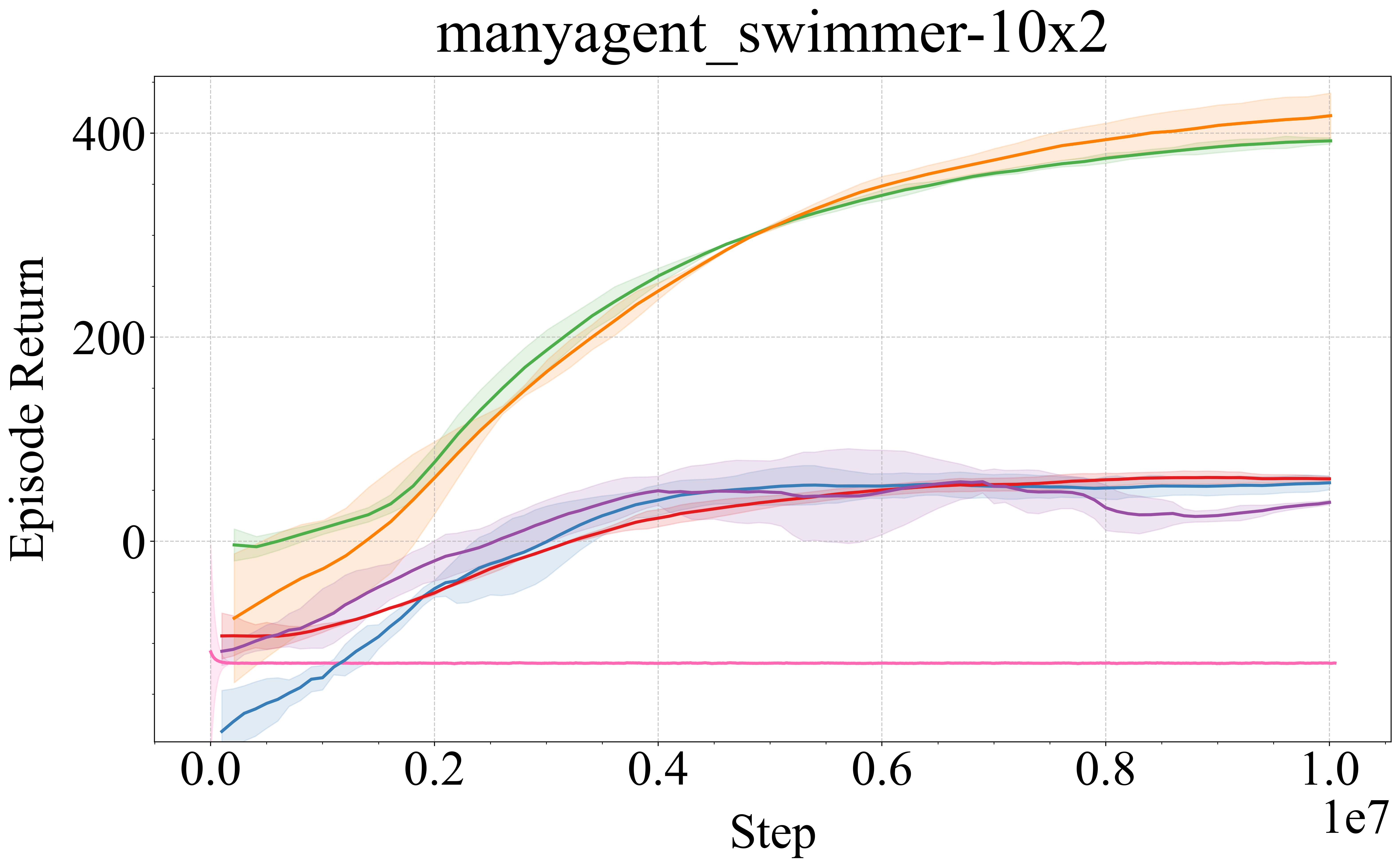}%
    }%
    \hfill%
    \subfloat[{2x3-Agent Walker2d}]{%
        \includegraphics[width=0.33\linewidth]{figs/performance/mamujoco/Walker2d-v2-2x3.png}%
    }%
    \hfill%
    \subfloat[{6x1-Agent Walker2d}]{%
        \includegraphics[width=0.33\linewidth]{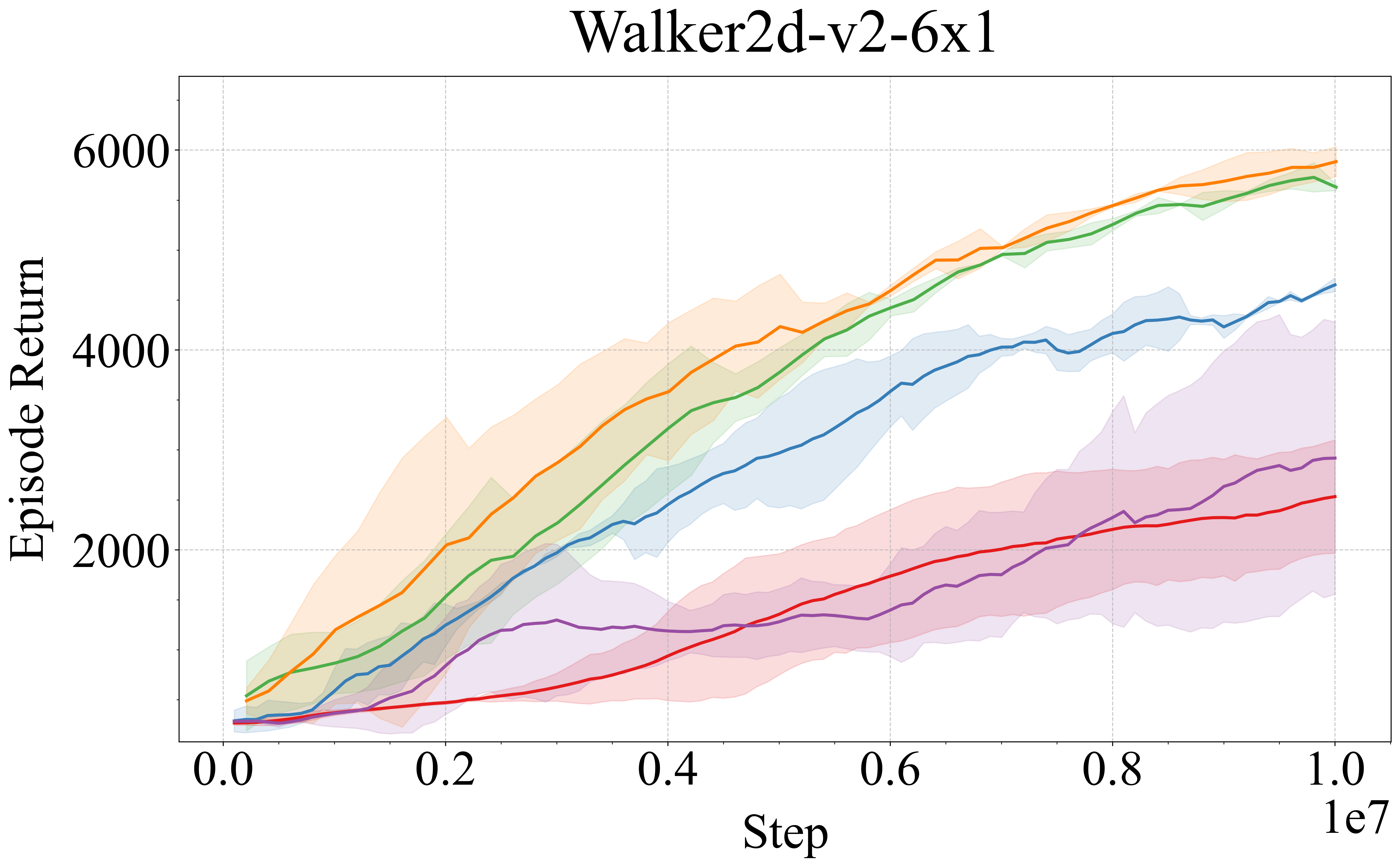}%
    }%
    \caption{Performance of algorithms in different tasks in MAMuJoCo.}
    \label{fig:MuJoCo}
\end{figure*}

\subsection{Proof of Theorem 2}
\begin{theorem} \label{theorem-ap-2}
    Given a convex game $G=(\mathcal{N},v)$, the hybrid assignment outcome $(\mathcal{N},x)$ is efficient, and the payoff vector $x$ is an efficient reward reallocation.
\end{theorem}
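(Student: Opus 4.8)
The plan is to reduce the whole claim to a single summation identity. Under the hybrid mechanism the relevant coalition structure is just the grand coalition $\mathcal{N}$, so by Definition~\ref{definition-2} the outcome $(\mathcal{N},x)$ is efficient exactly when $x(\mathcal{N})=\sum_{i\in\mathcal{N}}x^{i}$ equals $v(\mathcal{N})$. Thus the entire proof amounts to verifying this one equality, and everything else is bookkeeping.

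First I would substitute the definition of the hybrid payoff $x^{i}=\frac{v(\mathcal{N})}{2|\mathcal{N}|}+\Phi_{i}$ and split the sum into a base-reward part and a Shapley part, writing $x(\mathcal{N})=\sum_{i\in\mathcal{N}}\frac{v(\mathcal{N})}{2|\mathcal{N}|}+\sum_{i\in\mathcal{N}}\Phi_{i}$. The base-reward term consists of $|\mathcal{N}|$ identical summands and therefore collapses to $\frac{v(\mathcal{N})}{2}$. For the Shapley term, the key observation is that each $\Phi_{i}$ is computed with respect to the \emph{second half} of the value function, that is with respect to $v/2$ rather than $v$. Invoking the efficiency property of the Shapley value (property~3 in the Shapley Value subsection) applied to this halved game — equivalently, using linearity of the Shapley value in the characteristic function — gives $\sum_{i\in\mathcal{N}}\Phi_{i}=\frac{1}{2}v(\mathcal{N})$.

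Adding the two contributions then yields $x(\mathcal{N})=\frac{v(\mathcal{N})}{2}+\frac{v(\mathcal{N})}{2}=v(\mathcal{N})$, which is precisely the efficiency condition of Definition~\ref{definition-2}, so the hybrid outcome $(\mathcal{N},x)$ is an efficient reward reallocation.

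Rather than a genuine obstacle, the one point requiring care is the factor of one half: one must remember that the Shapley values are taken over the halved value function, so the Shapley efficiency property contributes $v(\mathcal{N})/2$ and not $v(\mathcal{N})$. It is exactly this scaling that allows the evenly shared base-reward half and the Shapley-distributed half to recombine into the full grand-coalition value. Notably, convexity of the game is not needed anywhere in this argument — efficiency follows from the Shapley efficiency axiom alone; convexity will only become essential for the stability (Core) claim established in Theorem~\ref{theorem-3}.
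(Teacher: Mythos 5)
Your proposal is correct, and it follows the same overall decomposition as the paper: split $x(\mathcal{N})$ into the evenly shared base half, which trivially sums to $v(\mathcal{N})/2$, plus the Shapley half, and show the latter also contributes $v(\mathcal{N})/2$. The only genuine difference lies in how that second identity is justified. You invoke the efficiency axiom of the Shapley value (equivalently, linearity in the characteristic function) applied to the halved game $v/2$, so that $\sum_{i\in\mathcal{N}}\Phi_i=\tfrac{1}{2}v(\mathcal{N})$ comes for free from a property the paper already lists in its preliminaries. The paper instead re-derives this fact from scratch: it expands $\sum_i \Phi_i$ via the Shapley formula and performs the combinatorial cancellation explicitly, showing that each proper nonempty coalition $\mathcal{C}$ with $|\mathcal{C}|=p$ appears with net coefficient $p\,\tfrac{(p-1)!(n-p)!}{n!}-(n-p)\,\tfrac{p!(n-p-1)!}{n!}=0$, leaving only the $v(\mathcal{N})/2$ term. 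Your route is shorter and cleaner but leans on a cited axiom; the paper's is self-contained and makes the telescoping mechanism visible, at the cost of a page of bookkeeping. Your closing observation that convexity plays no role in efficiency (only in the Core membership of Theorem~\ref{theorem-3}) is also accurate and consistent with the paper's proof, which never uses Definition~1.
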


\begin{proof}
    $x(\mathcal{N})$ is defined as the sum of the payoff vector under the grand coalition $\mathcal{N}$:
    \begin{equation}
        \begin{split}
            x(\mathcal{N}) &= \sum_{i\in \mathcal{N}}x^{i}  \\
            &= \sum_{i\in \mathcal{N}}\sum_{\mathcal{C}\subseteq N\setminus\{i\}} \frac{|\mathcal{C}|!(n-|\mathcal{C}|-1)!}{n!}\left[ \frac{v(\mathcal{C}\cup\{i\})}{2} \right. \\
            & \quad \left. - \frac{v(\mathcal{C})}{2} \right] + \sum_{i\in \mathcal{N}}\frac{v(\mathcal{N})}{2|\mathcal{N}|}.
        \end{split}
    \end{equation}
    Since $v(\emptyset)=0$, and it only appears once with $v(\mathcal{N})$ in the summation process, the value of the coalition $v(\mathcal{C})$ consisting of $p$ agents will appear $n-p$ times in the form of a negative sign and $p$ times in the form of a positive sign, which can be expanded as follows:
    \begin{equation}
        \begin{split}
        \sum_{i\in N}x^{i} &= n \frac{(n - 1)!1!}{n!} \frac{v(\mathcal{N})}{2} + n \frac{0!(n - 1)!}{n!} \frac{v(\emptyset)}{2} \\
        &\quad + \sum_{\substack{\mathcal{C} \subset \mathcal{N}, \mathcal{C} \neq \emptyset, \\ |\mathcal{C}| = p}} \left( p \frac{(p - 1)!(n - p)!}{n!} - \right. \\
        &\quad \left. (n - p) \frac{p!(n - p - 1)!}{n!} \right) \frac{v(\mathcal{C} )}{2} + \frac{v(\mathcal{N})}{2} \\
        &= \frac{v(\mathcal{N})}{2}+\frac{v(\mathcal{N})}{2}\\
        &= v(\mathcal{N}).
        \end{split}
        \tag{4}
    \end{equation}
    According to Definition \ref{ap-definition-2}, the output result $(N,x)$ of the hybrid assignment mechanism in the grand coalition convex game is efficient and can maximize social welfare.
\end{proof}
\subsection{Proof of Theorem 3}
\begin{theorem} \label{theorem-ap-3}
    Given a convex game $G=(\mathcal{N},v)$, the hybrid assignment outcome $(\mathcal{N},x)$ is in the Core, and the payoff vector $x$ is a stable reward reallocation.
\end{theorem}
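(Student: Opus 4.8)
The plan is to verify the two defining requirements of the Core as given in Definition \ref{ap-definition-3}: efficiency, $x(\mathcal{N}) = v(\mathcal{N})$, and coalitional rationality, $x(\mathcal{C}) \geq v(\mathcal{C})$ for every $\mathcal{C} \subseteq \mathcal{N}$. The efficiency requirement is already supplied by Theorem \ref{theorem-ap-2}, so the entire task reduces to establishing the coalitional-rationality inequality. My approach is to expand $x(\mathcal{C})$ through the definition of the hybrid payoff, separate it into the equal-share term and the Shapley term, and bound each piece from below by $\tfrac{1}{2}v(\mathcal{C})$.

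Concretely, since the Shapley value is linear in the underlying value function, the payoff computed on the half-game $v/2$ satisfies $\Phi_i = \tfrac{1}{2}\Phi_i^{v}$, where $\Phi_i^{v}$ denotes the ordinary Shapley value of $(\mathcal{N},v)$. Summing the hybrid payoffs over $\mathcal{C}$ then gives
\begin{equation}
x(\mathcal{C}) = \sum_{i\in\mathcal{C}} x^{i} = \frac{|\mathcal{C}|}{2|\mathcal{N}|}\,v(\mathcal{N}) + \frac{1}{2}\sum_{i\in\mathcal{C}}\Phi_i^{v}.
\end{equation}
For the second term I would invoke the result, recorded earlier, that the Shapley value of a convex game lies in its Core; this yields $\sum_{i\in\mathcal{C}}\Phi_i^{v} \geq v(\mathcal{C})$ and hence $\tfrac{1}{2}\sum_{i\in\mathcal{C}}\Phi_i^{v} \geq \tfrac{1}{2}v(\mathcal{C})$. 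It then remains only to show that the equal-share term dominates the other half, that is,
\begin{equation}
\frac{|\mathcal{C}|}{2|\mathcal{N}|}\,v(\mathcal{N}) \geq \frac{1}{2}v(\mathcal{C}), \qquad \text{equivalently} \qquad \frac{|\mathcal{C}|}{|\mathcal{N}|}\,v(\mathcal{N}) \geq v(\mathcal{C}).
\end{equation}
Adding the two halves would then give $x(\mathcal{C}) \geq v(\mathcal{C})$, completing the argument.

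The step I expect to be the main obstacle is precisely this per-capita inequality $\tfrac{|\mathcal{C}|}{|\mathcal{N}|}v(\mathcal{N}) \geq v(\mathcal{C})$. It does not follow from the convexity condition of Definition \ref{ap-definition-1} on its own: convexity (equivalently, superadditivity together with increasing marginal contributions) constrains the marginal contribution of a \emph{fixed} agent as the coalition around it grows, whereas the inequality above compares the \emph{per-capita} value of an arbitrary coalition with that of the grand coalition, and a convex game may perfectly well assign a small proper coalition more than its proportional share of $v(\mathcal{N})$. To discharge this step I would make explicit the normalization and structural hypotheses available in the cooperative-MARL setting — in particular $v(\emptyset)=0$ together with the property that the per-capita value $v(\mathcal{C})/|\mathcal{C}|$ is nondecreasing in the coalition size along chains, so that no proper coalition captures a disproportionate share of the global return — and verify that these indeed imply $v(\mathcal{C}) \leq \tfrac{|\mathcal{C}|}{|\mathcal{N}|}v(\mathcal{N})$. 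Pinning down the minimal such assumption, and checking that it is genuinely entailed by the convex-game model used here rather than merely plausible in practice, is the crux of the argument and the part most in need of care.
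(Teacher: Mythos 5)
Your reduction is sound as far as it goes: by linearity of the Shapley value, $x(\mathcal{C})=\frac{|\mathcal{C}|}{2|\mathcal{N}|}v(\mathcal{N})+\frac{1}{2}\sum_{i\in\mathcal{C}}\Phi_i^{v}$, and Theorem~1 gives $\sum_{i\in\mathcal{C}}\Phi_i^{v}\geq v(\mathcal{C})$, so everything indeed hinges on the per-capita inequality $\frac{|\mathcal{C}|}{|\mathcal{N}|}v(\mathcal{N})\geq v(\mathcal{C})$. Your suspicion about that step is not just a technical worry — it is decisive. The inequality does not follow from convexity, and without it the theorem is false for the allocation $x^{i}=\frac{v(\mathcal{N})}{2|\mathcal{N}|}+\Phi_i$ actually defined in Section~4.1. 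Take
\begin{equation*}
\mathcal{N}=\{1,2\},\qquad v(\emptyset)=0,\quad v(\{1\})=1,\quad v(\{2\})=0,\quad v(\{1,2\})=1 .
\end{equation*}
This game is convex and superadditive (the only nontrivial check, $v(\{1,2\})+v(\emptyset)\geq v(\{1\})+v(\{2\})$, holds with equality), and its Core is the single point $(1,0)$. The full-game Shapley value is $\Phi^{v}=(1,0)$, so the hybrid allocation is $x=\left(\tfrac14+\tfrac12,\ \tfrac14+0\right)=\left(\tfrac34,\tfrac14\right)$, which is efficient but violates $x(\{1\})\geq v(\{1\})$: agent $1$ prefers to leave. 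So no proof of the statement as written can exist; an additional hypothesis of exactly the kind you propose (ruling out coalitions whose per-capita value exceeds that of the grand coalition) is genuinely required.

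You should also know how the paper's own proof (Appendix~B.2) relates to this. It replaces the Shapley part by the identity-permutation marginal vector of the half-game, $\delta^{i^{j}}=\frac12\left[v(\{1,\ldots,i^{j}\})-v(\{1,\ldots,i^{j}-1\})\right]$, expands $v(\mathcal{C})$ telescopically, and bounds each telescoping term by $\delta^{i^{j}}$ via convexity — the standard ``marginal vectors of a convex game lie in the Core'' argument, which (after averaging over permutations) is equivalent to the appeal to Theorem~1 that you make. The difference is in the base-share term: the paper's proof assigns each member of the tested coalition the share $\frac{v(\mathcal{C})}{2k}$ with $k=|\mathcal{C}|$, i.e.\ an equal split of the \emph{deviating} coalition's own value, rather than the fixed grand-coalition share $\frac{v(\mathcal{N})}{2n}$ from the mechanism's definition. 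Since that term is added to both sides of the per-agent inequality, summing over $\mathcal{C}$ contributes $\frac{v(\mathcal{C})}{2}$ to both sides and it cancels: what is actually established is only $\frac{v(\mathcal{C})}{2}\leq\frac12\sum_{j}\delta^{i^{j}}$, the half-game Core property. Passing from that to the claim about the real allocation requires $\frac{v(\mathcal{C})}{|\mathcal{C}|}\leq\frac{v(\mathcal{N})}{|\mathcal{N}|}$ — precisely the condition you isolated, which the paper silently assumes by conflating the two base shares (and which my example above violates). In short, your blind attempt, including its admitted incompleteness, is a more faithful account of the mathematics than the paper's proof: the decomposition is right, the gap is real, and closing it requires either adding the per-capita assumption or redefining the base share to be coalition-relative.
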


\begin{proof}
    This paper allocates half of  $v(\mathcal{C})$ as the basis for computing the Shapley value.
    According to the definition of the Shapley value, we have:
    \begin{equation}
        \Phi_{i}(\mathcal{C})\!=\sum_{\mathcal{C}\subseteq\mathcal{N}\setminus\{i\}}\frac{|\mathcal{C}|!(|\mathcal{N}|-|\mathcal{C}|-1)!}{|\mathcal{N}|!}[\frac{v(\mathcal{C}\cup\{i\})\!-\!v(\mathcal{C})}2].
    \end{equation}
    Next, the marginal contribution can be used to define the payoff vector $\delta^{i}(\mathcal{C})=\frac{(v({\cal C}\cup\{i\})-v({\cal C}))}{2}$ for the Shapley value part.
    The reward vector of each agent is expressed as $x^{i}=(\frac{v(\mathcal{C}) }{2*k}+\delta^{i}(\mathcal{C}))$, where $k$ is the number of players in the coalition $\mathcal{C}$.

    Then, we will prove that the joint payoff vector $x$ of multiple agents is inside the Core. For any coalition ${\cal C}=\{i^{1},i^{2},\cdot\cdot\cdot,i^{k}\}\subseteq{\cal N}=\{1,...,n\}$, $v(\mathcal{C})$ can be expanded in series:
    \begin{equation}
        \begin{array}{c c c}{{v(\mathcal{C})=v(\{i^{1}\})-v(\emptyset)+v(\{i^{1},i^{2}\})}}\\ {{-v(\{i^{1}\})+\cdot\cdot\cdot+v(\mathcal{C})-v(\mathcal{C}\setminus\{i^{k}\}).}}\end{array}
    \end{equation}

    Without loss of generality, assume that $i^{1} < i^{2}< \ldots < i^{k}$. Let the coalition $\mathcal{D} =\{1,2,\ldots,i^{j-1}\}$, where $ j \in \left\{ 1 , \ldots , k \right\}$.
    According to Definition \ref{ap-definition-1}, we can get:
    \begin{equation}
        \begin{array}{l}{{v(\{i^{1},\ldots,i^{j-1},i^{j}\})-v(\{i^{1},\ldots,i^{j-1}\})}}\\ {{\leq\ v(\{1,\ldots,i^{j-1},i^{j}\})-v(\{1,\ldots,i^{j-1}\})}}\\ {{=\delta^{i^{j}}}}\\{{=v(\mathcal{D}\cup\{i^j\})-v(\mathcal{D})}.}\end{array}
    \end{equation}
    Divide the equation by 2 and then add $\frac{v(\mathcal{C}) }{2*k}$ to both sides to get:
    \begin{equation}
        \begin{array}{l}{\frac{v(\{i^{1},\ldots,i^{j-1},i^{j}\})-v(\{i^{1},\ldots,i^{j-1}\})}2}{+\frac{v(\mathcal{C}) }{2*k}}\\ {{\leq\frac{v(\mathcal{C}) }{2*k}+\frac{\delta ^{i^j}}2}}\\{{=\frac{v(\mathcal{C}) }{2*k}+\frac{v(\mathcal{D}\cup\{i^j\})-v(\mathcal{D})}2}}\\{{=\frac{v(\mathcal{C}) }{2*k}+\delta^{i^{j}}}}\\{{=x^{i^j}.}}\end{array}
    \end{equation}
    Then let $j$ from 1 to $k$ to sum the equation to get:
    \begin{equation}
        v({\mathcal{C}})\leq\sum_{j=1}^{k}x^{i^{j}}\,=\,x({\mathcal{C}}).
    \end{equation}
    According to Definition \ref{ap-definition-3}, the output of the hybrid assignment mechanism proposed in this paper is a stable solution in the Core, and any coalition $\mathcal{C}$ will not immediately form a large coalition. Therefore, for the convex game $G=(\mathcal{N},v)$, the result $(\mathcal{N},x)$ output by the hybrid mechanism is in the Core and is a stable reward reallocation.
\end{proof}

\subsection{Proof of Shapley Q-value Policy Gradient}
This paper introduces a historical interaction information-based Shapley Q-value policy gradient method, which aims to maximize the reward $ Q ^ { \Phi _ { i } } ( s , a ^ { i } )$ obtained by the current agent. This method effectively leverages historical interactions to improve both the efficiency and stability of the credit assignment process. The policy gradient can be defined as:
\begin{equation} \label{J_AP_Phi} 
    \! {\nabla_\theta}J_{i}\left(\theta \right)=\! \mathop \mathbb{E} \limits_{ \left({s}_t, {a}_t\right) \sim {\cal{D}}} \! \left[ \! \nabla_\theta  {\cal BC} \! \left( \! \log {\pi _\theta } \! \left({a}^i_t|s_t \right) \! \right)   Q_{\psi}^{\Phi_{i}}\left( s_t, {a}_t^i \right) \right]
\end{equation}
where $\cal{D}$ represents replay buffer, $\log {\pi _\theta } \! \left({a}^i_t|s_t \right)$ is the log probability of historical actions under the current policy.
Next, we derive the policy gradient algorithm as follows:
\begin{proof}
    In off-policy algorithms, the policy update only uses the action of the actor network $\pi_\theta$ sampling to calculate the gradient, and when the historical behaviors are used to calculate the gradient, the distribution of data will be inconsistent and policy entropy does not exist. The policy gradient for using historical data is as follows:
    \begin{equation} \label{eq_ade1}
        \begin{aligned}
        {\nabla_\theta} \! J_{i}\! \left(\!\theta \!\right) \!=\! {\mathop \mathbb{E}  \limits_{\scriptstyle \left( \! {s}_t,{a}_t \!\right) \sim \! {\cal{D}} \hfill}} \! \Bigl( \! {\nabla_\theta} \! \log \! {\pi_\theta} \! \left(\! {a}^i_t | s_t \! \right) \! Q_{\psi}^{\Phi_i} \! \left( \!{s}_t, {a}_t^i \!\right) \!  \!\Bigr)\!.\!
        \end{aligned}
    \end{equation}
    Next, we derive the policy gradient according to Eq. \ref{eq_ade1}.
    Assume that $X_{act}$ is a Gaussian random variable with mean $\mu$ and variance $\sigma^2$.
    $x_{act}$ satisfies $x_{act} \sim X_{act}$. Thus, $X_{act}$ and $x_{act}$ can be viewed as policy $\pi$ and action sampled from the policy. Then, the random variable $X_{act}$ satisfies:
    \begin{equation} \label{eq_gauss}
        \begin{aligned}
        f\left( x_{act} \right) = \frac{1}{\sqrt{2\pi} \sigma } {e^{-\frac{\left({x_{act}-\mu}\right)^2} {2{\sigma^2}}}}, \quad x_{act} \sim X_{act}.
        \end{aligned}
    \end{equation}
    Taking the logarithm of Eq. \ref{eq_gauss} gives the probability distribution:
    \begin{equation} \label{eq-loggauss}
        \begin{aligned}
        \log f\left( x_{act} \right) = - \frac{\left(x_{act} - \mu \right)^2}{2{\sigma ^2}} - \log \sigma - \log \left( {\sqrt {2\pi } } \right).
        \end{aligned}
    \end{equation}
    In Eq. \ref{eq-loggauss}, we can get:
    \begin{equation} \label{eq12}
        \begin{aligned}
        - \frac{\left(x_{act} - \mu \right)^2}{2{\sigma ^2}} = - \frac{1}{2}{\left( {\frac{{x_{act} - \mu }}{\sigma }} \right)^2}.
        \end{aligned}
    \end{equation}
    where ${\left(\frac{x_{act} -\mu}{\sigma}\right)^2}$ follows a Chi-square distribution with 1 degree of freedom. And when the standard deviation $\sigma$ is within a certain range, the formula $-\log \sigma - \log \left( {\sqrt {2\pi } } \right)$ in Eq. \ref{eq-loggauss} has only a slight influence on the log probability density function $\log f(x_{act})$.

    Although there are differences in the Gaussian policy distributions under different sampling states, the log probability distributions of the states sampled from the replay buffer in the off-policy algorithm and the actions sampled according to the current policy show a high degree of similarity. In order to maintain stability when optimizing the policy using historical action data, it is necessary to make the log policy distribution of historical actions approximately aligned with the log probability distribution of the currently sampled action.
    Therefore, we need to convert $\log \pi_\theta \left( a_t^i \mid s_t \right)$ into a specific distribution, while ensuring that this conversion preserves the relative order of the data. The Box-Cox transformation converts logarithmic probability data to an approximate normal distribution and does not change the relative size order. 
    We apply the Box-Cox transformation to the log probability $\log \! {\pi_\theta} \! \left(\! {a}^i_t | s_t \! \right)$ of the historical behavior $(s_t, a_t)$ and derive Eq. \ref{J_AP_Phi} that can use historical action data.
\end{proof}
\section{Implementation}
\begin{algorithm*}[t]
\caption{Historical Interaction-Enhanced Shapley Policy Gradient Algorithm}
\label{alg:HIS}
\textbf{Input}: sample times $M$, log adjustment factor $\beta$, batch size $B$, temperature $\alpha$, episodes $K$, steps per episode $T$, mini-epochs $e$, Polyak coefficient $\tau$, number of: agents $n$;\\
\textbf{Initialize}: replay buffer $\mathcal{D}$, the policy networks: $\left\{\theta^i\right\}_{i \in \mathcal{N}}$ and critic networks: $\psi_1$ and $\psi_2$, Set target parameters equal to main parameters $\psi_{\text {targ, } 1} \leftarrow \psi_1, \psi_{\text {targ, } 2} \leftarrow \psi_2$; \\
\begin{algorithmic} [1] 
\FOR{$k=1$ to $K-1$}
\STATE Observe state ${o}^i_t$ and select action ${a}^i_t \sim \pi_\theta^i(\cdot | {o}^i_t)$; \\
\STATE Execute ${a}^i_t$ in the environment;\\
\STATE Observe next state ${o}_{t+1}$, reward ${r}_t$; \\
\STATE Push transitions $\left\{\left({o}_t^i, {a}_t^i, {o}_{t+1}^i, {r}_t\right), \forall i \in \mathcal{N}, t \in T\right\}$ into $\mathcal{D}$; \\
\STATE Sample a random minibatch of $B$ transitions from $\mathcal{D}$; \\
\STATE Compute the critic targets: \\
\STATE
\[
            y_t=r+\gamma\left(\min _{i=1,2} Q_{\psi_{\text {targ }, i}}\left({s}_{t+1}, a_{t+1}\right)-\alpha \sum_{i=1}^{n}\log \pi^i_\theta\left({a}^i_{t+1} | {o}^i_{t+1}\right)\right), \quad a_{t+1} \sim  \pi_{\theta}\left(\cdot | {s}_{t+1}\right);
             \]
\STATE Update Q-functions by one step of gradient descent using \\
\STATE
    \[
        \psi_i=\arg \min _{\psi_i} \frac{1}{B} \sum_t\left(y_t-Q_{\psi_i}\left({s}_t, a_t\right)\right)^2 \quad \text { for } i=1,2;
        \]
\STATE Draw a permutation of agents $i_{1: n}$ at random; \\
    \FOR{agent $i_m=i_1$ to $i_n$}
    \STATE Sample $M$ ordered coalitions by $\mathcal{C} \sim Pr(\mathcal{C}|{\mathcal{N}}\backslash\{i\})$\\
        \FOR{each sampled coalition $\mathcal{C}_m$}
            \STATE Order each $a_t$ by $\mathcal{C}_m$ and mask the irrelevant agents' actions, storing it to $a^{m}_{t}$
        \ENDFOR
        \STATE Get Shapley Q-value by \\
        \[
            Q^{\Phi_i}(s_t, a_t^i) = \frac{1}{M} \sum_{m=1}^M \hat{\Phi}_{i} (s_t, a_t^m);
        \]
        \STATE Update agent $i_m$ by solving \\
        \[
            \begin{aligned}
            \theta^{i_m}_{\text{new}}=\arg \max _{\hat{\theta}^{i_m}} \frac{1}{B} \sum_t \Biggl(&\min _{i=1,2}Q_{\psi_i}\left({s}_t, a_{\boldsymbol{\theta}^{i_{1: m-1}}_{\text{new}}}^{i_{1: m-1}}\left(\mathbf{o}_t^{i_{1: m-1}}\right), {a}_{\hat{\theta}^{i_m}}^{i_m}\left({o}_t^{i_m}\right), a_{\boldsymbol{\theta}^{i_{m+1: n}}_{\text{old}}}^{i_{m+1: n}}\left(\mathbf{o}_t^{i_{m+1: n}}\right)\right)  \\
            & - \alpha\log\pi_{\hat{\theta}^{i_m}}^{i_m}\left({a}_{\hat{\theta}^{i_m}}^{i_m} | {o}_t^{i_m}\right)+{\cal BC} \! \Big( \! \log {\pi _\theta } \! \left({a}^{i_m}_t|s_t \right) \! \Big)   Q^{\Phi_{i_m}}\left( s_t, {a}_t^{i_m} \right)\Biggr)
            \end{aligned}
            \]
        \STATE where ${a}^i_\theta({o}^i_t)$ is a sample from $\pi^i_\theta(\cdot | {o}^i_t)$ which is differentiable wrt $\theta$ via the reparametrization trick; \\
        \STATE with $e$ mini-epochs of policy gradient ascent;
    \ENDFOR
    \STATE Update the target critic network smoothly
    \[
        \psi_{\text {targ }, i} \leftarrow \rho \psi_{\text {targ }, i}+(1-\rho) \psi_i \quad \text { for } i=1,2;
        \]
\ENDFOR
\STATE Discard $\psi$ . Deploy $\left\{\theta^i\right\}_{i \in \mathcal{N}}$ in execution;
\end{algorithmic}
\end{algorithm*}
The policy modeling of HIS follows the same structure as that of SAC, utilizing an unbounded Gaussian distribution with mean $u$ and standard deviation $\sigma$. Let the sampled action be denoted as $x_{act}$, and the Gaussian random variable corresponding to the policy $\pi$ is $X_{ACT}$, and $x_{act} \sim X_{ACT}$. Consequently, the distribution of the random variable $X_{ACT}$ can be expressed as:
\begin{equation} \label{eq_gauss2}
        \begin{aligned}
        f\left( x_{act} \right) = \frac{1}{\sqrt{2\pi} \sigma } {e^{-\frac{\left({x_{act}-\mu}\right)^2} {2{\sigma^2}}}}, \quad x_{act} \sim X_{act}.
        \end{aligned}
\end{equation}
Taking the logarithm of Eq. \ref{eq_gauss2} gives the probability distribution:
\begin{equation} \label{eq-loggauss2}
    \begin{aligned}
    \log f\left( x_{act} \right) = - \frac{\left(x_{act} - \mu \right)^2}{2{\sigma ^2}} - \log \sigma - \log \left( {\sqrt {2\pi } } \right).
    \end{aligned}
\end{equation}
Since the $\rm{tanh(\cdot )}$ will be applied to the actions sampled from the Gaussian distribution when outputting the actions, when HIS uses the historical logarithmic policy, it is necessary to first convert the current actions from the replay buffer back to the Gaussian distribution:
\begin{equation} \label{eq20}
    \begin{aligned}
    {a}^E_t = \frac{1}{2}\log \frac{1+{a}_t} {1-{a}_t}, \quad {a}_t \sim {\cal{D}}.
    \end{aligned}
\end{equation}
Next, we use the same method as SAC to calculate the likelihoods of historical actions:
\begin{equation} \label{eq19}
    \begin{aligned}
    &{\rm{log}} \pi_\theta\! \left(\!{a}_{t} |s_t\!\right)\! \\
    =& {\rm{log}}  f \left( \!{a}^E_t | s_t\! \right) \!\!- \!\! \!\sum\limits_{j = 1}^D \! {\rm{log}} \!\left(\!1\!-\!{\rm{tanh^2}} \!\left({a}^E_{t,j} \right) \right)\\
    =&{\rm{log}}  f \left( \!{a}^E_t | s_t\! \right) \!\!- \!\! \!\sum\limits_{j = 1}^D \!{\rm{log}} \!\left( {\rm{sech}}^2 \!\left({a}^E_{t,j} \right) \right)\\
    =&{\rm{log}}  f \left( \!{a}^E_t | s_t\! \right) \!\!- \!\! \!\sum\limits_{j = 1}^D \! 2{\rm{log}} \!\left( \frac{2}{e^{{a}^E_{t,j}} + e^{-{a}^E_{t,j}}} \right) \\
    =&{\rm{log}}  f \left( \!{a}^E_t | {s}_t\! \right) \!\!- \!\! \!\sum\limits_{j = 1}^D \!
    2\left({\rm{log}}2 - {a}^E_{t,j} - {\rm{Softplus}} \left(-2{a}^E_{t,j}\right) \right).
    \end{aligned} 
\end{equation}
where $\rm{Softplus(\cdot)}$ refers to an activation function often used in neural networks;
$a_t^E$ is a D-dimensional action. $a_{t,j}^E$ is the j-th element of $a_t^E$. In order to prevent $\log f(a_t^E|s_t)$ from causing training instability, we limit the size of $\log f(a_t^E|s_t)$ during this process. We define the minimum threshold of $\log f(a_t^E|s_t)$ as:
\begin{equation}
    T_{limit}=max(\log f(a_t^E|s_t))-\beta.
\end{equation}
where $\beta$ is an adjustment factor introduced by the algorithm to adjust the threshold to adapt to changes in the environment. When $\log f(a_t^E|s_t)$ is less than $T_{limit}$, we impose the following restriction:
\begin{equation} \label{5_add_1}
    \begin{aligned}
    \log f \left( {a}^E_t| s_t \right)  \! = \! T_{limit}  \! +  \! e^{ \log f \left( {a}^E_t| s_t \right) - \! P_{limit} }-1.
    \end{aligned}
\end{equation}

After computing the logarithmic policy of historical behaviors, the data is transformed using the Box-Cox transformation to enhance its normality and symmetry, thereby promoting the stability of policy training.
It is worth noting that in order to make the transformed data more consistent with the original data distribution, we will add $x_{min}$ in the BOX-COX transformation to the transformed result.
The pseudocode of HIS is shown in Algorithm \ref{alg:HIS}.

\section{Experimental Details}
\subsection{Additional Experiments}
\subsubsection{Bi-DexHands}
Bi-DexHands \cite{chen2022towards} consists of a series of two-handed robotic arm manipulation tasks that demand high control accuracy and collaborative performance from the algorithm. To assess the effectiveness of the HIS algorithm, we compared it with several baseline algorithms, including HASAC, HATRPO, HAPPO, MAPPO, and PPO \cite{schulman2017proximal}, across nine different tasks.

As shown in Figure \ref{fig:DexHands}, the HIS algorithm consistently outperforms the baselines in terms of control accuracy and overall performance. Notably, it excels in complex, highly demanding tasks, such as the Catch Abreast, Two Catch Underarm, and Lift Underarm tasks. These tasks are particularly challenging due to their intricate coordination requirements and precision control, yet HIS maintains competitive performance in these environments.

The superior performance of HIS can be attributed to its hybrid credit assignment mechanism, which differentiates agent contributions through historical interaction information. Unlike traditional shared reward schemes, where agents receive identical feedback, the HIS algorithm enables each agent to learn behaviors that maximize its individual interests while contributing effectively to the global objective. This differentiation of contributions allows HIS to perform more effectively in tasks requiring fine-grained control and coordination, providing a distinct advantage over the baseline algorithms.

\begin{table}[!htb]
\centering
\newcolumntype{C}{>{\centering\arraybackslash}X}
\begin{tabularx}{0.5\textwidth}{CC} 
\hline
Hyperparameters & Value \\ \hline
proper time limits       & True           \\
warmup steps             & $10^4$         \\
activation               & ReLU           \\
final activation         & Tanh           \\
buffer size              & $10^6$         \\
polyak                   & 0.005          \\
hidden sizes             & [256, 256]     \\
update per train         & 1              \\
train interval           & 50             \\
target entropy           & $-\text{dim}(\mathcal{A}^i)$ \\
policy noise             & 0.2            \\
noise clip               & 0.5            \\
policy update frequency  & 2              \\
linear lr decay          & False          \\ \hline
\end{tabularx}
\caption{Common hyperparameters used for off-policy algorithms HIS across all environments.}
\label{tab:tab1}
\end{table}


\begin{table}[!htb]
\centering
\newcolumntype{C}{>{\centering\arraybackslash}X}
\begin{tabularx}{0.5\textwidth}{CC} 
\hline
Hyperparameters & Value \\ \hline
rollout threads & 20           \\
gamma  & 0.95        \\
batch size & 1000           \\
actor lr        & 5e - 4           \\
critic lr & 5e - 4         \\
n step &   20          \\
use huber loss	& False     \\
use valuenorm &	False              \\
sample times & 2           \\
log adjustment factor  & 10        \\
exploration noise  & 0.1            \\ \hline
\end{tabularx}
\caption{Common hyperparameters used for HIS in the Bi-DexHands domain.}
\label{tab:bi1}
\end{table}

\begin{table}[!htb]
\centering
\begin{tabular}{c|cccc}
\hline
scenarios & auto alpha & alpha  & alpha lr        \\ \hline
Catch Abreast       & True      & / & 3e - 4 \\
Two Catch Underarm    & False      & 5e - 5 & / \\ 
Hand Lift Underarm       & True      & / & 3e - 4 \\
Hand Over       & True      & / & 3e - 4 \\ 
Catch Over2Underarm       & True      & / & 3e - 4 \\ 
Hand Pen       & True      & / & 3e - 4 \\  
Door Close Inward       & True      & / & 3e - 4 \\ 
Door Open Inward       & True      & / & 3e - 4 \\ 
Door Open Outward       & True      & / & 3e - 4 \\ \hline
\end{tabular}
\caption{Different hyperparameters used for HIS in the Bi-DexHands domain.}
\label{tab:bi2}
\end{table}

\subsubsection{Multi-Agent MuJoCo (MAMuJoCo)}
MAMuJoCo is a widely recognized benchmark environment for heterogeneous agent collaboration, encompassing a diverse set of robot motion tasks \cite{de2020deep}. In this setup, each joint of the robot is treated as an independent agent, and the algorithm must manage the coordination and control of multiple interdependent joints to enable the robot to complete complex motion tasks. To evaluate the performance of the HIS algorithm, we conducted a comparison against baseline algorithms across nine tasks distributed across four distinct scenarios.

To highlight the advantages of the hybrid credit assignment mechanism used by HIS, we specifically compared it to the local reward mechanism in strongly coupled environments. We incorporated the FACMAC \cite{peng2021facmac}, which employs an implicit decomposition scheme, as a baseline. As the degree of coupling between the robot joints increases, estimating the contribution of global rewards becomes more challenging. 
However, the FACMAC implementation does not cover all the tasks we use, and we compare the results of six of them with the same tasks as ours.
As can be seen in Figure 5, in tasks such as HalfCheetah-v2-6x1 and manyagent\_swimmer-10x2, FACMAC, due to its reliance on local decomposition schemes, has difficulty accurately estimating the contributions of individual agents. As a result, it is unable to learn effective policies for these complex and interdependent tasks.

On the other hand, while algorithms such as HASAC, HATRPO, HAPPO, and MAPPO, which rely on the shared reward scheme, perform reasonably well, they still face significant limitations in credit assignment. These methods tend to blur the individual contributions of agents to the global reward, which results in less efficient learning and suboptimal policy development. The HIS algorithm, which combines both shared and local reward schemes, provides a more nuanced approach to credit assignment.

Experimental results show that HIS exhibits strong robustness and stability in both simple and complex scenarios. It consistently achieves state-of-the-art performance, outperforming baseline algorithms in all nine tasks. The hybrid credit assignment mechanism allows HIS to effectively handle the complexities of strongly coupled tasks by more accurately differentiating individual contributions, leading to improved learning efficiency and more effective strategy development.
\begin{table}[!htb]
\centering
\newcolumntype{C}{>{\centering\arraybackslash}X}
\begin{tabularx}{0.5\textwidth}{CC} 
\hline
Hyperparameters & Value \\ \hline
rollout threads & 20           \\
batch\_size     & 1000        \\
critic lr       & 5e - 4           \\
gamma           & 0.99          \\
actor lr        & 5e - 4        \\
n\_step         & 20 \\
auto alpha      & True \\
alpha lr        & 3e - 4 \\ 
use huber loss	& False \\
use valuenorm &	False         \\
sample times & 3          \\
use valuenorm &	False \\
log adjustment factor  & 10        \\
 \hline
\end{tabularx}
\caption{Hyperparameters used for HIS in the MPE domain.}
\label{tab:mpe}
\end{table}

\subsection{Hyper-Parameter Settings For Experiments}
\subsubsection{Common Hyper-Parameter}
The Boolean variable auto\_alpha is used to indicate whether to perform automatic tuning. When it is set to True, the target entropy and alpha\_lr will take effect. When it is set to False, the temperature coefficient $\alpha$ is set to an arbitrary value.
The HIS algorithm is developed based on the HARL \cite{zhong2024heterogeneous} framework. The implementation details of HASAC in our baseline are as described in the HASAC literature \cite{liu2023maximum}, while HAPPO, HATRPO, and MAPPO are as described in the HARL literature. Next, we provide the hyperparameters commonly used by HIS in all environments in Table \ref{tab:tab1}. 

\begin{table*}[t]
\centering
\begin{tabular}{c|cccccc} \hline
scenarios               & actor lr & auto alpha & alpha            & alpha lr          & batch size & n\_step\\ \hline
Ant 2x4                 & 3e - 4   & False       & 0.2              & /    & 2200      & 5\\
Ant 4x2                 & 3e - 4   & False       & 0.2              & /   & 1000  & 5     \\
Ant 8x1                 & 3e - 4   & False       & 0.2              & /   & 2200  & 5     \\Walker 2x3              & 5e - 4   & False       & 0.2              & /  & 1000       & 20\\
Walker 6x1              & 5e - 4   & False       & 0.2              & /   & 1000  & 20     \\
HalfCheetah  2x3        & 1e - 3   & True        & / & 3e - 4              & 1000       & 10\\
HalfCheetah 3x2         & 1e - 3   & True        & / & 3e - 4             & 1000       & 10\\
HalfCheetah 6x1         & 1e - 3   & True        & / & 3e - 4            & 1000 & 10      \\
manyagent\_swimmer 10x2 & 1e - 3   & True        & / & 3e - 4             & 1000       & 10\\ \hline
\end{tabular}
\caption{Different hyperparameters used for HIS in the MAMuJoCo domain.}
\label{tab2}
\end{table*}

\subsubsection{Bi-DexHands}
In this setting, we used the implementation and tuned hyperparameters of the baseline algorithms PPO, MAPPO, HATRPO, HAPPO described in the HARL paper \cite{zhong2024heterogeneous}. The implementation and hyperparameters of HASAC refer to its original paper \cite{liu2023maximum}. Table \ref{tab:bi1} and Table \ref{tab:bi2} give the hyperparameters of HIS in different experiments in Bi-DexHands.

\subsubsection{Multi-Agent Particle Environment (MPE)}
Table \ref{tab:mpe} shows the hyperparameters used by HIS in the MPE task. For the HASAC algorithm, we adopted the implementation and tuning hyperparameters in HASAC \cite{liu2023maximum}. The implementation details and hyperparameters of other baseline algorithms refer to the paper \cite{zhong2024heterogeneous}.

\subsubsection{Multi-Agent MuJoCo (MAMuJoCo)}
The experimental details and tuned hyperparameters of FACMAC are consistent with the original paper \cite{peng2021facmac}. For the HASAC algorithm, we adopted the implementation details and tuned hyperparameters in the HASAC paper \cite{liu2023maximum}. The implementation details and tunable hyperparameters of other baseline algorithms refer to the description in HARL \cite{zhong2024heterogeneous}. Next, the hyperparameters of HIS for multiple tasks in MAMujoco are shown in Table \ref{tab2} and Table \ref{tab3}.

\begin{table}[!htb]
\centering
\newcolumntype{C}{>{\centering\arraybackslash}X}
\begin{tabularx}{0.5\textwidth}{CC} 
\hline
Hyperparameters & Value \\ \hline
rollout threads & 10           \\
train interval  & 50        \\
critic lr       & 1e - 3           \\
gamma           & 0.99          \\
use huber loss	& False        \\
sample times & 1          \\
log adjustment factor  & 10        \\
use valuenorm &	False         \\ \hline
\end{tabularx}
\caption{Common hyperparameters used for HIS in the MAMuJoCo domain.}
\label{tab3}
\end{table}

\section{Hyperparameter Robustness Experiments}
To investigate the impact of key hyperparameters in the HIS algorithm on its stability, we conducted statistical analysis on the sampling times $M$ and the log adjustment factor $\beta$ of the main hyperparameters in HIS in three benchmark tasks. We explored the impact of different values of these hyperparameters on the algorithm effect.
Figure \ref{fig:hyper-sample} and Figure \ref{fig:hyper-log} reflects the statistical analysis results for different values of these two parameters, where the x-axis represents the different values of the hyperparameters and the y-axis represents the average return. In addition, we use the average return of the last 50\% of the total training steps to obtain more reliable statistical results that reflect the steady-state performance of the algorithm.
\begin{figure}[htbp]
    \centering
    \subfloat{\includegraphics[width=0.33\linewidth]{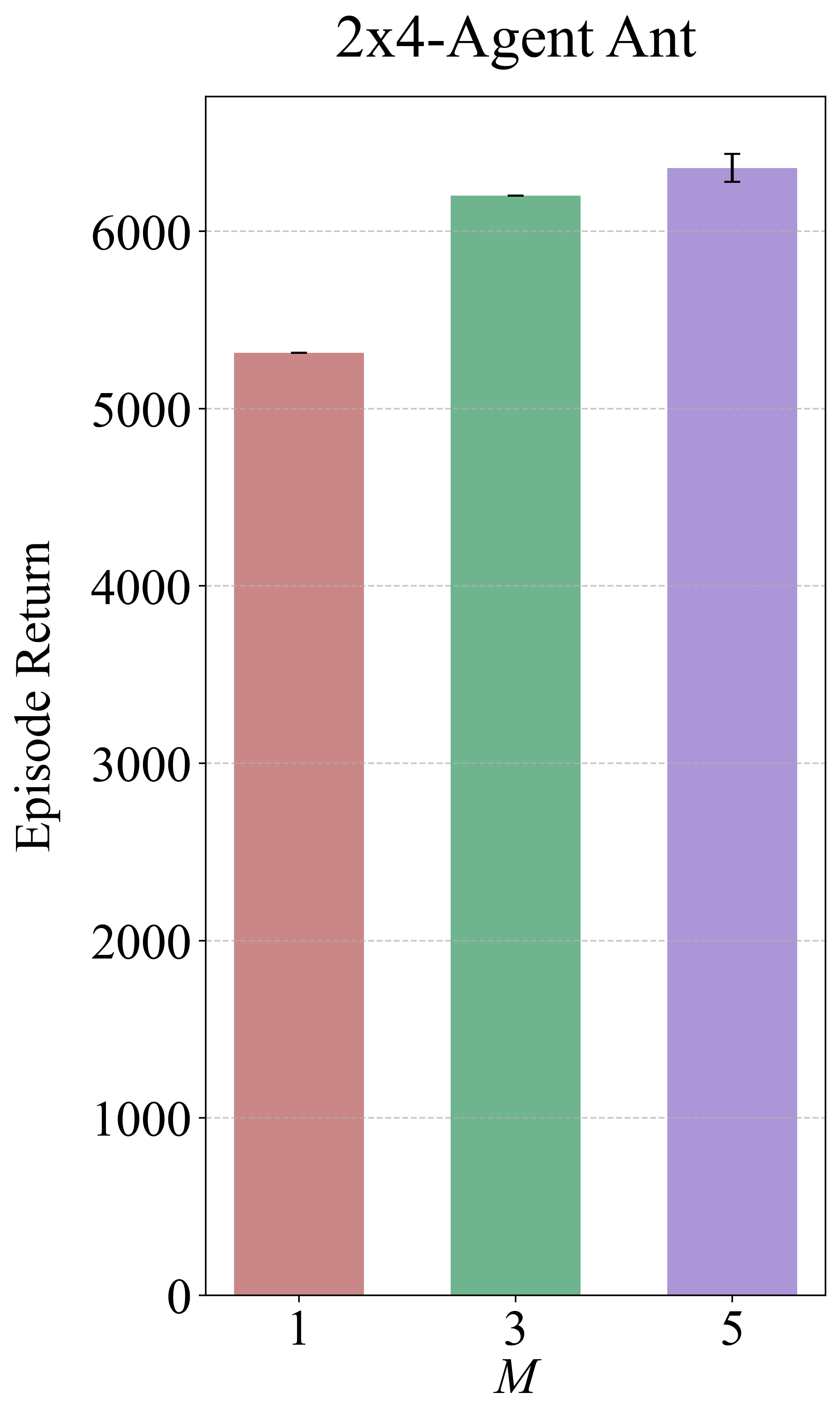}}
    \subfloat{\includegraphics[width=0.33\linewidth]{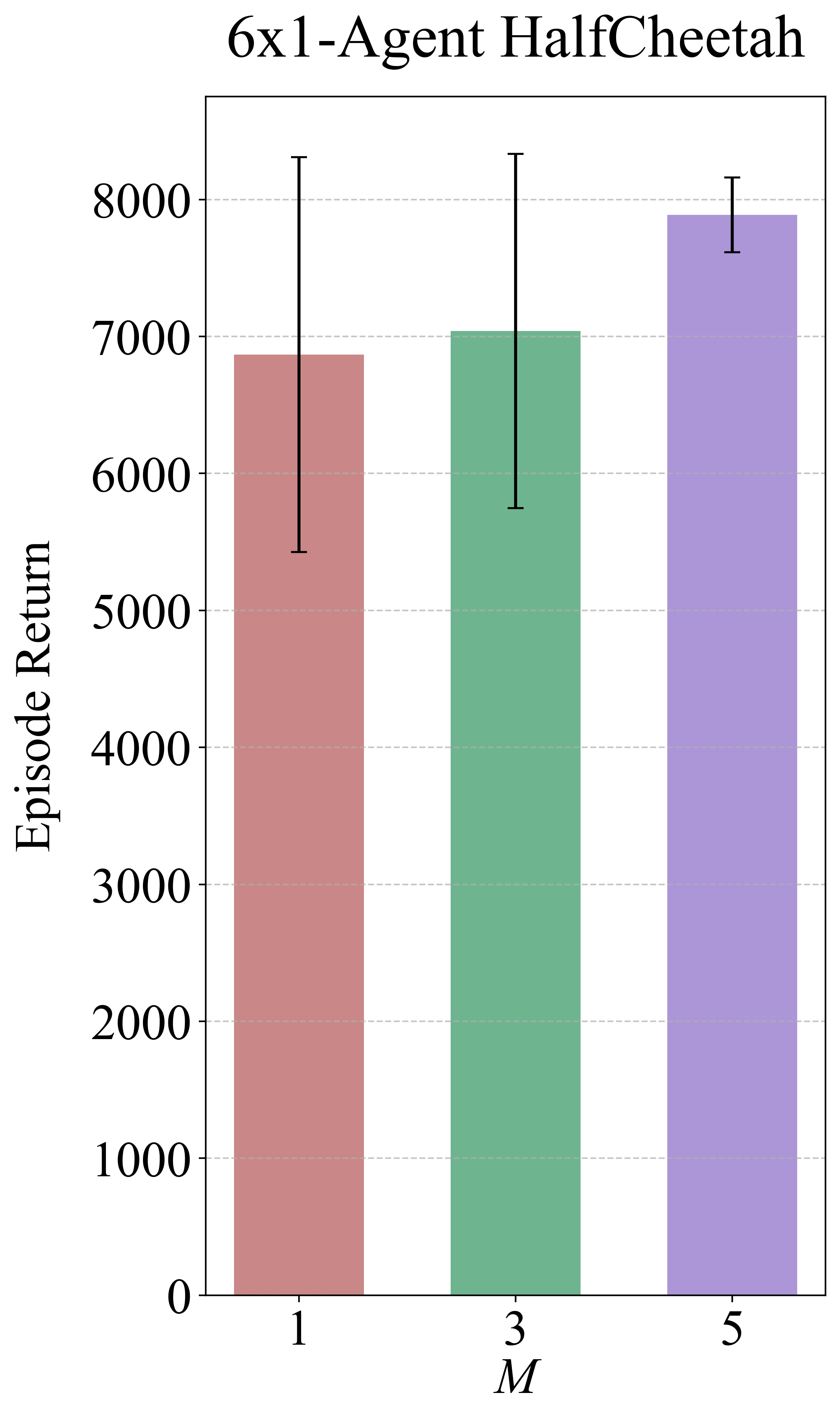}}
    \subfloat{\includegraphics[width=0.33\linewidth]{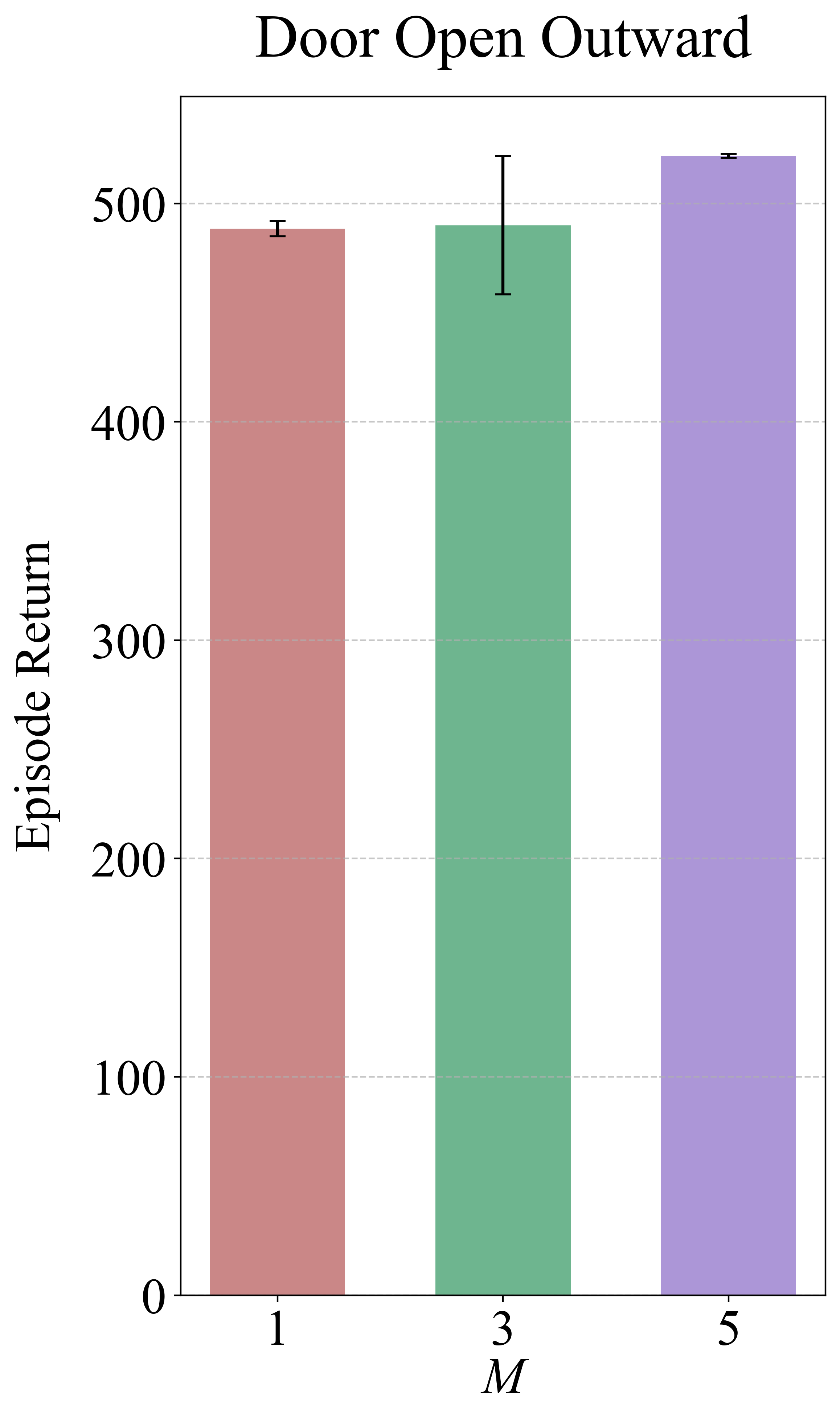}}\\
    \caption{Statistical analysis of $M$ in various tasks.}
    \label{fig:hyper-sample}
\end{figure}

As observed in Figure \ref{fig:hyper-sample}, as the sampling times $M$ increases, the estimation of marginal contribution becomes more accurate, leading to a corresponding improvement in the algorithm's performance.
However, a higher value of $M$ also results in increased computational cost. 
Despite this trade-off, the algorithm consistently maintains full stability across different values of $M$, indicating that the HIS algorithm is robust to variations in this hyperparameter.

\begin{figure}[htbp]
    \centering
    \subfloat{\includegraphics[width=0.33\linewidth]{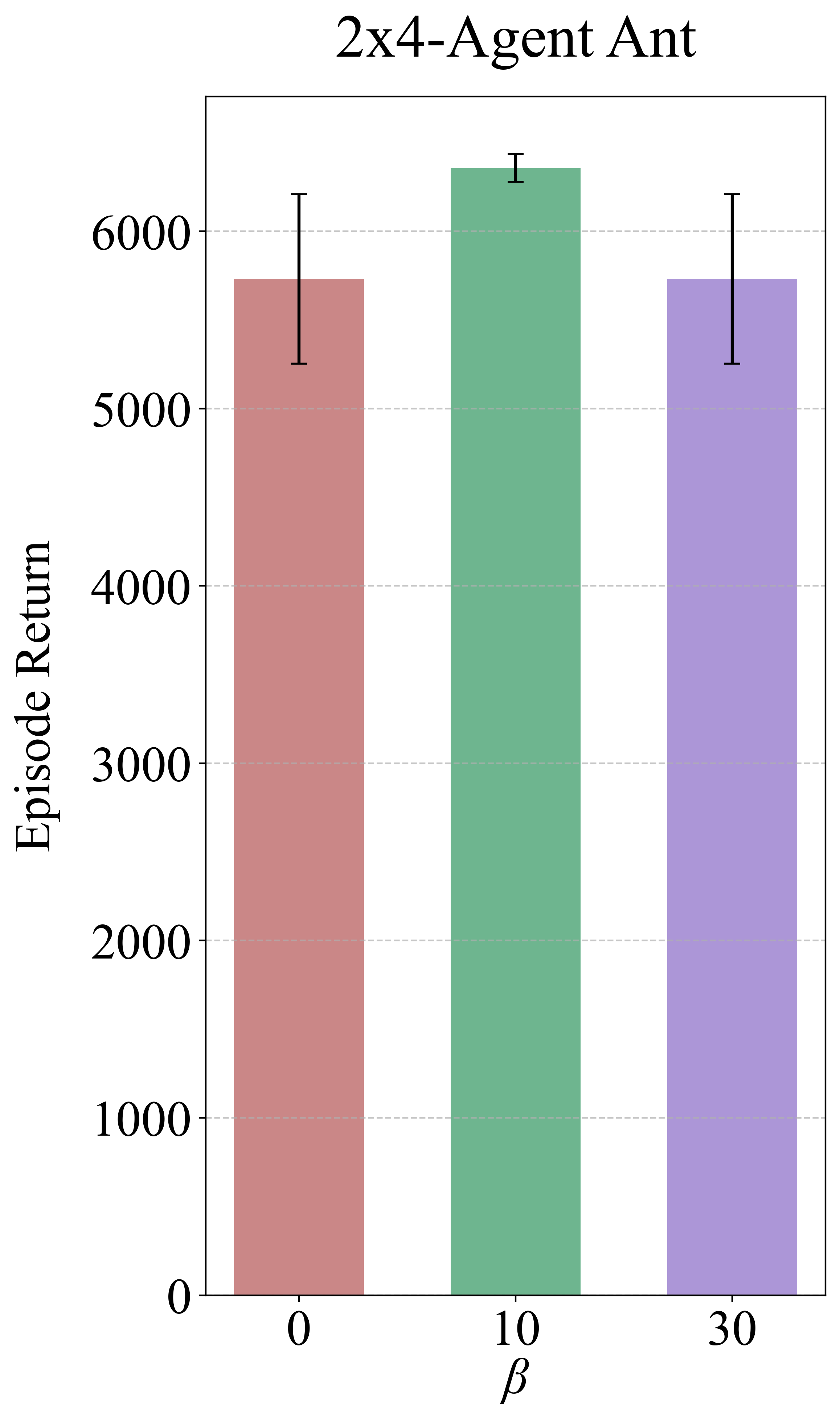}}
    \subfloat{\includegraphics[width=0.33\linewidth]{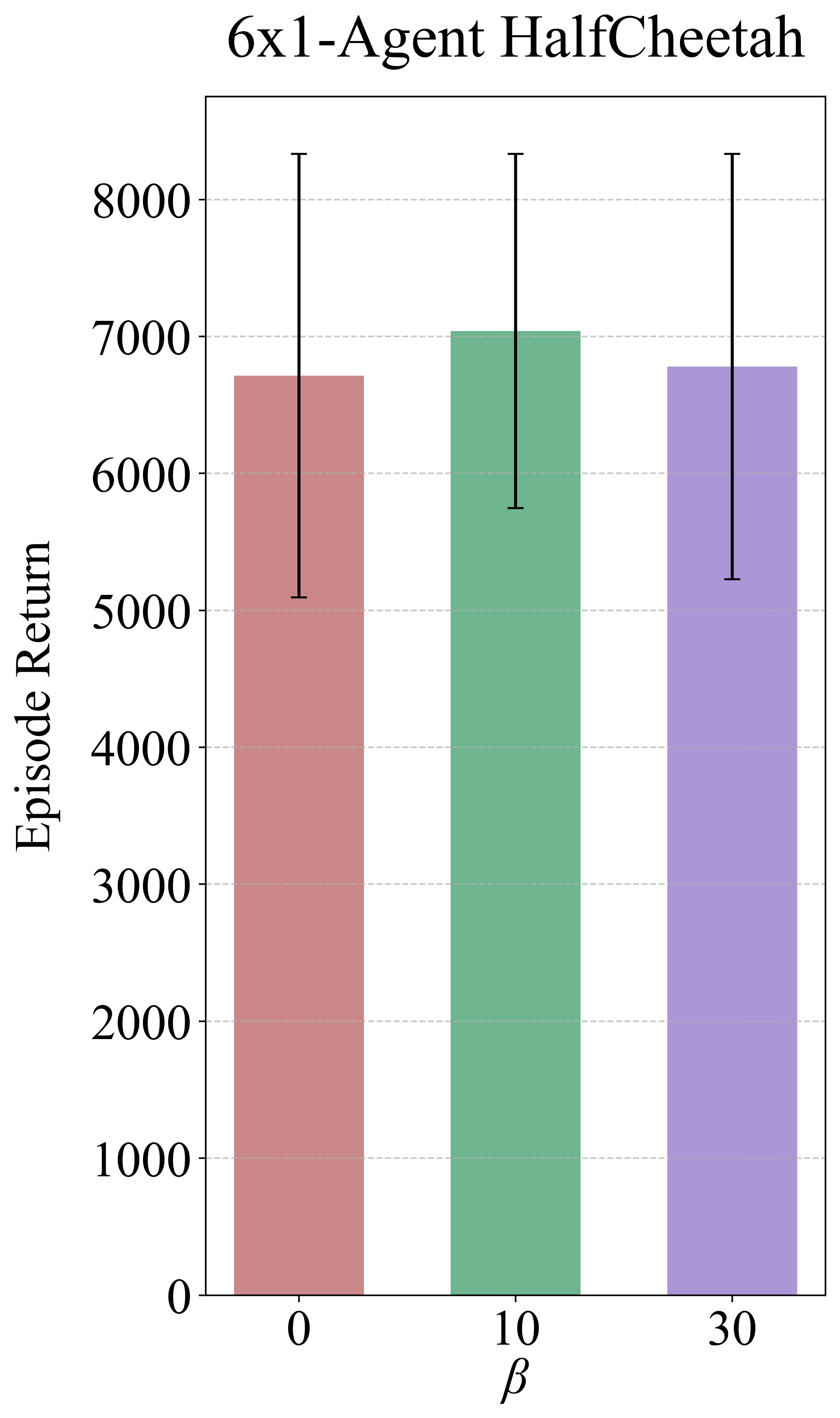}}
    \subfloat{\includegraphics[width=0.33\linewidth]{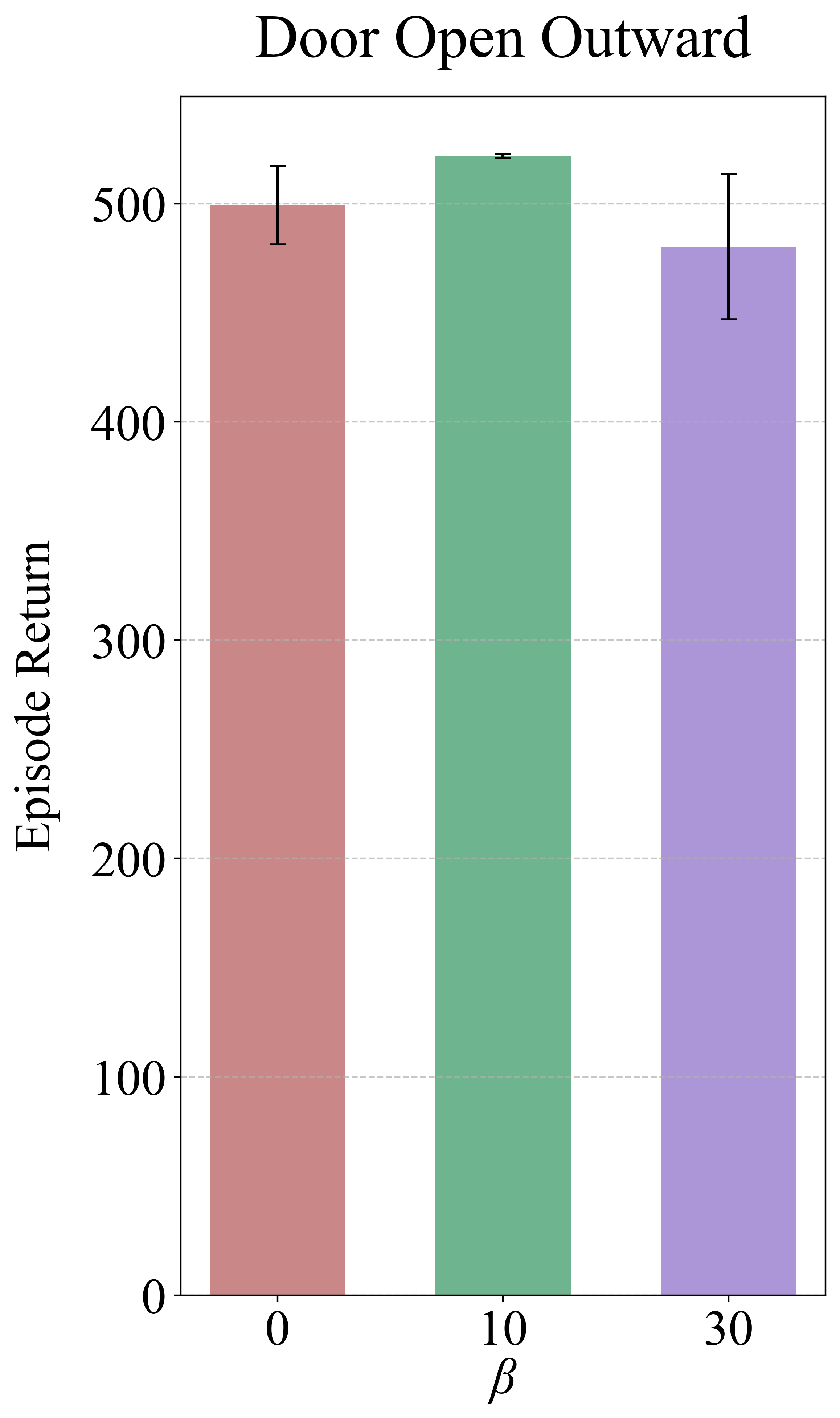}}\\
    \caption{Statistical analysis of $\beta$ in various tasks.}
    \label{fig:hyper-log}
\end{figure}
As shown in Figure \ref{fig:hyper-log}, while the performance of HIS exhibits some fluctuation with changes in $\beta$, when $\beta$ is within a specific range, the performance of the algorithm will not fluctuate greatly with the change of hyperparameters.

These results show that HIS is robust to changes in its key hyperparameters, and it maintains consistent performance under different hyperparameter settings.
This suggests that even if the precise configuration of parameters deviates from the optimal value, the algorithm's underlying mechanisms—particularly the hybrid credit assignment and historical interaction enhancement—can provide reliable guidance for policy optimization.
\section{Additional Ablation Experiments of HIS}
HIS uses the BOX-COX transformation to align data distribution and uses historical actions to estimate Shapley Q-value. This section uses ablation experiments to answer the following two questions:
\begin{figure*}[t] 
    \centering
    \subfloat[{2x4-Agent Ant-v2}]{%
        \includegraphics[width=0.33\linewidth]{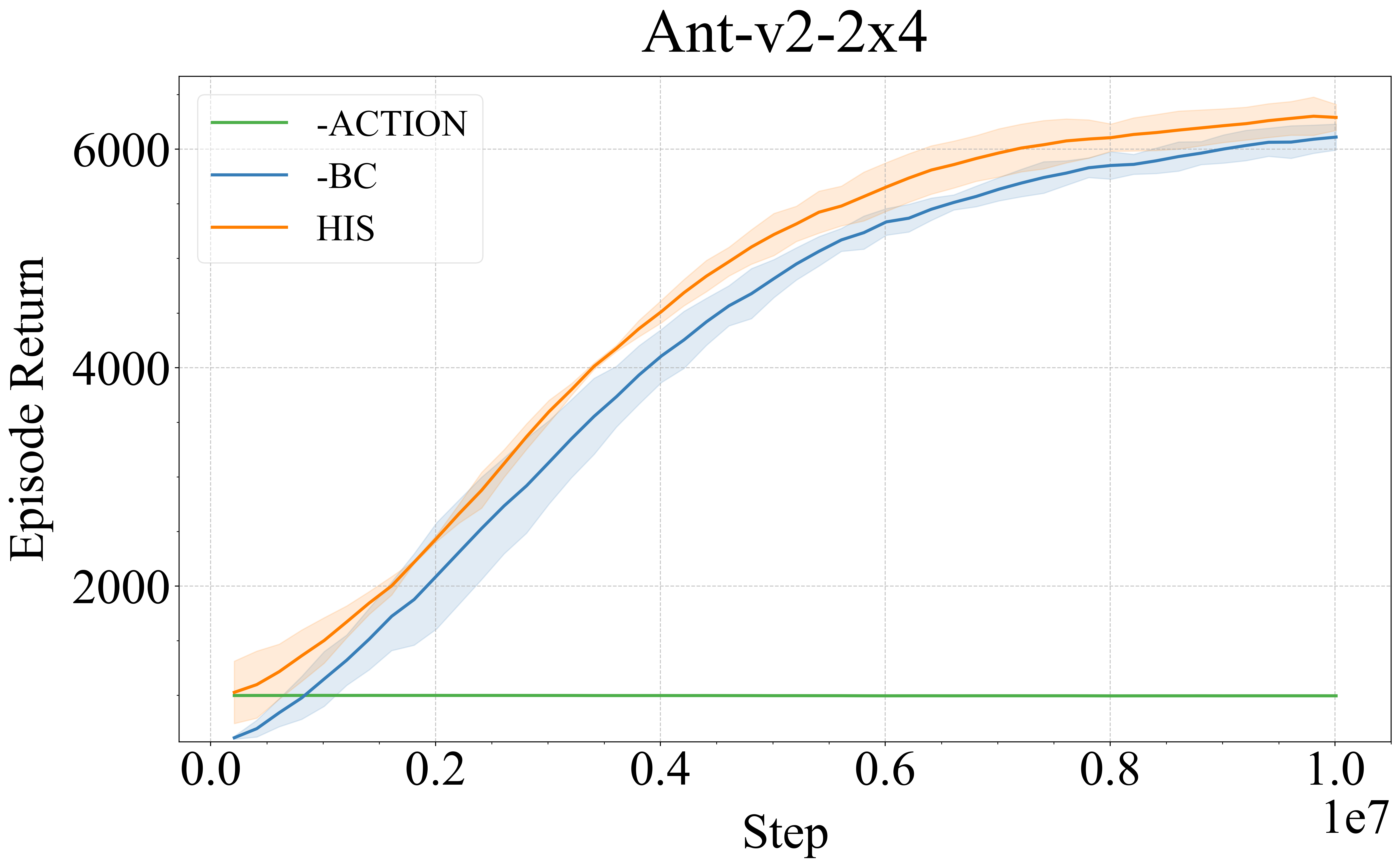}%
    }%
    \hfill%
    \subfloat[{8x1-Agent Ant-v2}]{%
        \includegraphics[width=0.33\linewidth]{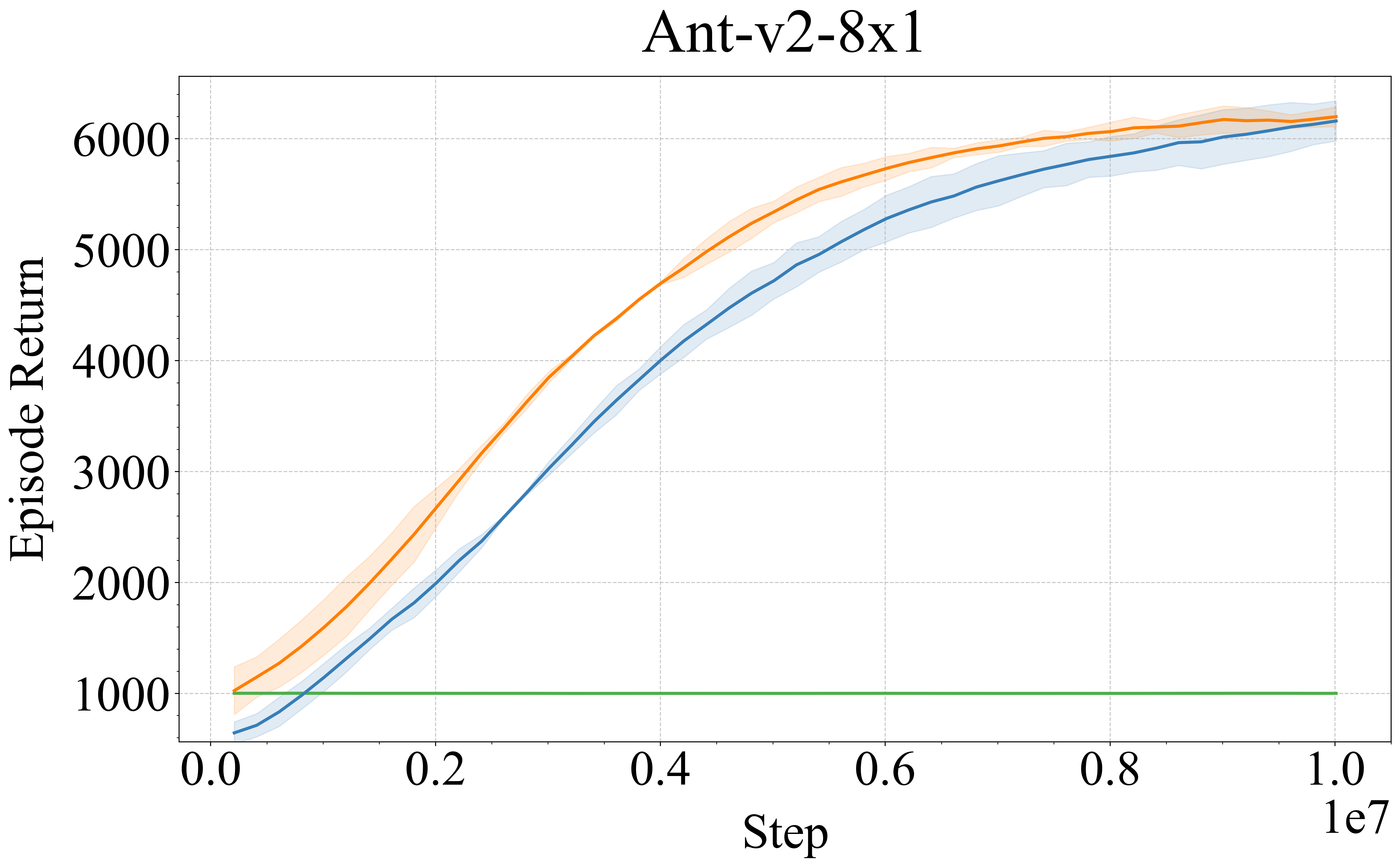}%
    }%
    \hfill%
    \subfloat[{Door Open Outward}]{%
        \includegraphics[width=0.33\linewidth]{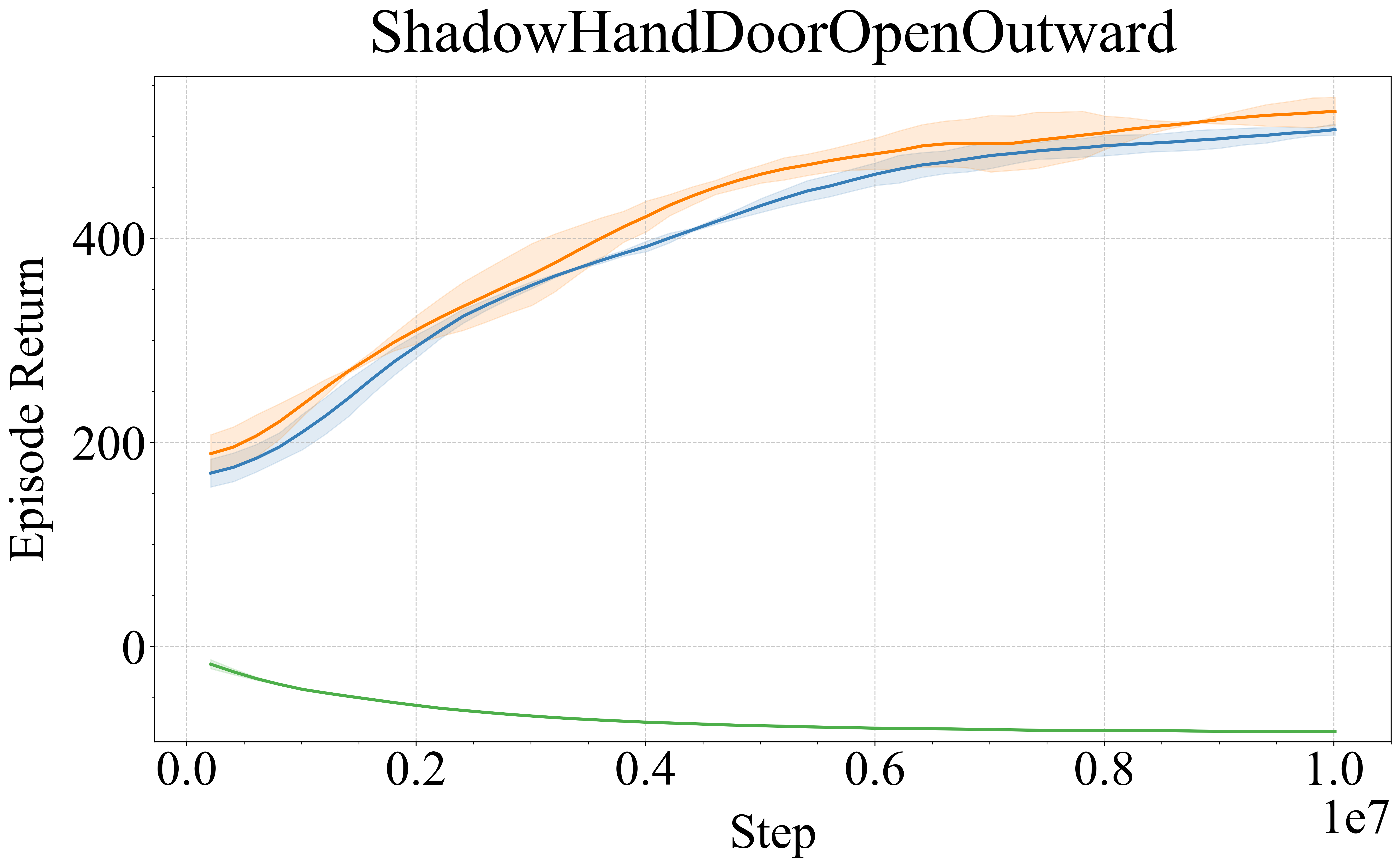}%
    }
    \caption{Ablation study of different HIS components on Bi-DexHands and MAMuJoCo tasks.}
    \label{fig:add-ablation}
\end{figure*}

\begin{enumerate}[(1)]
    \item ``Can the data distribution alignment function of BOX-COX effectively improve the overall performance of the algorithm?"
    \item ``What impact will it have on the algorithm performance if the actions sampled by the current policy are used to estimate Shapley Q-value?"
\end{enumerate}


To answer these two questions, we compare HIS with the following two HIS variants on the Bi-DexHands and MAMuJoCo benchmark environments:

\textbf{-BC}: When calculating the policy gradient, the Box-Cox transformation is not used to align the data distribution. Specifically, the policy gradient form of this variant is as follows:
\begin{equation}
    \label{AP-J-BC}
    \begin{split}
    &{\nabla_\theta}J_{\pi^{i_m}}(\theta^{i_m}) \\
    &= \mathbb{E}_{s_t \sim \mathcal{D}} \left[ \mathbb{E}_{{a}_t^{i_{1:m-1}} \sim \pi_{\theta_{new}^{i_{1:m-1}}}^{i_{1:m-1}}, {a}_t^{i_m} \sim \pi_{\theta^{i_m}}^{i_m}} \bigg[ \right. \\
    &\quad \nabla_\theta \Big(  Q_{\pi_{\text{old}}:\psi}^{i_{1:m}} \left( s_t, {a}_t^{i_{1:m-1}}, {a}_t^{i_m} \right)  - \alpha \log \pi_{\theta^{i_m}}^{i_m} \left( {a}_t^{i_m} \mid s_t \right) \Big) \bigg] \Bigg]  \\
    &\quad + \mathop \mathbb{E} \limits_{ \left({s}_t, {a}_t\right) \sim {\cal{D}}} \left[ \nabla_\theta  \Big( \log {\pi _\theta } \left({a}^{i_m}_t|s_t \right) \Big)   Q_{\psi}^{\Phi_{i_m}}\left( s_t, {a}_t^{i_m} \right) \right].
    \end{split}
\end{equation}
\textbf{-ACTION}: When calculating the policy gradient, the data in the historical action will no longer be used to estimate the Shapley Q-value. 
Instead, the actions sampled by the policy in the sequential update scheme will be used to estimate the Shapley Q-value. Specifically, the policy gradient form of this variant is as follows:
\begin{equation}
    \label{AP-J-Action}
    \begin{split}
    &{\nabla_\theta}J_{\pi^{i_m}}(\theta^{i_m}) \\
    &= \mathbb{E}_{s_t \sim \mathcal{D}} \left[ \mathbb{E}_{{a}_t^{i_{1:m-1}} \sim \pi_{\theta_{new}^{i_{1:m-1}}}^{i_{1:m-1}}, {a}_t^{i_m} \sim \pi_{\theta^{i_m}}^{i_m}} \bigg[ \right. \\
    &\quad \nabla_\theta \Big(  Q_{\pi_{\text{old}}:\psi}^{i_{1:m}} \left( s_t, {a}_t^{i_{1:m-1}}, {a}_t^{i_m} \right)  - \alpha \log \pi_{\theta^{i_m}}^{i_m} \left( {a}_t^{i_m} \mid s_t \right) \\
    &\quad + \Big( \log {\pi _\theta } \left({a}^{i_m}_t|s_t \right) \Big) Q_{\psi}^{\Phi_{i_m}}\left( s_t, {a}_t^{i_m} \right) \Big) \bigg] \Bigg]  
    \end{split}
\end{equation}

Figure \ref{fig:add-ablation} shows that -BC's average return across all three tasks is significantly lower than HIS's. For example, in Ant-v2-8x1, -BC's convergence speed and training stability are significantly worse than those of HIS. 
This result highlights the importance of applying the Box-Cox transformation to historical data. The Box-Cox transformation strengthens the normality and symmetry of the data without changing the relative order of magnitude, improving the stability and effectiveness of algorithm training.

Furthermore, experimental results show that -ACTION fails to learn effective policies in these three tasks, resulting in very poor performance. This is likely due to the algorithm's instability caused by using actions sampled by the current policy to estimate Shapley Q-value and perform policy learning. 
The agent is unable to estimate a Shapley Q-value that accurately reflects its contribution based on the currently retrieved actions. Consequently, the algorithm is unable to accurately assign credit to multiple agents, resulting in ineffective learning.



\end{document}